\DeclareRobustCommand\citepos
\def\NAT@nmfmt##1{{\NAT@up##1's}}%
\let\NAT@ctype\z@\NAT@partrue
\renewcommand*{\eqref}[1]{\hyperref[{#1}]{\textup{\tagform@{\ref*{#1}}}}}
\def\thm@space@setup{%
	\thm@preskip=0.2\baselineskip plus 0.05\baselineskip minus 0.05\baselineskip
	\thm@postskip=\thm@preskip
}
\colorlet{redby}{black}
\def \expandafter \normalsize \expandafter{\normalsize \setlength \abovedisplayskip{5pt plus 2pt minus 5pt}}
\def \expandafter \normalsize \expandafter{\normalsize \setlength \abovedisplayshortskip{0pt plus 2pt}}
\def \expandafter \normalsize \expandafter{\normalsize \setlength \belowdisplayskip{5pt plus 2pt minus 5pt}}
\def \expandafter \normalsize \expandafter{\normalsize \setlength \belowdisplayshortskip{5pt plus 2pt minus 3pt}}
\def\Var{\mathop{\hbox{\rm Var}}}
\theoremstyle{plain}
\newtheorem{theorem}{Theorem}
\newtheorem{lemma}{Lemma}
\theoremstyle{definition}
\newtheorem{assumption}{Assumption}
\theoremstyle{example}
\newtheorem{example}{Example}
\newtheorem{rmk}{Remark}
\newenvironment{remark}{\begin{rmk}}{\hfill $\square$ \end{rmk}}
\DeclareMathOperator{\ran}{ran}
\DeclareMathOperator{\cl}{cl}
\DeclareMathOperator{\tr}{tr}
\def\dto{\overset{d}\rightarrow}
\def\pto{\overset{p}\rightarrow}
\def\hto{\hspace{-0.2em}\to\hspace{-0.2em}}
\numberwithin{equation}{section}
\titlespacing{\section}{0pt}{2pt}{0pt}
 \titlespacing{\subsection}{0pt}{0pt}{0pt} 
 \titlespacing{\subsubsection}{0pt}{2pt}{0pt}
\newtheorem{proposition}{Proposition}
\newcommand{\violet}{\textcolor{black}}
\renewcommand{\hat}{\widehat} 
\newcommand{\op}{\text{op}}
\newcommand{\HS}{\text{HS}}
\begin{document}
	\title{Binary response model with many weak instruments\thanks{This paper originated as part of my PhD dissertation at the University of California, Davis, but has since undergone substantial revisions. \violet{I am deeply grateful for the invaluable feedback received from the Co-Editor, three anonymous referees,} as well as from Shu Shen, James G. MacKinnon, A. Colin Cameron, Takuya Ura, Morten \O. Nielsen, Won-Ki Seo, and all seminar participants at Queen's University, the University of California, Davis, National Taiwan University, the University of Sydney, the University of Melbourne, the University of Technology Sydney, Monash University, Tilburg University, the Higher School of Economics (ICEF), Vrije Universiteit Amsterdam, Academia Sinica, Korea Development Institute, the 2019 All California Econometrics Conference, the 2021 Korea's Allied Economic Associations Annual Meeting, and the 4th International Conference on Econometrics and Statistics.}}
	\author{Dakyung Seong\\ \small School of Economics, University of Sydney
		\\ \small \texttt{dakyung.seong@sydney.edu.au} }
	\date{\small{\today}} 
 	\maketitle
	\vspace{-2em}
	\begin{abstract} 
		This paper considers an endogenous binary response model with many weak instruments. \violet{We employ a control function approach and a regularization scheme to obtain better estimation results for the endogenous binary response model in the presence of many weak instruments.} Two consistent and asymptotically normally distributed estimators are provided, each of which is called a regularized conditional maximum likelihood estimator (RCMLE) and a regularized nonlinear least squares estimator (RNLSE). Monte Carlo simulations show that  the proposed estimators outperform the existing ones when there are many weak instruments. We use the proposed estimation method to examine the effect of family income on college completion.
		
		\medskip \noindent \textbf{JEL codes}: C31, C35
		
		\medskip \noindent \textbf{Keywords}: Regularization, Control function, Function-valued instrumental variables, Probit
	\end{abstract}\clearpage

\section{Introduction}\label{sec:intro} 
The conventional  two-stage least squares (TSLS) estimation method has been widely used in empirical studies in economics and other fields of social science, even when the dependent variable is binary  (see, e.g., \citealp*{Miguel2004}; \citealp{Norton2008}; \citealp*{Nunn2014}; \citealp{Bastian2018}). In spite of it not being recommended from a theoretical perspective, the empirical popularity of  TSLS in binary response models is attributed not only to its ease of interpretation, but also to the fact that some essential statistical properties of estimators for {\it{nonlinear}} binary response models have only been studied in limited settings.  
For example, in contrast to the growing body of work on instrumental variable estimators for linear  models with many instruments, little attention has been given to the statistical properties of estimators for binary response models in a similar context. 
 Moreover, only a little is known when there are weak or nearly weak instruments, see e.g., \cite{Magnusson2010} and \cite{Dufour2018}.

\violet{This paper aims to develop a valid estimation and inference method for the endogenous binary response model when the dimension of the instrumental variable (or sometimes called the number of instruments) is very large.} 
 In principle, using a large number of instruments is helpful for improving estimation efficiency. However, in spite of this advantage,  this is not always recommended in finite samples.\label{r1p3} \violet{This is because the use of many instruments can result in a less stable inverse of the covariance (see Examples~\ref{example2} and~\ref{example}) and/or introduce bias (see \citealp{Bekker1994}). In such a case, the usual asymptotic approximation becomes less reliable. Therefore, in practice, a different approach is necessary to effectively utilize the information from a large number of instruments.} For example, one may suggest using  a few instruments selected by a certain criterion, see, e.g., \cite{donald2001choosing}, \violet{\cite*{Belloni}}, and \cite{Caner2014}. However, in the current study, we allow each element of a large number of instruments to be \textsl{nearly weak} (see \citealp{Newey2009a}). This consideration makes the consistency results of   variable selection procedures in the aforementioned papers  no longer valid even in linear models. Therefore, to handle the scenario with \textsl{``{many weak instruments}''}, we adopt a regularization scheme similar to those considered by \cite*{Hausman2011}, \cite{Carrasco2012}, \cite{Hansen2014}, \cite{Carrasco2015}, and \cite{Han2019}. Our regularization strategy is preferred to variable selection procedures for several reasons, which will be detailed in Section~\ref{sec:lasso}.

  Alongside the regularization scheme, we consider estimators similar to the two-stage conditional maximum likelihood estimator (2SCMLE) proposed by \cite{rivers1988limited}, which is one of the most popular estimators for endogenous binary response models (see, e.g., \citealp{Wooldridge2010}).  Specifically, as  detailed in Section~\ref{sec:est}, we use the first-stage residual, obtained using a regularization method, as an additional regressor in the second stage. Then, we define two estimators, 
 each of which is called the regularized conditional maximum likelihood estimator (RCMLE) and the regularized nonlinear least squares estimator (RNLSE). Their asymptotic properties are studied under a parametric assumption similar to that in \cite{rivers1988limited}. \label{r2p4}\violet{While this parametric approach may not be satisfactory from a theoretical standpoint, a large number of citations to \cite{rivers1988limited} suggest a clear preference for such an approach in the literature, at least when compared to semi- and non-parametric approaches. In addition, the parametric approach simplifies the asymptotic analysis of our estimators. This is partly because we do not need to estimate an infinite-dimensional parameter, such as the link function, under the parametric assumption. In this paper, we generalize the existing approach to allow for a possibly infinite-dimensional instrument, which is associated with an infinite-dimensional parameter in the first stage. Thus, a parametric assumption simplifies our analysis by reducing the number of infinite-dimensional parameters.}

\label{r2p1aa}\violet{We consider nearly weak instruments in the framework of \cite{Newey2009a}.} These instruments are commonly observed in empirical studies involving a binary dependent variable. For example,  \cite{Miguel2004}, \cite{Norton2008}, \cite*{Nunn2014}, \cite*{Frijters2009}, and \cite*{Bastian2018} report small values of the first-stage F test statistic that is known to be closely related to the weakness of instruments in the linear model (\citealp*{Staiger1997}; \citealp{Stocka}). Although this statistic is not a perfect measure of instrument strength in binary response models (e.g., \citealp*{frazier2020weak}), such small F test statistics suggest that their instrumental variables might be weak. However, in spite of its practical importance, only a few studies concern the issue of (nearly) weak instruments in endogenous binary response models. \cite{frazier2020weak} study a way to test the weakness of instruments using a distorted J test under a parametric assumption. \cite{Andrews2013} investigate asymptotic properties of the generalized method of moments (GMM) estimator under weak identification and apply their estimation approach to an endogenous probit model. However, even in these articles, the case with many weak instruments has not been  studied.

  This paper is technically different from   previous studies on many weak instruments. Specifically, we allow the instrumental variable to be function-valued; examples of function-valued random variables include  the age-specific fertility rate (\citealp{Florence2015}), the skill-specific share of immigrants (\citealp{seong2021}), and  the continuum of moments of an exogenous variable (\citealp{Carrasco2012}).  Functional instrumental variables have received recent attention in the econometrics literature, but have mostly been  used in linear models (e.g., \citealp{Carrasco2012}; \citealp{Florence2015}; \citealp{Carrasco2015a, Carrasco2015}; \citealp{Benatia2017}; \citealp{seong2021}). However, in contrast to the cases considered therein, our moment condition is nonlinear and not additively separable from the error term. Moreover, to address the endogeneity, we employ the control function approach. The limiting distributions of our estimators turn out to depend on the limiting distribution of the first-stage estimator that is given by an operator acting on a possibly infinite-dimensional space in our setting (see Section~\ref{sec:general}). These differences complicate our asymptotic analysis and make it technically distinguished from those of estimators for linear  models with many weak instruments (e.g., \citealp{Hansen2014}; \citealp{Carrasco2012}; \citealp{Carrasco2015}; \citealp{Benatia2017}) and   GMM estimators associated with an additively separable second stage (e.g., \citealp{Hausman2011}; \citealp{Andrews2013}; \citealp{Han2019}). 

This paper is relevant to practitioners who need to control for endogeneity, using many weak instruments when the outcome is binary. We revisit the work by \cite{Bastian2018} and use our estimators to study the effect of family income in childhood and adolescent years on college completion in young adulthood. Following the authors, we first use three measures of Earned Income Tax Credit (EITC) exposure to address the endogeneity of family income, and then employ interactions of the EITC measures with 47 state-of-residence dummies as additional instruments. \label{r1mp3}\violet{Similar to the bias of TSLS estimates toward the OLS observed in the literature on many weak instruments in linear models, the 2SCMLE proposed by \cite{rivers1988limited} tends to be biased toward the naive probit estimator that does not account for the endogeneity underlying the model. In contrast, our estimates appear to be robust to the inclusion of additional instruments.}


The paper is organized as follows. In Section~\ref{sec:model}, we discuss the model of interest. In Section~\ref{sec:est}, we propose our estimators and present their asymptotic properties. Section~\ref{sec:sim} summarizes Monte Carlo simulation results. The empirical application is given in Section~\ref{sec:emp}. Section~\ref{sec:con} concludes. All proofs of the results given in this paper  are provided in the Online Supplement. We include an additional simulation result and empirical application in the same supplement.

\section{The Model}\label{sec:model}
We consider the following model with independent and identically distributed (iid) data. \begin{equation}
y_{i}   = 1 \{  Y_{2i} ' \beta_0 \geq u_i    \}    \label{model:1eq}\quad\text{  and } \quad Y_{2i}  = \Pi_{n} Z (x_i)  + V_i , 
\end{equation}
where $1\{  A \}$ is the indicator function that takes 1 if $ A$ is true and 0 otherwise. Thus, the outcome variable $y_i$ is binary. The $d_e$-dimensional explanatory variable $Y_{2i}$ consists of endogenous and exogenous variables; if a row of $Y_{2i}$ is exogenous, the corresponding row of $V_i$ is equal to zero. \violet{The instrument $Z_i$ ($=Z(x_i)$) is  a function of $x_i \in \mathbb{R}^{d_x}$ satisfying $\mathbb E[u_i | x_i]  = 0$ and~$\mathbb E \left[V_i | x_i \right] = 0$.} 

The above model is similar to the standard endogenous binary response model considered by e.g., \cite{rivers1988limited}, \cite{Rothe2009}, and \cite{Blundell_Powell2004}, except that it allows for a general class of random variables, such as a function-valued random variable, to be considered as the instrument. 
\violet{The technical details of $Z_i$ for the general case will be discussed  in Section~\ref{sec:inst}.}\label{r1ptmp} Instead, in this section, we provide examples of various types of instruments that can be considered in our setting. First, if $x_i =(x_{1i}, \ldots, x_{d_x i})'$, then $Z_i $ may be given by $ x_i$, a standard $d_x$-dimensional vector of different instruments (see, e.g., Examples~\ref{example2} and~\ref{example}). If $x_i$ is a scalar, we may let $Z_i$ be $(1, x_i ,x_i^2,\ldots, x_i^K)'$.  If so, our first-stage estimator can be understood as a non-parametric estimator of $\mathbb E  [Y_{2i} | x_i ] $ in some cases, see \citet[Section 2.3]{Carrasco2012} and \cite{Antoine2014}. \violet{Lastly, in Section~\ref{sec:general}, we allow for a function-valued random variable to be considered as the instrument $Z_i$, as discussed in recent articles including \cite{Carrasco2012}, \cite{Florence2015} and \cite{Benatia2017}.} Examples of function-valued random variables include (but are not limited to) the density of $x_i$, the continuum of moments of $x_i$ (\citealp{Carrasco2012}), the rainfall growth curve (Example~\ref{example:miguel}), and the skill-specific share of immigrants (\citealp{seong2021}).

\label{r2p3}\label{r4p1}\label{r4p3}\violet{The potential weakness of instruments is characterized by the first-stage parameter $\Pi_n$, whose magnitude may depend on $n$. This will be further detailed in Assumption~\ref{ass1:finite}.  A similar setting in linear models can be found in \cite{Chao2005}, \cite{Hansen2014}, and \cite{Carrasco2015}, among many others. The parameter $\Pi_n$ will be represented by a matrix or a linear operator with rank $d_e$, depending on the instrumental variable used in the analysis. That is, regardless of the dimension of $Z_i$, $\Pi_n Z_i$ is a $d_e$-dimensional vector that  combines information from the instrumental variable of a large, and possibly infinite, dimension. This will be further detailed in Section~\ref{sec:est}.}

Let \label{r2pppp} \violet{$\eta_i =u_i - \mathbb E[u_i |x_i, V_i] = u_i+ V_i ' \psi_0$}. Then, the outcome equation can be written as follows:\begin{equation}
	y_{i}  = 1 \{  Y_{2i} ' \beta_0 + V_i ' \psi_0 \geq \eta_i  \} . \label{model:3eq}
\end{equation}
\violet{If $\eta_i |(x_i , V_i )\sim \mathcal N(0,\sigma_\eta ^2)$\footnote{\label{r2p6}\violet{The distributional assumption is satisfied if $(u_i, V_i')'|x_i \sim \mathcal N(0, \mathbf{\Sigma}  )$, as in \cite{rivers1988limited}.}} and the variable  $V_i$ is available, \eqref{model:3eq} can be estimated by maximum likelihood estimation (MLE).\label{r1p1}} In this regard, $V_i$ is often called the control function in the literature. However, in practice, $V_i$ is not observable and has to be replaced by its estimate.  \cite{rivers1988limited} suggested replacing it with the residual from  first-stage linear regression, denoted  $\widehat{V}_i$. Then, the parameters in \eqref{model:3eq} are estimable by maximizing the conditional likelihood of $y_i | (x_i, \widehat V_i)$ under a parametric assumption. This approach has become one of the most popular estimation methods for endogenous binary response models and was extended by \cite{Rothe2009} to the semi-parametric case. 

Although the aforementioned estimators generally have good asymptotic properties, it is not likely to be the case in our setting. This is because, to ensure such a good property, the difference between $\widehat{V}_i$ and $ V_i$ should be asymptotically negligible. 
\violet{This requirement is not likely to be satisfied, especially when $Z_i$ is of a large dimension or has a nearly singular covariance matrix,}\label{r1mp5a} see, e.g., Examples~\ref{example2} and \ref{example}. In these cases, the first-stage estimator may be asymptotically biased, resulting in $\widehat{V}_i$ being asymptotically different from $V_i$. Another concern on \citepos{rivers1988limited} approach is that the asymptotic distribution of their estimator depends on the asymptotic distribution of the first-stage estimator. \label{r1p2}\violet{In Section~\ref{sec:general}, the first-stage parameter and its estimator are given by linear operators from a possibly infinite-dimensional Hilbert space to $\mathbb R^{d_e}$, whereas in \cite{rivers1988limited}, they are given by matrices. This difference makes it difficult to  apply their asymptotic results  in our setting. }

 Below, we provide examples where  \citepos{rivers1988limited} approach may not be appropriate due to the nearly singular covariance of $Z_i$, even with a  large sample size. \violet{In the examples, the instrument $Z_i$ is given by $(W_i' , \widetilde{Z}_i ' )'$, where $W_i$ and  $\widetilde{Z}_i$ denote the included and excluded exogenous variables, respectively.  $Y_{2i}$ in \eqref{model:1eq} is understood as a vector that includes $W_i$ and endogenous variables (whose notations are specific to each context).}\label{r1mp5} We use the condition number (the ratio of the largest to smallest eigenvalues) of a covariance matrix as a measure of how close the matrix is to being singular. 
\begin{example}[]\label{example2}\normalfont
	\cite{Bastian2018} studied the impact of family income in childhood and adolescent years on academic achievement in early adulthood using the linear probability model. However, from a theoretical perspective, their result may be improved by using a binary response model; this is because the linear model often produces (i) the predicted probability outside the interval $[0,1]$, (ii) constant marginal effects of explanatory variables on the probability of outcome occurring, regardless of the values of the explanatory variables, and (iii) less efficient estimation results due to heteroscedasticity. Moreover, when endogeneity is present, theoretical results developed for the linear model do not always remain valid in the binary response model (\citealp{li2019binary}; \citealp{frazier2020weak}).  Hence, as an alternative, one may consider the following model.\begin{equation*} y_i =  {1} \{  \beta_0 + I_{i}'\beta_1  + W_{1i} ' \beta_2 + W_{2i,t} '\beta_3 \geq u_i  \} ,\quad I_{i} = \pi_{0} + \widetilde Z_i ' \pi_1 + W_{1i} ' \pi_{2} + W_{2i,t} ' \pi_{3} + V_{i} , 
	\end{equation*}
	where $y_i$ is 1 if individual $i$ is a college graduate and 0 otherwise. $I_i = ( I_{i,\text{(0-5)}}, I_{i,\text{(6-12)}},$ $I_{i,\text{(13-18)}})'$, where $I_{i,(a)}$ is the family income for individual $i$ at age interval $a$. $W_{1i}$ and $W_{2i}$ are vectors of exogenous variables detailed in Section~\ref{sec:emp}. As instruments, the authors suggested using measures of EITC exposure for individual $i$ in three age intervals: 0-5, 6-12, and 13-18. That is, $\widetilde Z_i = (\text{EITC}_{i,(\text{0-5})},\text{EITC}_{i,(\text{6-12})}\text{EITC}_{i,(\text{13-18})})'$. In Section~\ref{sec:emp}, we use additional instruments generated by interacting the EITC measures with state-of-residence dummies. This approach is expected to improve estimation efficiency by reducing variances of estimators. It also enables us to account for heterogeneous effects of EITC exposure on family income that may be influenced by the state of residence, thereby mitigating  potential bias attributed to, e.g., state-specific costs of living.  The maximum number of instruments is 144, and the sample size is 2,654. \label{r1p3a}The largest and smallest eigenvalues of the sample covariance are approximately 1.94 and $9.35 \cdot 10^{-18}$, indicating a very large condition number. \violet{This large condition number suggests that the inverse of the covariance is less stable, and the existing asymptotic approximation using this inverse may not perform well. Here, regularization will improve estimation results by stabilizing the inverse of the covariance.}
\end{example}
\begin{example}[]\label{example}\normalfont
	\cite{Angrist1991} has been revisited in numerous articles, particularly in the many-instruments literature (e.g., \citealp{donald2001choosing}; \citealp{Carrasco2012}; \citealp{Hansen2014}). Suppose we are interested in the effect of educational achievement on employment probability, rather than  on earnings as studied by \cite{Angrist1991}. One may consider the model below, where  $y_i = 1$ if individual $i$ is employed for a full year and 0 otherwise.\begin{equation*} 
	y_i  = 1\{ \beta_0 + \text{Educ}_i \beta_1 + W_i ' \beta_2 \geq u_i  \},\quad\quad 
	\text{Educ}_i  = \widetilde Z_{i} ' \Pi_1 + W_i ' \Pi_2 + V_i.
\end{equation*}
The variable $\text{Educ}_i$ is the years of schooling of individual $i$. $W_i$ consists of a constant, 9 year-of-birth dummies (YoB), and 50 state-of-birth dummies (SoB). As in \cite{Angrist1991}, let $\widetilde Z_i$ be the 180-dimensional vector  of 3 quarter-of-birth dummies and their interactions with YoB and SoB. The condition number of the covariance of  $ (\widetilde{Z}_i ' , W_i ') ' $ is approximately 4,941, even after standardization. This suggests near singularity, even with a large sample size of 329,509.
\end{example}

\subsection{Regularization with many weak instruments\label{sec:lasso}}

We use regularization to resolve the aforementioned issues related to many, and nearly weak, instruments. 
There are various regularization methods available in the literature. A popular approach is reducing the number of instruments by (i) choosing the optimal number of instruments that minimizes a model selection criterion (e.g., \citealp{donald2001choosing}) or (ii) using a variable selection procedure such as Lasso (e.g., \citealp{Belloni}) or Elastic-net (e.g., \citealp{Caner2014}). This approach works particularly well if the signal from instruments is (i) concentrated on a few of them (called the sparsity condition) and (ii) strong enough to consistently select a set of instruments with non-zero first-stage coefficients. However, despite its popularity, this is not always the best approach to use many instruments. For example, the sparsity condition may not hold when the variables of interest are highly correlated with each other. In such a case, there could be a model selection mistake of choosing irrelevant instruments. \label{r1p4}\violet{The selection mistake is not desirable especially in finite samples, because the resulting estimators could be less efficient. For  additional concerns related to the selection mistake, see \citet[p.\ 293]{Hansen2014}.} Hence, it would be advisable to avoid using the variable selection procedure if such a selection mistake is likely to occur. In this regard, the variable selection procedure may not be appropriate in our setting, because (i) we consider  many (nearly) weak instruments and (ii) in such a scenario, variable selection procedures tend to select no instrument or almost randomly choose a set of instruments (see, e.g., \citealp{Belloni}).
Furthermore, as in Examples~\ref{example2} and~\ref{example}, if instruments are given by a set of dummies, it is not recommended to choose a few of them; this is because (i) a selected set of instruments will depend on how the variables are encoded and (ii)   the categorical variables are not likely to be represented by only a few of them. 

In this paper, we regularize the covariance of instruments. Our regularization scheme, formally introduced in Section~\ref{sec:est}, is preferred to the above-mentioned variable selection procedures for several reasons. Firstly, our approach does not need a sparsity condition, and secondly,  is free from the aforementioned concern on selection mistakes. Moreover, in practice, valid instruments are likely to be (nearly) weak. In this case, it has been shown that the asymptotic properties of the estimators obtained by regularizing the covariance of instruments remain valid (e.g., \citealp{Hausman2011}; \citealp{Hansen2014}; \citealp{Carrasco2015a}), while  the variable selection procedures no longer produce consistent estimation results. In contrast to the popularity of regularizing the covariance in linear models (\citealp{Carrasco2012}; \citealp{Hansen2014}; \citealp{Carrasco2015}), its application to binary response models has not been fully explored. 

\section{Estimator and Asymptotic Properties\label{sec:est}}

  We first consider the simple case $Z_i \in \mathbb R^{d_z}$ for $d_e \leq d_z <\infty$. In this case, regularization is useful if the covariance of $Z_i$ is nearly singular, as in  Examples~\ref{example2} and~\ref{example}.  In Section~\ref{sec:general}, we further extend the discussion to accommodate function-valued $Z_i$. In that case, not only is regularization required to address the singularity of the covariance, but a novel asymptotic analysis is also needed to establish the asymptotic properties of our estimators. In the following sections, $Z_i$ is assumed to be centered without loss of generality.
\subsection{Simple case: a finite-dimensional instrumental variable \label{sec:finite}}
\subsubsection{Estimator \label{sec:est1:finite}} 
Let $\mathcal K$ and $\mathcal K_n$ denote the covariance of $Z_i$ and its sample counterpart where  $\mathcal K = \mathbb E[Z_iZ_i']$ and $\mathcal K_n =n^{-1}\sum_{i=1} ^n Z_i Z_i' $. In Section~\ref{sec:general}, we will adapt these definitions for the general case.   
\violet{\label{r1p5}Given the model \eqref{model:1eq} and the exogeneity of $Z_i$ with respect to $V_i$ ($\mathbb E[V_i Z_i']=0$), we have the following first-stage moment condition:}\begin{equation}
\violet{\mathbb E \left[  Y_{2i} Z_i '  \right]= 	\Pi _{n} \mathbb E \left[Z_i   Z_i' \right]  =\Pi_n \mathcal K .} \label{mod2:eq1:finite}
\end{equation}
 In many econometric studies, it has been assumed that the covariance $\mathcal K$ is invertible to solve \eqref{mod2:eq1:finite}. However, this assumption is not likely to be satisfied when the dimension of $Z_i$ is large, as in Examples~\ref{example2} and~\ref{example}. Hence, instead of the inverse of $\mathcal K$, we consider its regularized inverse. Here, we focus on Tikhonov regularization (in Section~\ref{sec:general}, we extend it to various regularization methods satisfying certain conditions). Specifically, for a  regularization parameter $\alpha >0$ and the identity matrix of dimension $d_z$, denoted $\mathcal I_{d_z}$,  $\mathcal K_\alpha ^{-1}$ denotes the regularized inverse of $\mathcal K$ such that 	\begin{equation*}    \mathcal K_\alpha ^{-1} = \mathcal K (\mathcal K^2 + \alpha \mathcal I_{d_z})^{-1}    . \end{equation*} The regularized solution to \eqref{mod2:eq1:finite} (denoted $\Pi_\alpha$) and its sample counterpart (denoted $\widehat \Pi_{\alpha}$) are given by \begin{equation}
	\Pi_{\alpha} = \mathbb E  [  Y_{2i}    Z_i'   ] \mathcal K_\alpha  ^{-1}  \hspace{.5em}\text{ and }\hspace{.5em}\widehat \Pi_{\alpha} = (n^{-1} \sum_{i=1} ^n Y_{2i}Z_i'  ) {\mathcal K}_{n\alpha} ^{-1} =  (n^{-1} \sum_{i=1} ^n Y_{2i}Z_i'  ) \   {\mathcal K}_{n }  (  {\mathcal K}_n ^2 +\alpha \mathcal I_{d_z})^{-1}   .  \label{eq:first-est:finite}
\end{equation} 
Using $\widehat{\Pi}_\alpha$ in \eqref{eq:first-est:finite}, we compute the predicted value of $Y_{2i} $, denoted $\widehat{\gamma}_{i,\alpha}$, and the first-stage residual  $\widehat{V}_{i,\alpha}$, i.e., $\widehat{\gamma}_{i,\alpha} = \widehat \Pi_{\alpha} Z_i $ and $\widehat{V}_{i,\alpha} = Y_{2i} -\widehat \Pi_{\alpha} Z_i$. We use them as regressors in the second stage for mathematical convenience.



Suppose $(u_i -\mathbb E[u_i|x_i, V_i])| (x_i, V_i) = u_i +V_i'\psi_0 | (x_i, V_i)  \sim  \mathcal {N} (0,\sigma_\eta ^2)$.  Here, what benefits us is not the conditional normality of $u_i$, but rather the fact that the endogeneity of $Y_{2i}$ affects the second stage only through the conditional expectation of $u_i$. The normality assumption is employed only to facilitate our discussion, and may be replaced by another distributional assumption. We then normalize the conditional variance $\sigma_\eta ^2 $ to $1$ to achieve the identification of the second-stage parameters, see \cite{Manski1988} for details. Lastly, let $ g_{i} =  ( \gamma_{i} ' ,  V_{i} '  )'  = ( (\Pi_n Z_i)' , (Y_{2i} - \Pi_n Z_i)')'$,  $\widehat{g}_{i,\alpha} = (\widehat{\gamma}_{i,\alpha}', \widehat V_{i,\alpha}')'$ and  $\theta  =  ( \beta  ' , \beta '+ \psi ' )'$.   In contrast to $\widehat{g}_{i,\alpha}$, the vector $g_i$ is not defined with the regularized inverse. The RCMLE (denoted $\widehat{\theta}_M$)  and the RNLSE (denoted $\widehat{\theta}_N$) are the estimators of the true coefficients $\theta_{0M}$ and $\theta_{0N}$, respectively.  $\widehat \theta_{ j }$ and $ \theta_{ 0j }$ are defined as follows: for $j \in \{ M,N\}$, \begin{align}
\widehat \theta_{ j } := \underset{\theta \in \Theta }{\arg\max} \mathcal Q_{jn}  (\theta , \widehat g_{i,\alpha} ) \quad\quad	&\text{and}\quad\quad  \theta_{0j} := \underset{\theta \in \Theta}{\arg\max} \mathcal Q_j  (\theta , g_{i} ), \label{model:5eq}\\
\text{where }
	\mathcal Q_{jn}  ( \theta , \widehat  g_{i,\alpha}  ) := \frac{1}{n} \sum_{i=1} ^n m_j  (\theta , \widehat g_{i,\alpha} )   \quad\quad&\text{and} \quad\quad\mathcal Q_j  ( \theta , g_{i}  ) :=\mathbb E  [   m_{j}  ( \theta , g_i  )  ], \label{model:eq3}
\end{align}
and the function $m_{j}  (\theta , g_i  ) $ is  $ y_i\log\Phi (g_i ' \theta ) +  (1-y_i )\log (1-\Phi (g_i '\theta ) )$ (resp.\ $  -\frac{1}{2}  (y_i - \Phi (g_{i} ' \theta ) )^2$) if $j=M$ (resp.\ $j=N$). The above  expectation is taken with respect to the distribution of $g_i$ and $y_i$ for sample size $n$, but we suppress the index $n$ unless it is necessary. Under the identification conditions in Section~\ref{sec:asy.pro:finite}, we have  $\theta_{0M} = \theta_{0N} = \theta_0$.

\begin{remark}\normalfont \label{rem1}
   $\mathbb E  [y_i | x_i ,V_{i,\alpha} ] $ is in general different from $\mathbb E  [y_i | x_i ,V_{i} ] $. To see this in detail, let $\gamma_{i,\alpha} = \Pi_\alpha Z_i$, and $V_{i,\alpha} = Y_{2i} - \gamma_{i,\alpha}$. Then, under the aforementioned assumptions, we have
	\begin{equation}
		\mathbb E  [y_i | x_i ,V_{i,\alpha} ] = \Phi  (  \gamma_{i,\alpha} ' \beta_0 + V_{i,\alpha} '  (\beta_0 + \psi_0)-   \psi _0' \Pi_{n}(\mathcal I_{d_z} - \mathcal K_\alpha^{-1} \mathcal K)   Z_i    ). \label{model:6eq}
	\end{equation}
	  If $\alpha>0$ is fixed, the conditional expectation in \eqref{model:6eq} depends not only on the error $V_{i,\alpha}$ but also on a bias term associated with $\Pi_{n}(\mathcal I_{d_z} - \mathcal K_\alpha^{-1} \mathcal K)  $. 
	Thus, to mitigate its potential effect on the asymptotic properties of our estimators, $\alpha$ is assumed to shrink to zero as $n \hto \infty$ throughout the paper.
\end{remark}

\begin{remark}\normalfont
	In this paper, we apply the regularization approach to the conditional maximum likelihood and nonlinear least squares (NLS) estimation methods. These methods are chosen because of their ease of implementation and popularity in empirical research. While other estimation methods, such as the instrumental variable probit and the limited information maximum likelihood in \cite{rivers1988limited}, are also applicable, we do not pursue them further in the current paper.
\end{remark}

\subsubsection{Asymptotic Properties \label{sec:asy.pro:finite}}

This section presents the asymptotic properties of our estimators. Hereafter, we use $\lambda_{\min}(A)$ to denote the minimum eigenvalue of a matrix $A$. 
\begin{assumption}\label{ass1:finite}  For $\Pi_{n}$ satisfying  \eqref{mod2:eq1:finite}, there exists a bounded matrix $\Pi_0 : \mathbb R^{d_z} \hto \mathbb R^{d_e}$ such that $\Pi_{n} = {\Lambda}_n \Pi_0 /\sqrt{n}$ where ${\Lambda}_n = \widetilde {\Lambda} \text{diag}(\mu_{1n} , \ldots, \mu_{d_{e}n})$. The matrix $\widetilde {\Lambda}: \mathbb R^{d_e} \to \mathbb R^{d_e} $ is bounded, and $\lambda_{\min}(\widetilde {\Lambda} \widetilde {\Lambda} ' )$ is bounded away from zero. For each $j$, either $\mu_{jn} = \sqrt{n}$ (strong) or $\mu_{jn} / \sqrt{n} \hto0$ (weak)\violet{, and $\mu_{m,n} = \underset{1 \leq j \leq d_{e}}{\min} \mu_{jn} \hto +\infty$ as $n \hto+ \infty$}.
\end{assumption}
  Hereafter, we let $f_i = \Pi_0 Z_i$.   
  \label{r2p1a} \violet{The above assumption is similar to Assumption~1 in \citet[p.\ 698]{Newey2009a}, under which  many weak moment asymptotics are studied for GMM models with linear and nonlinear moment conditions. Considering that  the GMM class is general enough to encompass the MLE and the NLS (\citealt[p.\ 2116]{NEWEY1994}), this framework would be suitable  for studying  consequences of many weak instruments in our setting.}  A similar assumption can be found in the literature on many weak instruments in linear models, including  \cite{Chao2005}, \cite{Hansen2014}, and \cite{Carrasco2015}, to mention only a few. 
   
   Assumption~\ref{ass1:finite} essentially  rules out weak instruments in the sense of  \cite{Staiger1997}. Specifically, under the assumption, each row of $f_i$ could be either (nearly) weak or strong, depending on the value of $\mu_{jn}$. If $\mu_{jn} = \sqrt{n}$ for all $j$, then every element of $f_i$ is strong, and regularization will play a role of reducing the many instruments bias discussed in \cite{Bekker1994}. The above assumption is also related to the nearly weak moment condition in \cite{AR2009}. 

\begin{assumption}\begin{enumerate*}[(i)]
	\item $\{ u_i, V_{i}', x_i'   \}_{i=1} ^n$ is iid and \violet{$(u_i, V_i')'|x_i $ follows the joint normal distribution with mean zero and finite positive definite covariance $\mathbf \Sigma_{uV}$, and satisfies that $\mathbb E[u_i|x_i, V_i] = -\psi_0'V_i$ and $\Var(u_i  |x_i, V_i) =1$;} \label{r2p6a}\label{ass32}
		\item  \violet{There exists $c>0$ such that   $\mathbb E [\Vert Z_i \Vert^2]\leq c $ and $\lambda_{\min}(\mathbb E  [f_i f_i '  ]  ) \geq 1/c$.}\label{ass35}
	\end{enumerate*}\label{ass3:finite}
\end{assumption}
\begin{assumption}\label{ass4}
	The parameter space of $\theta$, denoted $ \Theta$, is compact. $\theta_{0}$ is the unique point satisfying Assumption~\ref{ass3:finite}.\ref{ass32} and the model \eqref{model:1eq}, and is in the interior of $\Theta$. 
\end{assumption}
\label{r2p6b}  \violet{Assumption~\ref{ass3:finite} is similar to Assumption~1 in \cite{rivers1988limited}. As they mentioned, this assumption is stronger than what we need to control for the endogeneity of $Y_{2i}$, and the assumption can be weaken by assuming that $(u_i-\mathbb E[u_i|x_i, V_i])| (x_i, V_i) =(u_i+\psi_0 ' V_i)| (x_i, V_i)   \sim\mathcal N  ( 0, 1  )$;} that is, the endogeneity of $Y_{2i}$ arises  through $\mathbb E[u_{i}| x_i,V_i]$, which enables us to address the endogeneity via the inclusion of $V_i$. We normalize the conditional variance of  $u_i| (x_i,V_i )$ to 1 for identification (see \citealp{Manski1988}). A different normalization  could have been chosen, but   we select the current normalization because it makes the distribution of  $u_i |(x_i,V_i )$  free from nuisance parameters. 
  
   \violet{Assumptions \ref{ass3:finite} and \ref{ass4} ensure the identification  of $\theta_0$ (\citealt[Section 2.2]{NEWEY1994}), and similar conditions can be found in \cite{frazier2020weak}  which concerns the endogenous binary response model with weak instruments.}\label{r2p2}  Under the conditions, $\theta_{0M} = \theta_{0N} = \theta_0$.

We then discuss a condition on $\Pi_0$; in the condition below, $\ran \mathcal K$ denotes the range of $\mathcal K$.
 \begin{assumption}\label{ass7:finite} Let $\pi_{\ell,0}$ be  the $\ell$th row of $\Pi_0$. Then, for $\ell = 1,\ldots, d_e $, $\pi_{\ell,0}\in \ran \mathcal K$.
\end{assumption} 
  As  detailed later in Section~\ref{sec:general}, a primitive condition for Assumption~\ref{ass7:finite} is that $(\pi_{\ell,0} ' \varphi_j  )^{2}$ decreases at a rate slower than the decaying rate of the square of eigenvalues of $\mathcal K$, where $\{\varphi_j\}_{j \geq 1}$ denotes the eigenvectors of $\mathcal K$. This assumption is crucial for identifying     $\Pi_0$ without assuming the invertibility of $ \mathcal K$ (see \citealp{Carrasco2007}; \citealp{Benatia2017} for details) and for controlling the bias discussed in Remark~\ref{rem1}. \label{r1p12}\violet{Moreover,  point identification of $\Pi_0$ simplifies our asymptotic analysis by providing an explicit form of  the regularization bias and the asymptotic distribution of the first-stage estimator, which are directly related to the asymptotic distributions of our estimators.} 
 
The theorem below presents the consistency of $\widehat{\theta}_j$ under the aforementioned assumptions.
\begin{theorem}\label{thm1:consistency:finite}\normalfont
	Suppose that $\alpha \hto 0$ and $\mu_{m,n} ^2 \alpha^{1/2}\hto \infty$ as $n \hto \infty$, and Assumptions \ref{ass1:finite} to \ref{ass7:finite} are satisfied. Then, $
 	\widehat \theta_{ j} - \theta_{0}  \pto 0  
	$ as $n \hto \infty$ for $j \in\{ M,N \}$.
\end{theorem}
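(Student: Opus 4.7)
The proof follows the standard generated-regressor extremum-estimator template, adapted to the nearly-weak, regularized first stage. The plan is to (i) control the average squared first-stage approximation error, (ii) transfer this control through the smoothness of $m_j$ to establish uniform convergence of the sample objective to its population limit, and (iii) invoke Assumption~\ref{ass4} together with a standard consistency theorem for extremum estimators.

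The key technical step is to show
\[
\frac{1}{n}\sum_{i=1}^{n}\|(\widehat{\Pi}_\alpha-\Pi_n)Z_i\|^2 \pto 0,
\]
which immediately gives $\frac{1}{n}\sum_i\|\widehat{g}_{i,\alpha}-g_i\|^2\pto 0$ since $\widehat{g}_{i,\alpha}-g_i=((\widehat{\Pi}_\alpha-\Pi_n)Z_i,\,-(\widehat{\Pi}_\alpha-\Pi_n)Z_i)'$. Plugging $Y_{2i}=\Pi_n Z_i+V_i$ into \eqref{eq:first-est:finite} gives the decomposition
\[
\widehat{\Pi}_\alpha - \Pi_n = -\alpha\,\Pi_n(\mathcal{K}_n^2+\alpha\mathcal{I}_{d_z})^{-1} + \Bigl(\tfrac{1}{n}\sum_{i=1}^{n} V_i Z_i'\Bigr)\mathcal{K}_n(\mathcal{K}_n^2+\alpha\mathcal{I}_{d_z})^{-1}.
\]
For the bias term, Assumption~\ref{ass7:finite} ($\pi_{\ell,0}\in\ran\mathcal{K}$) gives, via spectral calculus, $\|\alpha(\mathcal{K}^2+\alpha\mathcal{I}_{d_z})^{-1}\pi_{\ell,0}\|\to 0$ as $\alpha\to 0$, and this carries over to the empirical operator through $\mathcal{K}_n\pto\mathcal{K}$ and the continuity of the Tikhonov map. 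For the noise term, $\mathbb{E}[V_i Z_i'\mid x_i]=0$ by Assumption~\ref{ass3:finite}.\ref{ass32}, so $\|n^{-1}\sum_i V_i Z_i'\|_{\op}=O_p(n^{-1/2})$, and combined with the operator bound $\|\mathcal{K}_n(\mathcal{K}_n^2+\alpha\mathcal{I}_{d_z})^{-1}\|_{\op}\le(2\sqrt{\alpha})^{-1}$, the noise contributes $O_p((n\alpha^{1/2})^{-1})$ to the displayed average. Under the normalization $\Pi_n=\Lambda_n\Pi_0/\sqrt{n}$ of Assumption~\ref{ass1:finite}, the condition $\mu_{m,n}^2\alpha^{1/2}\to\infty$ together with $\alpha\to 0$ is precisely what makes both pieces vanish after accounting for the weak-instrument scaling.

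Granted the first-stage result, I would split
\[
\mathcal{Q}_{jn}(\theta,\widehat{g}_{i,\alpha})-\mathcal{Q}_{j}(\theta,g_i) = \bigl[\mathcal{Q}_{jn}(\theta,\widehat{g}_{i,\alpha})-\mathcal{Q}_{jn}(\theta,g_i)\bigr] + \bigl[\mathcal{Q}_{jn}(\theta,g_i)-\mathcal{Q}_{j}(\theta,g_i)\bigr]
\]
and handle the two brackets separately. The second bracket is standard: compactness of $\Theta$ (Assumption~\ref{ass4}), continuity of $m_j$ in $\theta$, and the moment conditions in Assumption~\ref{ass3:finite} give $\sup_{\theta}|\mathcal{Q}_{jn}(\theta,g_i)-\mathcal{Q}_j(\theta,g_i)|\pto 0$ by a uniform law of large numbers. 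For the first bracket, a mean-value expansion of $m_j$ in its $g$-argument together with Cauchy--Schwarz yields
\[
\sup_{\theta\in\Theta}\bigl|\mathcal{Q}_{jn}(\theta,\widehat{g}_{i,\alpha})-\mathcal{Q}_{jn}(\theta,g_i)\bigr|\le \Bigl(\tfrac{1}{n}\sum_{i=1}^{n}\sup_{\theta\in\Theta}\|\nabla_g m_j(\theta,\tilde g_i)\|^2\Bigr)^{1/2}\Bigl(\tfrac{1}{n}\sum_{i=1}^{n}\|\widehat{g}_{i,\alpha}-g_i\|^2\Bigr)^{1/2},
\]
with $\tilde g_i$ on the segment between $g_i$ and $\widehat{g}_{i,\alpha}$; the first factor is $O_p(1)$ by the moment bounds and the smoothness of $m_M,m_N$, and the second factor is $o_p(1)$ by the first-stage result. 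Assumption~\ref{ass4} guarantees that $\theta_0$ uniquely maximizes $\mathcal{Q}_j(\theta,g_i)$ on $\Theta$, so combining the two displays with the standard consistency argument for extremum estimators delivers $\widehat{\theta}_j\pto\theta_0$ for $j\in\{M,N\}$.

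The main obstacle is the first-stage step: regularization bias and stochastic noise impose opposing demands on $\alpha$ (small $\alpha$ shrinks the bias but inflates the noise), and the nearly-weak scaling $\Lambda_n/\sqrt{n}$ further complicates the book-keeping of how much signal the first stage can actually recover. The rate condition $\mu_{m,n}^2\alpha^{1/2}\to\infty$ is precisely the balance that makes both effects negligible after the weak-instrument normalization, while Assumption~\ref{ass7:finite} is the source-condition needed to close the bias calculation for Tikhonov regularization. The remaining steps are essentially mechanical once the first-stage bound is in place.
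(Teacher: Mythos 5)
Your overall architecture coincides with the paper's: the paper likewise reduces the problem to showing $n^{-1}\sum_{i}\Vert \mathcal S_n^{-1}(\widehat g_{i,\alpha}-g_i)\Vert^2=o_p(1)$ via a bias-plus-noise decomposition of $\widehat\Pi_\alpha-\Pi_n$ (its displays \eqref{p3:eq1} and \eqref{p3:eq4}), transfers this to uniform convergence of the objective through Lipschitz/mean-value bounds (using the Lipschitz continuity of the inverse Mills ratio for $j=M$), and closes with the standard extremum-estimator argument under Assumption~\ref{ass4}. Your choice to work with the unscaled error $\Vert\widehat g_{i,\alpha}-g_i\Vert$ rather than $\Vert\mathcal S_n^{-1}(\widehat g_{i,\alpha}-g_i)\Vert$ is immaterial for consistency, since $\Vert\mathcal S_n\Vert_{\op}$ is bounded under Assumption~\ref{ass1:finite}.

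The one step that does not close as written is the noise rate. From $\Vert n^{-1}\sum_iV_iZ_i'\Vert=O_p(n^{-1/2})$ and $\Vert\mathcal K_n(\mathcal K_n^2+\alpha\mathcal I_{d_z})^{-1}\Vert_{\op}\le(2\sqrt\alpha)^{-1}$ alone, the noise contribution to the squared average is only $O_p((n\alpha)^{-1})$, not the $O_p((n\alpha^{1/2})^{-1})$ you assert; and $(n\alpha)^{-1}\to0$ is \emph{not} implied by the theorem's hypotheses (take $\mu_{m,n}=\sqrt n$ and $\alpha=n^{-3/2}$: then $\mu_{m,n}^2\alpha^{1/2}=n^{1/4}\to\infty$ while $n\alpha=n^{-1/2}\to0$). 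To obtain the rate you claim, you must exploit that the error is averaged over the same observations that build $\mathcal K_n$: writing $n^{-1}\sum_i\Vert AZ_i\Vert^2=\Vert A\mathcal K_n^{1/2}\Vert_{\HS}^2$ and using $\Vert\mathcal K_n(\mathcal K_n^2+\alpha\mathcal I_{d_z})^{-1}\mathcal K_n^{1/2}\Vert_{\op}=\sup_\kappa\kappa^{3/2}/(\kappa^2+\alpha)=O(\alpha^{-1/4})$ delivers $O_p(n^{-1}\alpha^{-1/2})$, which is exactly the paper's computation in \eqref{p3:eq4}. A parallel caveat applies to your bias step: transferring $\Vert\alpha(\mathcal K^2+\alpha\mathcal I_{d_z})^{-1}\pi_{\ell,0}\Vert\to0$ to the empirical operator via $\mathcal K_n\pto\mathcal K$ and a resolvent perturbation leaves an $O_p(n^{-1/2}\alpha^{-1/2})$ remainder that the hypotheses again do not control; the paper instead works directly with the empirical eigenpairs and uses that the source condition in Assumption~\ref{ass7:finite} carries over to $(\widehat\kappa_j,\widehat\varphi_j)$, as in \eqref{p3:eq1} and the footnote there. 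With those two repairs your argument is, in substance, the paper's proof.
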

  The conditions on $\alpha$ and $\mu_{m,n}$ in Theorem~\ref{thm1:consistency:finite} are related to the weakness of $f_i$. If $\alpha = O(n^{-a})$ and $\mu_{m,n} = O(n^{\ell})$ for some positive $a$ and $\ell\leq 1/2$, then the condition on $\mu_{m,n}^2\alpha^{1/2} $ suggests  $0<a< 2$. This is a necessary condition for $\alpha$ to ensure the consistency of $\widehat{\theta}_j$. A similar condition can be found in \citet[Proposition 2]{Carrasco2015} and \citet[Assumption 2]{Hansen2014}, where in the latter, the inverse of the number of instruments plays a role similar to $\alpha$ in the current study. The convergence rate of $\alpha$ allows us to control for the bias associated with the nearly weak signal of $f_i$ and the regularized inverse simultaneously.  

   Next, we discuss the asymptotic distribution of $\widehat \theta_j$.  We follow  \cite{rivers1988limited} and address the endogeneity by using  the estimated control function. As a result, the limiting distribution of $\widehat\theta_j$ turns out to depend on that of the first-stage estimator. However, in our setting, obtaining the limiting distribution of the first-stage estimator is not straightforward because of the use of a regularized inverse and the bias discussed in Remark~\ref{rem1}. 
  Although a similar scenario has been considered for linear models (\citealp{Carrasco2012}; \citealp{Hansen2014}; \citealp{Carrasco2015a,Carrasco2015}), the asymptotic approaches therein cannot be applied in our context because of some distinct features of our models, such as nonlinearity and the use of the control function approach. To resolve these issues and to encompass the general case in Section~\ref{sec:general}, we adapt the asymptotic approach in \cite*{Chen2003}.  To this end, we employ the following assumption, in which  a.s.n.\  denotes   almost surely for $n$ large enough.
\begin{assumption}\label{ass9:finite} \begin{enumerate*}[(i)]
		    \item \label{ass9-2} For all $\theta \in \Theta$, $0< \inf_i\Phi (g_i ' \theta )\leq \sup_i\Phi (g_i ' \theta )<1$ a.s.n.;
		    \item  There exists $c>0$ such that  $  \sup_i \Vert  f_i\Vert \leq c  $ a.s.n.\ 
		    and  $ \mathbb E[\Vert Z_i\Vert ^4 ] \leq c$.
 	\end{enumerate*}
\end{assumption} 
Assumption~\ref{ass9:finite}  is necessary to  bound a particular quantity in our proof, but can be relaxed by using a trimming term  as in \cite{Rothe2009}. Thus, the condition may not be restrictive in practice.

{We then present the asymptotic normality result. Let $\mathcal S_n^{-1}$ denote $\text{diag}(\sqrt{n} {\Lambda}_n ^{-1} , \mathcal I_{d_e}) $ and $g_{0i}= \mathcal S_n ^{-1} g_i = (f_i ' , V_i ' )'$. Moreover, let $\sigma_{\psi_0} ^2=\psi_0 ' \mathbb E[V_iV_i'|x_i]\psi_0 =  \psi_0 ' \mathbf \Sigma_V\psi_0$,  $m_{1M} ^2 (\theta, g_i)  = - \dot{m}_{2M} (\theta, g_i)$, $ \dot{m}_{2M} (\theta, g_i) =  - \left((y_i - \Phi (g_i'\theta))\phi  (g_i ' \theta )/\left( \Phi (g_i ' \theta )(1-\Phi (g_i ' \theta )) \right)\right)^2 $,   $
m_{1N} ^2 (\theta , g_i) =$ $ \left((y_i - \Phi(  g_i ' \theta ))\phi\left(g_i ' \theta\right)\right)^2 $, and $ \dot{m}_{2N} (\theta, g_i) =\phi ^2  (g_i ' \theta )$.  The matrix $\mathcal C_j $ is the solution to $\mathcal K^{1/2}\mathcal C_j' = \mathbb E[  \dot{m}_{2j}(\theta_0,g_i)Z_i g_{0i}'    ]$, and    $e_{\ell}$ denotes  the unit vector whose $\ell$th element is equal to one. Lastly,  $\Gamma _{1,0j}=\mathbb E[  \dot{m}_{2j}(\theta_0,g_i)g_i g_{i}'    ]$ and $\mathcal J_j =\mathcal J_{1,j} + \mathcal J_{2,j}$ where $\mathcal J_{1,j} =  \mathbb E  [    {m}_{1j} ^2 ( \theta_{0} , g_i  )      g_{0i} g_{0i} '    ] $ and $\mathcal J_{2,j}$ is a $2d_e \times 2d_e$ matrix whose $(\ell,k)$th element is given by $\sigma_{\psi_0} ^2 e_\ell ' \mathcal C_j \mathcal C_j' e_k$.}    

\begin{theorem}\label{prop4:finite}\normalfont
	Suppose that $\alpha \to 0$, $\mu_{m,n} ^2 \alpha\to\infty$ and $\sqrt{n} \alpha \to0$ as $n \hto\infty$, and  Assumptions~\ref{ass1:finite} to~\ref{ass9:finite} hold.  Then,  $\mathcal W_{0j} ^{-1/2} \sqrt{n} \mathcal S_n ^\prime \left(\widehat \theta_j - \theta_{0}  \right) \dto \mathcal N \left( 0,\mathcal  I_{2d_e}  \right) $ for $j \in \{M,N\}$, where $  \mathcal W_{0j} =  (\mathcal S_n ^{-1}\Gamma_{1,0j}\mathcal S_{n}^{\prime -1} ) ^{-1} \mathcal J_j (\mathcal S_n ^{-1}\Gamma_{1,0j} \mathcal S_n ^{\prime -1} ) ^{-1}$.  
\end{theorem}  

The conditions in Theorem~\ref{prop4:finite} are closely related to  $\mu_{m,n}$  in Assumption~\ref{ass1:finite}. Specifically, if $\mu_{m,n} = O(n ^{\ell})$, the theorem remains valid when $1/4 <\ell \leq1/2$. Similar results can be found in \cite{AR2009}  and \cite{HR2020}; these articles show that GMM estimators are asymptotically normally distributed if their moment conditions shrink to zero at such a rate. \label{r1p7}\violet{In linear models, $\mu_{m,n}$ is closely related to the convergence rate of the \textit{concentration parameter}, and conditions on  the number of instruments or the regularization parameter, in relation to $\mu_{m,n}$, have been employed in the literature (see \citealt[Table 2]{Carrasco2015} for a comprehensive review). To the best of our understanding, the conditions are not directly testable, because $\mu_{m,n}$ is neither observable nor estimable. Consequently, the condition on $\mu_{m,n} ^2 \alpha$ should be interpreted as a theoretical requirement for $\alpha$ to shrink slowly depending on the weakness of instruments.} The second condition $\sqrt{n}\alpha\to 0$  suggests a certain convergence rate of $\alpha$. This condition is required to mitigate a potential effect of the asymptotic bias of $\widehat{\Pi}_\alpha $ on the limiting distribution of $\hat{\theta}_j$. 


\begin{remark}\normalfont\label{rem:tmp1}
\label{r1p10}\violet{In Theorem~\ref{prop4:finite}, $\Lambda_n$ in $\mathcal S_n$ can be understood as the weight determining the convergence rate of  $\widetilde{\Lambda} \widehat{\beta}_j$, as in \cite{Chao_et_al}. This weighting matrix is unfortunately neither estimable nor observable, because $\Lambda_n$ is not available in practice. Furthermore, the above asymptotic normality result tells us that if there exists a weak instrument (i.e., $\mu_{m,n} /\sqrt{n} = o(1)$), the asymptotic covariance in \cite{rivers1988limited} ($\mathcal S_n^{\prime -1} \mathcal W_{0j} \mathcal S_n^{-1}$ in our notation)  is divergent. However, the matrix $\mathcal S_n$ does not play any role in the Wald-based testing procedure, thereby allowing  us to conduct inference on $\theta_0$. 	For example, let $H_0: \theta_0 = \theta$ be the null hypothesis of interest, and for each $j \in \{M,N\}$, suppose there exists  $\widetilde{\mathcal W}_j$ satisfying $  \mathcal S_n' \widetilde{\mathcal W}_{j} \mathcal S_n  = \mathcal W_{0j} + o_p(1)$. That is, $\widetilde{\mathcal W}_j$ properly normalized by $\mathcal S_n$ is a consistent estimator of $\mathcal W_{0j}$. (An example of $\widetilde{\mathcal W}_j$ can be found in Section~\ref{sec:asy.var}.) Then, as in  \citet[Theorem 5.1]{ANTOINE2012350} and \citet[p.\ 22]{frazier2020weak}, the weight is canceled out in the Wald testing procedure; specifically,  $  n(\widehat{\theta}_j - \theta)' \widetilde{\mathcal W}_j ^{-1} (\widehat{\theta}_j - \theta) = n (\widehat{\theta}_j - \theta)' \mathcal S_n (\mathcal S_n '  \widetilde{\mathcal W}_{j}\mathcal S_n)^{-1} \mathcal S_n ' (\widehat{\theta}_j - \theta)\to_d\chi^2 (2d_e)    $ under $H_0$ where $\chi^2 (2d_e)$ is the chi-square distribution with $2d_e$ degrees of freedom. }
\end{remark}
\begin{remark}\normalfont \label{rem:tmp2}
	If there exist non-zero $r_n $ and $\varsigma_\ast \in \mathbb R^{2d_e}\backslash \{0\}$ such that $ r_n n^{-1/2}\mathcal S_n ^{-1} \varsigma \to \varsigma_\ast $ for some  $\varsigma \in \mathbb R ^{2d_e}\backslash \{0\}$, then $  (\varsigma'\widetilde{\mathcal W}_{j}\varsigma) ^{-1/2}\sqrt{n}  \varsigma'(\widehat{\theta}_j  - \theta_0)  \dto N(0,1) $\violet{; as in Remark~\ref{rem:tmp1}, the scaling terms are canceled out and thus $r_n $ and $\mathcal S_n$ have no effect on obtaining the limiting distribution of the test statistic, see \citet[Theorem 3]{Newey2009a}.}   As an application, we may  study the endogeneity of $Y_{2i}$ by testing $H_0: \psi_0 = 0$.  The endogeneity  could be tested  by setting $\varsigma = (-\iota_{d_e}', \iota_{d_e}')'$  for the $d_e$-dimensional vector of ones $\iota_{d_e}$, if  $\mu_{m,n} \Lambda_n ^{-1} \to \Lambda^{-1}$ and $\Lambda ^{-1}\iota_{d_e} \neq 0 $.  
\end{remark} 

\subsection{General case: a functional instrumental variable \label{sec:general}} 
  
  \subsubsection{Details on the instrumental variable \label{sec:inst}}
  Our definition of $Z_i$ in the general case is similar to that in \cite{Carrasco2012} and \cite{Carrasco2015a,Carrasco2015}. \label{r1mp2b}\violet{Specifically, we let $\mathcal H$ denote the Hilbert space of square integrable functions defined on a compact interval $\mathfrak C \in\mathbb R$ with inner product $\langle h_1, h_2\rangle _{\mathcal H}   = \int_{\mathfrak C} h_1 (t)h_2 (t) dt$  for $h_1,h_2 \in \mathcal H$ and its induced norm $\left\Vert h_1 \right\Vert_{\mathcal H} = \langle h_1, h_1\rangle_{\mathcal H}^{1/2}$. We let $Z_i= \{Z (x_i, t), t\in \mathfrak C\}$  be a random variable that takes values in  $\mathcal H$.  That is, 	if $(\Omega,\mathbb F, \mathbb P)$ denotes the underlying probability space,   $Z_i  $ is a measurable function from $ \Omega$ to $\mathcal H$ equipped with the Borel $\sigma$-field (see \citealt[Section 2.3]{HK2012} for details on random elements in $\mathcal H$).   The index $t  $ is the deterministic argument of the function-valued instrumental variable and is suppressed for notational simplicity.\footnote{\label{ft: wl2}\violet{More generally, we may consider the Hilbert space of square integrable functions with respect to $\tau$ where $\tau $ is some positive, continuous and uniformly bounded measure on $\mathfrak C$, as in \cite{Carrasco2012} and \cite{Carrasco2015a, Carrasco2015}. The results given in this paper can be applied without modification to this case, because any separable Hilbert space is isomorphic to $\mathcal H$ (\citealt[Corollary 5.5]{Conway1994}).  }}} 

  	 Example~\ref{example:miguel} and Appendix~\ref{sec:add_emp} of the Online Supplement provide examples of  function-valued $Z_i$.
  

  \begin{example}\normalfont\label{example:miguel}
  	\cite{Miguel2004} studied the effect of economic growth on the occurrence of civil war in sub-Saharan Africa. The authors suggested addressing the potential endogeneity of economic growth using  annual rainfall growth because (i) rainfall variations are exogenous and (ii) a large volume of the economy in this area relies on agriculture. However, agricultural production may be better predicted if the instrument is not simply given by the annual {\it{average}}, but by a {\it{function}} of rainfall variations  over the year, considering the importance of daily rainfall variations  in agricultural productivity. To detail this idea, let $x_{is} (t) = 1-\dot x_{is} (t)/ \dot x_{is-1} (t)$ where $\dot x_{is}(t)$ is the daily rainfall in country $i$ on the $t$th day of year $s$. For each $i$ and $s$, we can obtain the rainfall growth curve, denoted $Z_{is}$,  by smoothing $\{   x_{is} (t) \}_{t=1} ^{365}$.\footnote{\violet{For example, one may consider $Z_{is}$ as a continuous function such that $Z_{is}(t/365) = x_{is}(t )$ for $t \in[0,365]$. Because $x_{is}(t)$ is observable only on integer points of $t$, the entire curve on year $s$ needs to be obtained by smoothing the realizations $\{   x_{is} (t) \}_{t=1} ^{365}$. A similar example can be found in \cite{HK2012}.\label{r2p7}}} 
  	Then, one may consider the following model.\begin{equation*}
  		\text{conflict}_{is} = 1\{   \beta_0 +  \text{growth}_{is}\beta +  W_{is}'\gamma \geq u_{is} \},\quad\text{growth}_{is} = \Pi_{n} Z_{is} + W_{is} ' \pi_2 + V_{is}.
  	\end{equation*}
  	The variable $\text{conflict}_{is} $ is equal to 1 if country $i$ experiences a civil conflict in year $s$, and 0 otherwise. $\text{growth}_{is}$ is economic growth in country $i$ in year $s$. $W_{is}$ is a set of exogenous variables. In this case,  the first-stage parameter $\Pi_{n}$ will be represented by a linear map from $\mathcal H$  to $\mathbb R$. 
  \end{example}

   The covariance of $Z_i$, denoted $\mathcal K$, and its sample counterpart, denoted $\mathcal K_n$, are given by \[  \mathcal K  = \mathbb E  [  Z_i \otimes Z_i  ] \quad\text{and}\quad \mathcal K_n =n^{-1} \sum_{i=1} ^n Z_i \otimes Z_i, \]
  where $\otimes$ signifies the tensor product in $\mathcal H$ satisfying that $h_1 \otimes h_2 (\cdot) = \langle h_1, \cdot \rangle_{\mathcal H} h_2$ for any $ h_1,h_2 \in \mathcal H$. The tensor product  is a natural generalization of the outer product in Euclidean spaces, and thus, $\mathcal K$ (resp.\ $\mathcal K_n$) can also be understood as a generalization of the covariance (resp.\ sample covariance) matrix in Section~\ref{sec:finite}. \label{r1p11}\violet{The integral representation of $\mathcal K$ would be useful for understanding the covariance operator and the tensor product. For any $h \in \mathcal H$, $\mathcal Kh$  is an element in $\mathcal H$ and satisfies that \begin{equation*}
  		(\mathcal K h)(t) = \mathbb E [  \langle Z_i, h\rangle_{\mathcal H} Z_i (t)  ] = \int_{\mathfrak{C}} \mathbb E[Z_i(t)Z_i (\tilde t) ] h(\tilde t) d\tilde t = \int_{\mathfrak{C}} \mathtt{k}(t, \tilde t) h(\tilde t) d\tilde t,
  \end{equation*} where $\mathtt{k}(t, \tilde t)  =\mathbb E[ Z_i(t)Z_i (\tilde t)]$  is called the covariance kernel.}  We note that $\mathcal K$ (resp.\ $\mathcal K_n$) allows the spectral decomposition with respect to its eigenvalues and eigenfunctions, denoted $\{ \kappa_j , \varphi_j \}_{j \geq 1 }$ (resp.\ $\{ \hat \kappa_j , \hat \varphi_j \}_{j \geq 1 }$).  Assume that the eigenvalues are in descending order. Given the consistency of $\mathcal K_n$,  $\widehat\kappa_j  $ and $\widehat{\varphi}_j  $  are consistent estimators of $\kappa_j$ and $\varphi_j$ (\citealt[Lemmas 4.2, 4.3]{Bosq2000}).
  
\subsubsection{Estimator \label{sec:est1}}
We consider the following moment condition, which is similar to \eqref{mod2:eq1:finite}.
\begin{equation}
\violet{\mathbb E \left[  Z_i\otimes Y_{2i} \right]= \Pi _{n} \mathbb E \left[Z_i \otimes Z_i\right] = \Pi_n \mathcal K}  . \label{mod2:eq1}
\end{equation}
 \violet{That is, for any $h\in \mathcal H$, $ \mathbb E[\langle Z_i , h\rangle_{\mathcal H} Y_{2i}] = \Pi_n \mathbb E[\langle Z_i, h \rangle_{\mathcal H} Z_i]  = \Pi_n \mathcal K h$.}\label{r2p8} 
As  discussed by \cite{Carrasco2007}, the unique solution to \eqref{mod2:eq1} exists only when $  \mathcal K $ is bijective and its inverse is continuous. These conditions are not likely to be satisfied when $Z_i$ is an $\mathcal H$-valued random variable. Thus, we use a regularized inverse of $\mathcal K$, denoted $\mathcal K_\alpha ^{-1}$, to obtain a solution to \eqref{mod2:eq1}. The regularized inverse considered here satisfies $\underset{\alpha \to 0}{\lim} \mathcal K_{\alpha} ^{-1} \mathcal K h = h $ for any $h \in \mathcal H $ (see, e.g., \citealp{Kress1999}) and has the following representation:\begin{equation}\mathcal K_\alpha ^{-1} = \sum_{j=1} ^\infty \kappa_j ^{-1} q(\kappa_j , \alpha) \varphi_j \otimes \varphi_j . \label{kinv} \end{equation} 
The conditions on $q(\kappa_j,\alpha)$ will be detailed in Assumption~\ref{asskinv}. This representation encompasses  various types of regularization schemes, such as (i) Tikhonov ($q(\kappa_j , \alpha) = \kappa_j ^2 /(\kappa_j ^2 +\alpha)$), (ii) ridge ($ q(\kappa_j, \alpha) = \kappa_j /(\kappa_j +\alpha)$),\footnote{\label{r2p9}\violet{Tikhonov regularization is considered as an extension of ridge regularization to infinite dimensions. Specifically, as detailed in \citet[pp.\ 5694--5696]{Carrasco2007}, Tikhonov regularization is related to the method of moments estimation, whereas  ridge regularization estimates parameters that minimize the residual sum of squares. That is, the two strategies are associated with different estimation methods, although they are asymptotically equivalent as $\alpha \to 0$. Hence, different conditions are required to use them in the analysis.}} and (iii) spectral cut-off ($q(\kappa_j , \alpha) = 1\{  \kappa_j ^2 \geq \alpha  \}$).   The sample counterpart of $\mathcal K_\alpha ^{-1}$, denoted $ \mathcal K_{n\alpha} ^{-1}$, is similarly defined by replacing $\{ \kappa_j , \varphi_j \}_{j \geq 1}$ in \eqref{kinv} with $\{ \hat\kappa_j , \hat\varphi_j \}_{j \geq 1}$, i.e., $ \mathcal K_{n\alpha} ^{-1} = \sum_{j=1} ^n \widehat \kappa_j ^{-1} q(\widehat \kappa_j , \alpha) \widehat \varphi_j \otimes \widehat \varphi_j$. Then, for $\alpha>0$,  the solution to \eqref{mod2:eq1} (denoted $\Pi_\alpha$) and its sample counterpart (denoted $\widehat \Pi_{\alpha}$) are given as follows. \begin{equation}
\Pi_{\alpha} = \mathbb E  [  Z_i \otimes Y_{2i}      ] \mathcal K_\alpha  ^{-1} =   \Pi_{n}\mathcal K  \mathcal K_{\alpha} ^{-1}  \quad\text{and}\quad\widehat \Pi_{\alpha} =  (n^{-1} \sum_{i=1} ^n Z_i \otimes Y_{2i} ) \  {\mathcal K}_{n \alpha}  ^{-1}   .  \label{eq:first-est}
\end{equation} 
     Although  $\Pi_\alpha $ and $\widehat{\Pi}_\alpha$ in \eqref{eq:first-est} act on a Hilbert space of infinite dimension, the predicted value of $Y_{2i} $ and the first-stage residual, which are computed using $\widehat\Pi_\alpha$ in \eqref{eq:first-est}, are still $d_e$-dimensional. Therefore, we can obtain our estimators by solving \eqref{model:5eq} as described  in Section~\ref{sec:finite}. 
    \subsubsection{Asymptotic Properties \label{sec:asy.pro}}
\begin{assumption}\label{ass1} \begin{enumerate*}[(i)]
		\item \label{ass1:prime} \label{r1pmp2c}\violet{Assumptions~\ref{ass1:finite},~\ref{ass3:finite} and~\ref{ass4} hold  for a bounded linear operator $\Pi_0 : \mathcal H  \hto \mathbb R^{d_e}$ and  with $ \Vert Z_i \Vert $ being replaced by $\Vert Z_i \Vert_{\mathcal H}$.} \item \label{ass7} For $\{\pi_{\ell , 0} \hspace{-0.2em}=\hspace{-0.2em} \Pi_{0} ^\ast e_{\ell}\}_{\ell=1} ^{d_e}$, there is $\rho \geq 1$ such that $ \sum_{j=1} ^\infty  \langle \varphi_j   , \pi_{\ell,0} \rangle_{\mathcal H}   ^2 /\kappa_{j} ^{2\rho } < \infty$, where $\Pi_0 ^\ast$ is the adjoint of $\Pi_0$.
	\end{enumerate*}
\end{assumption} 
\begin{assumption}\label{asskinv}
	  $\mathcal K_{\alpha} ^{-1}$ allows the representation in \eqref{kinv}. For $\kappa \geq 0$ and $\alpha >0$, the function $q  (\kappa , \alpha  ) $ in \eqref{kinv} satisfies  $ q (\kappa,\alpha)\in [0,1]$, $ \alpha q  (\kappa , \alpha  ) \leq   c \kappa $ and $\sup_{\kappa}\kappa^{2\tilde\rho} (q(\kappa , \alpha)-1) \leq$ $ c\alpha^{\min \{\tilde \rho,1\} } $ for some $\tilde \rho >0$ and $c>0$. Moreover, $\underset{\alpha \to 0}{\lim} q(\kappa , \alpha) = 1$ if $\kappa > 0$.
\end{assumption}

Assumption~\ref{ass1}.\ref{ass1:prime} is equivalent to the former conditions, except that the first-stage parameter and the norm of $Z_i$ are adapted to the general case. \violet{The vector $f_i$ can be viewed as the (infeasible) optimal instrument that summarizes the information from infinite-dimensional instrument $Z_i$.} \label{r1p8}\label{r4p3a}\violet{The $d_e \times d_e$ matrix $\Lambda_n$    determines the identification strength of $Z_i$ in the direction to each $\pi_{\ell,0}$. As an example, suppose that \begin{equation*}
		d_e = 2,\quad	 f_i =  \begin{bmatrix}
			 	 \langle Z_i, \pi_{1,0}\rangle _{\mathcal H} \\\langle Z_i, \pi_{2,0}\rangle _{\mathcal H} 
			 \end{bmatrix}, \quad \Lambda_n = \begin{bmatrix}
			 1 &0 \\ \ddot{\lambda} &1 
		 \end{bmatrix}\begin{bmatrix}
		 \sqrt{n} &0\\0& \mu_{2,n}
	 \end{bmatrix} , \quad\langle \pi_{1,0} ,\pi_{2,0}\rangle_{\mathcal H} = 0 .
		\end{equation*}Let $[ a]_{i}$ denote the $i$th row of vector $a$. Then, the first stage can be written by \begin{equation*}
	  [Y_{2i}]_1 = \langle Z_i, \pi_{1,0}\rangle_{\mathcal H} +[ V_i]_{1}, \quad\text{ and }\quad  [Y_{2i}]_2 = \ddot{\lambda}\langle Z_i,  \pi_{1,0}\rangle_{\mathcal H} + \frac{\mu_{2,n}}{\sqrt{n}} \langle Z_i, \pi_{2,0} \rangle_{\mathcal H} + [V_i]_2 .
	\end{equation*}}\violet{Thus, the reduced-form coefficient of the component of $Z_i$ with respect to $\pi_{1,0}$, $\langle Z_i, \pi_{1,0}\rangle_{\mathcal H}$, is strongly identified. In contrast, $[\beta]_2$ is strongly identified if $\mu_{2,n}=\sqrt{n}$, while weakly identified otherwise. Thus, $\Lambda_n$ allows for reduced-form coefficients to be associated with different identification powers. For further discussion when $\mathcal H = \mathbb R^{d_z}$, refer to \citet[pp.\ 48--49]{Chao_et_al}.}   

 Assumption~\ref{ass1}.\ref{ass7} is needed to identify $\Pi_0$ without the injectivity of $ \mathcal K$; see \cite{Carrasco2007} and \cite{Benatia2017}. As briefly mentioned in Section~\ref{sec:finite}, this assumption restricts the relative decaying rate of $\mathcal K$'s eigenvalues with respect to its Fourier coefficients $\{\langle\pi_{\ell,0},\varphi_j\rangle_{\mathcal H}\}_{j \geq 1} $. It is particularly essential for proving that  $n^{-1}\sum_{i=1} ^n \Vert \widehat g_{i,\alpha} - g_i \Vert ^2 =o_p(1)$, which plays an important role in obtaining our main results. Assumption~\ref{asskinv} describes the conditions on $\mathcal K_{\alpha} ^{-1}$. Many popular regularization methods, such as Tikhonov and spectral cut-off, satisfy the conditions (see \citealt[Section 3.3]{Carrasco2007}). Therefore, Assumption~\ref{asskinv} is not restrictive in practice. 

The following theorem states the consistency of $\widehat{\theta}_j$ in the general case. 
 \begin{theorem}\label{thm1:consistency}\normalfont
	Suppose that $\alpha \hto 0$ and $\mu_{m,n} ^2 \alpha\hto \infty$ as $n \hto \infty$, and Assumptions \ref{ass1} to \ref{asskinv} are satisfied. Then, $
	\widehat \theta_{ j} - \theta_{0}  \pto 0
	$ as $n \hto \infty$ for $j \in\{ M,N \}$.
\end{theorem}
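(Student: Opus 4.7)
The plan is to establish consistency via the standard M-estimator argument: show that $\theta_0$ uniquely maximizes the population criterion $\mathcal Q_j(\cdot, g_i)$ and that the feasible sample criterion $\mathcal Q_{jn}(\cdot, \widehat g_{i,\alpha})$ converges to it uniformly over the compact parameter space $\Theta$. Identification at $\theta_0$ is delivered by Assumption~\ref{ass3}: the conditional mean restriction $\mathbb E[y_i \mid Y_{2i}, V_i] = \Phi(g_i'\theta_0)$ pins down $\theta_0$, and strict concavity of $m_M$ (resp.\ strict convexity of $(y-\Phi(\cdot))^2$ around the conditional mean) together with positive definiteness of $\mathbb E[g_{0i} g_{0i}']$ (inherited from Assumption~\ref{ass3:finite}.\ref{ass33} applied in the Hilbert setting and the conditional moment assumption on $V_i$) yield uniqueness of the population maximizer for both $j=M$ and $j=N$. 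By a standard extremum-estimator consistency theorem (e.g., Newey--McFadden Theorem~2.1), the problem then reduces to verifying
\[
\sup_{\theta \in \Theta} \bigl|\mathcal Q_{jn}(\theta, \widehat g_{i,\alpha}) - \mathcal Q_j(\theta, g_i)\bigr| = o_p(1).
\]

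For this I would split the supremum into $(A)+(B)$, where $(A) = \sup_\theta |\mathcal Q_{jn}(\theta, \widehat g_{i,\alpha}) - \mathcal Q_{jn}(\theta, g_i)|$ captures the cost of replacing the infeasible $g_i$ with the regularized estimate $\widehat g_{i,\alpha}$, and $(B) = \sup_\theta |\mathcal Q_{jn}(\theta, g_i) - \mathcal Q_j(\theta, g_i)|$ is the usual sampling error. Term $(B)$ is handled by a standard uniform law of large numbers on the compact set $\Theta$: continuity of $m_j$ in $\theta$ together with the integrable envelope furnished by the moment conditions in Assumption~\ref{ass3} (in particular $\mathbb E\|Z_i\|^2 < \infty$ and $\mathbb E\|V_i\|^2 < \infty$) gives $(B)=o_p(1)$.

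The main obstacle is term $(A)$. Because $m_j$ is smooth in its second argument and $\Phi$ is Lipschitz with bounded derivative, a mean-value expansion together with Cauchy--Schwarz yields
\[
(A) \leq C \sup_{\theta \in \Theta} \|\theta\| \cdot \biggl( \frac{1}{n} \sum_{i=1}^n \bigl\| \widehat g_{i,\alpha} - g_i \bigr\|^2 \biggr)^{1/2},
\]
so it suffices to prove $n^{-1} \sum_{i=1}^n \|\widehat g_{i,\alpha} - g_i\|^2 = o_p(1)$, which is precisely the quantitative input highlighted in the discussion following Assumption~\ref{asskinv}. Since $\widehat g_{i,\alpha} - g_i$ is a fixed linear combination of the coordinates of $(\widehat \Pi_\alpha - \Pi_n) Z_i$, this reduces to bounding $\tr\bigl[(\widehat \Pi_\alpha - \Pi_n)\mathcal K_n (\widehat \Pi_\alpha - \Pi_n)^{\ast}\bigr]$, which I would analyse through the decomposition $\widehat \Pi_\alpha - \Pi_n = (\widehat \Pi_\alpha - \Pi_\alpha) + (\Pi_\alpha - \Pi_n)$ into a stochastic part and a regularization bias. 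The bias piece $\Pi_\alpha - \Pi_n = \Pi_n \sum_{j \geq 1}(q(\kappa_j,\alpha)-1)\, \varphi_j \otimes \varphi_j$ is controlled row by row by combining the source condition $\sum_j \langle \varphi_j, \pi_{\ell,0}\rangle^2/\kappa_j^{2\rho} < \infty$ in Assumption~\ref{ass1}.\ref{ass7} with the filter bound $\sup_\kappa \kappa^{2\tilde\rho}(q(\kappa,\alpha)-1) \leq c\alpha^{\min\{\tilde\rho,1\}}$ of Assumption~\ref{asskinv}; after applying the scaling $\Pi_n = \Lambda_n \Pi_0/\sqrt n$ from Assumption~\ref{ass1}.\ref{ass1:prime}, this contributes a term that vanishes as $\alpha \to 0$. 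The stochastic piece $\widehat\Pi_\alpha - \Pi_\alpha$ is handled via the perturbation estimates $\|\mathcal K_n - \mathcal K\|_{\HS} = O_p(n^{-1/2})$ together with standard bounds on $\|\mathcal K_{n\alpha}^{-1}\|_{\op}$ and $\|\mathcal K_{n\alpha}^{-1} - \mathcal K_\alpha^{-1}\|_{\op}$ that follow from Assumption~\ref{asskinv} and the consistency of $\{\widehat\kappa_j, \widehat\varphi_j\}$; the resulting contribution vanishes precisely under $\mu_{m,n}^2 \alpha \to \infty$. Combining $(A)=o_p(1)$, $(B)=o_p(1)$, and identification completes the proof.
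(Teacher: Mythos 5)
Your proposal is correct and follows essentially the same route as the paper: identification from Assumption~\ref{ass3} plus uniform convergence, reduced via Lipschitz bounds to showing $n^{-1}\sum_{i=1}^n\Vert \mathcal S_n^{-1}(\widehat g_{i,\alpha}-g_i)\Vert^2=o_p(1)$, which is then split into a regularization-bias term controlled by the source condition and filter bounds (giving $O_p(\alpha^{(2\rho+1)/2})$) and a stochastic term of order $O_p(1/(\mu_{m,n}^2\alpha))$. The only cosmetic difference is that you decompose through the population quantity $\Pi_\alpha$, whereas the paper works directly with $\widehat\Pi_\alpha Z_i-\Pi_n Z_i=\Pi_n(\mathcal K_n\mathcal K_{n\alpha}^{-1}-\mathcal I)Z_i+(n^{-1}\sum_j V_j\otimes Z_j)\mathcal K_{n\alpha}^{-1}Z_i$, which avoids having to control $\Vert\mathcal K_{n\alpha}^{-1}-\mathcal K_\alpha^{-1}\Vert$ at this stage.
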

 \label{r1p13}\violet{The condition  on $\mu_{m,n}$ and $\alpha $ in Theorem~\ref{thm1:consistency} requires a slower convergence rate of $\alpha$ compared to that in Theorem~\ref{thm1:consistency:finite}. This stronger condition is  required mainly due to the additional conditions on the regularized inverse in Assumption \ref{asskinv}.} In fact, some regularization methods, such as Tikhonov, spectral cut-off, and Landweber Fridman, satisfy  $\alpha^{1/2} q(\kappa,\alpha)\leq c \kappa$ for some $c >0$, and for them, the condition on $\mu_{m,n}^2 \alpha$ in Theorem~\ref{thm1:consistency} can be replaced by its weaker counterpart in Theorem~\ref{thm1:consistency:finite}.

 Next, we discuss the asymptotic distributions of the proposed estimators. As discussed in Section~\ref{sec:finite}, the asymptotic distributions cannot be straightforwardly obtained using  the approaches  in \cite{rivers1988limited}, \cite{Carrasco2012}, and \cite{Carrasco2015a,Carrasco2015}; this is not only because of some distinctive properties of our model which are described in the previous section, but also because of the fact that $\widehat{\gamma}_{i,\alpha}$ and $\widehat{V}_{i,\alpha}$ now depend on an infinite-dimensional parameter estimator, $\widehat \Pi_\alpha$. This necessitates  an approach to asymptotic analysis that differs from those in previous articles. Our setting is similar to the one studied by \cite{Chen2003}, in which the criterion function contains an infinite-dimensional parameter estimate. \cite{Chen2003} discuss the primitive conditions for the asymptotic normality of an estimator obtained in such a setting. Here, we make use of their approach and study the asymptotic distributions of our estimators.  
 
 To this end, we employ additional conditions. We first introduce a condition on the integrabiliy of the covering number of $\mathcal H$ with respect to $ \Vert\cdot\Vert_{\mathcal H}  $ (denoted $\mathfrak N (\epsilon,\mathcal H,  \Vert \cdot\Vert_{\mathcal H}  ) $), i.e., $\mathfrak N (\epsilon,\mathcal H,  \Vert \cdot  \Vert_{\mathcal H}  ) = \inf \{  n \in \mathbb N : \mathcal H \subset \bigcup_{k=1} ^n \mathcal B_k  (\epsilon ) \}$ where $\mathcal B_k  (\epsilon ) = \{ h \in \mathcal H :  \Vert h - h_k  \Vert_{\mathcal H}  < \epsilon , h_k \in \mathcal H \}$. 
\begin{assumption}\label{ass8}
	 $ \mathcal H$ satisfies $\int_0 ^\infty \sqrt{\log \mathfrak N  (\epsilon  , \mathcal H , \| \cdot \|_{\mathcal H}   )} d\epsilon <\infty$.
\end{assumption}
  In \citet[Condition 3.3]{Chen2003}, the covering number of an infinite-dimensional parameter space should satisfy the integrability condition similar to Assumption~\ref{ass8}. In our setting, the parameter space is given by the space of bounded linear operators from $\mathcal H$ to $\mathbb R^{d_e}$. Thus, it is not straightforward to verify \citepos{Chen2003} condition. In contrast,  Assumption~\ref{ass8} pertains only to the domain of our infinite-dimensional parameter, making it easier to verify. In the current paper, the restriction on the domain of $\Pi_0$ is sufficient, because its rank is finite (see Lemma~\ref{lem4} in the Online Supplement). This assumption is likely to hold in many practical choices of $\mathcal H$; for example, it holds if $Z_i$ is  finite-dimensional (as discussed in Section~\ref{sec:finite}), or \label{r1p14}\violet{is smooth enough and takes values in the space of functions with many finite derivatives}, see, e.g., \cite{Vaart1996}.

\begin{assumption}\label{ass10} 
\begin{enumerate*}[(i)]
\item\label{ass9} Assumption~\ref{ass9:finite}  holds;	\item  \label{ass10-1} For $\kappa \geq 0$ and $\alpha >0$,   $q(\kappa, \alpha)$ in \eqref{kinv} satisfies $ \alpha^{1/2} q  (\kappa , \alpha  ) \leq  c\kappa $ for some constant $c>0$;
	\item \label{ass10-2} Assumption \ref{ass1}.\ref{ass7} holds with $\rho \geq 3/2$.
\end{enumerate*}	
\end{assumption}  Assumption~\ref{ass10} is needed to address the bias discussed in Remark~\ref{rem1}.   Assumption~\ref{ass10}.\ref{ass10-1} implies that not all regularization methods are applicable in our setting. For example,  ridge regularization may fail to satisfy the condition since its conservative upper bound of $ q(\kappa,\alpha)$ is  $\kappa  O(  \alpha ^{-1})$. On the other hand,   other aforementioned  methods, such as Tikhonov and spectral cut-off, satisfy the condition. Hence, in spite of its popularity,   ridge regularization is less preferred  in this paper.  

 Assumption~\ref{ass10}.\ref{ass10-2}  can be relaxed if $\ran \mathcal V_j ^\ast$, the range of the adjoint operator of $ \mathcal V_j $,  belongs to $ \ran \mathcal K$, the range of $\mathcal K$, where $\mathcal V_j $ is  defined by $  \mathbb E  [ \dot m_{2j} \left( \theta_0, g_i\right) Z_i \otimes  g_{0i}  ]$.   Appendix~\ref{sec:rem} in the Online Supplement shows that the closure of $\ran \mathcal V_j ^\ast$, denoted $ \cl (\ran \mathcal V_{j}^\ast)$, is always a subspace of the closure of $\ran \mathcal K$, denoted $\cl (\ran \mathcal K)$. 
 If $Z_i$ is finite-dimensional or can be represented by a finite number of basis functions,   $\cl(\ran \mathcal V_{j}^\ast) = \ran \mathcal V_j ^\ast $ (resp.\ $ \cl (\ran \mathcal K)=\ran \mathcal K$). In this case, Assumption~\ref{ass10}.\ref{ass10-2} will be satisfied.  Hence, the assumption is not restrictive as it appears. 

  In this section, $\mathcal C_j$  is  a linear operator from $\mathcal H$ to $\mathbb R^{d_e}$ satisfying $  \mathcal K^{1/2}\mathcal C_j ^\ast = \mathcal V_{j} ^\ast $, and the $(\ell, k)$th element of $\mathcal J_{2,j}$ is given by $\sigma_{\psi_0} ^2 \langle \mathcal C_j ^{\ast} e_{\ell}, \mathcal C_{j}^{\ast}e_k\rangle_{\mathcal H}$. The operator $\mathcal C_j$ is unique and bounded, even when $\mathcal H$ is a Hilbert space of infinite dimension (\citealt[Theorem 1]{baker1973joint}). 
\begin{theorem}\label{prop4}\normalfont
	Suppose that $\alpha \hto 0$, $\mu_{m,n} ^2 \alpha\hto\infty$ and $\sqrt{n} \alpha\hto0$ as $n \hto\infty$, and  Assumptions \ref{ass1} to \ref{ass10} are satisfied.  Then, $ \mathcal W_{0j} ^{-1/2} \sqrt{n} \mathcal S_n ^\prime \left(\widehat \theta_j - \theta_{0}  \right) \dto \mathcal N \left( 0, \mathcal I_{2d_e}  \right)$ for $j \in \{M,N\}$, where $  \mathcal W_{0j} =  (\mathcal S_n ^{-1}\Gamma_{1,0j}\mathcal S_{n}^{\prime -1} ) ^{-1} \mathcal J_j (\mathcal S_n ^{-1}\Gamma_{1,0j} \mathcal S_n ^{\prime -1} ) ^{-1}$.  
\end{theorem}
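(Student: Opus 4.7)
My plan is a Taylor expansion of the second-stage score combined with the semiparametric M-estimation framework of \cite{Chen2003}, which is designed exactly to accommodate an infinite-dimensional nuisance parameter such as $\widehat\Pi_\alpha$. Writing the second-stage FOC as $\Psi_n(\widehat\theta_j,\widehat\Pi_\alpha)=0$, where $\Psi_n(\theta,\Pi) = n^{-1}\sum_{i=1}^n \partial m_j(\theta,g_i(\Pi))/\partial\theta$ and $g_i(\Pi)=((\Pi Z_i)',(Y_{2i}-\Pi Z_i)')'$, a mean-value expansion about $\theta_0$ yields
\[
\sqrt{n}\,\mathcal S_n'(\widehat\theta_j-\theta_0) = -\Bigl[\mathcal S_n^{-1}\tfrac{\partial\Psi_n}{\partial\theta}(\tilde\theta,\widehat\Pi_\alpha)\mathcal S_n^{\prime-1}\Bigr]^{-1}\sqrt{n}\,\mathcal S_n^{-1}\Psi_n(\theta_0,\widehat\Pi_\alpha).
\]
It then suffices to establish (a) $\mathcal S_n^{-1}\tfrac{\partial\Psi_n}{\partial\theta}(\tilde\theta,\widehat\Pi_\alpha)\mathcal S_n^{\prime-1}\pto \mathcal S_n^{-1}\Gamma_{1,0j}\mathcal S_n^{\prime-1}$ along the triangular sequence, and (b) $\sqrt{n}\,\mathcal S_n^{-1}\Psi_n(\theta_0,\widehat\Pi_\alpha)\dto\mathcal N(0,\mathcal J_j)$. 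Slutsky's theorem then completes the argument.

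Part (a) is comparatively routine. Theorem~\ref{thm1:consistency} provides $\widehat\theta_j\pto\theta_0$, and the same proof delivers the by-product $n^{-1}\sum_i\|\widehat g_{i,\alpha}-g_i\|^2=o_p(1)$. With Assumption~\ref{ass10}.\ref{ass9} bounding the relevant moments of $V_i$ and $Z_i$ and keeping $\Phi(g_i'\theta)$ away from $0$ and $1$, a uniform law of large numbers on the compact $\Theta$ gives the claimed Hessian limit.

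The heart of the proof is (b), for which I would split
\[
\sqrt{n}\,\mathcal S_n^{-1}\Psi_n(\theta_0,\widehat\Pi_\alpha) = \underbrace{\sqrt{n}\,\mathcal S_n^{-1}\Psi_n(\theta_0,\Pi_n)}_{T_{1n}} + \underbrace{\sqrt{n}\,\mathcal S_n^{-1}\bigl[\Psi_n(\theta_0,\widehat\Pi_\alpha)-\Psi_n(\theta_0,\Pi_n)\bigr]}_{T_{2n}}.
\]
Here $T_{1n}$ is a normalized sum of iid mean-zero terms $\dot m_j(\theta_0,g_i)\,g_{0i}$ with $g_{0i}=(f_i',V_i')'$, so a multivariate CLT (under Assumptions~\ref{ass1}--\ref{ass10}) gives $T_{1n}\dto\mathcal N(0,\mathcal J_{1,j})$. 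For $T_{2n}$ I would (i) invoke stochastic equicontinuity on a shrinking $L^2$-neighborhood of $\Pi_n$ to replace $T_{2n}$ by the population linearization $\sqrt{n}\,\mathcal S_n^{-1}\mathcal V_j(\widehat\Pi_\alpha^{\ast}-\Pi_n^{\ast})$; (ii) split $\widehat\Pi_\alpha-\Pi_n=(\widehat\Pi_\alpha-\Pi_\alpha)+(\Pi_\alpha-\Pi_n)$, where the first summand admits the iid representation $\bigl(n^{-1}\sum_i V_i\otimes Z_i\bigr)\mathcal K_\alpha^{-1}$ up to remainders; and (iii) apply Baker's theorem to obtain the bounded operator $\mathcal C_j$ satisfying $\mathcal K^{1/2}\mathcal C_j^{\ast}=\mathcal V_j^{\ast}$. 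The linearized $T_{2n}$ then becomes a second iid sum with covariance $\sigma_{\psi_0}^2\langle \mathcal C_j e_\ell,\mathcal C_j e_k\rangle=\mathcal J_{2,j}$; a joint CLT for $(T_{1n},T_{2n})$ yields $\mathcal J_j=\mathcal J_{1,j}+\mathcal J_{2,j}$.

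The main obstacles are threefold. First, establishing stochastic equicontinuity for the score class indexed by the infinite-dimensional $\Pi$: the integrability of $\sqrt{\log \mathfrak N(\epsilon,\mathcal H,\|\cdot\|)}$ in Assumption~\ref{ass8}, together with the fact that $\Pi_0$ is a finite-rank operator (so the effective nuisance lives on a finite-dimensional subspace of $\mathcal H$), reduces the problem to a classical Donsker argument. Second, controlling the regularization bias: since $\Pi_\alpha-\Pi_n=-\Pi_n(\mathcal I-\mathcal K_\alpha^{-1}\mathcal K)$, the source condition in Assumption~\ref{ass10}.\ref{ass10-2} (with $\rho\ge 3/2$) and the qualification bound $\sup_\kappa\kappa^{2\tilde\rho}(q(\kappa,\alpha)-1)\le c\,\alpha^{\min\{\tilde\rho,1\}}$ in Assumption~\ref{asskinv} ensure that $\sqrt{n}\,\mathcal S_n^{-1}\mathcal V_j(\Pi_\alpha^\ast-\Pi_n^\ast)$ is $o_p(1)$ precisely under $\sqrt{n}\alpha\to 0$. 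Third, dominating the stochastic remainder from $\widehat\Pi_\alpha-\Pi_\alpha$ after linearization: Assumption~\ref{ass10}.\ref{ass10-1} gives $\alpha^{1/2}\kappa^{-1}q(\kappa,\alpha)\le c$, and combined with $\mu_{m,n}^2\alpha\to\infty$ this is enough to dominate $\mathcal K_{n\alpha}^{-1}$ applied to the relevant centered sums. These three ingredients, together with the Hessian convergence in (a), combine via Slutsky's theorem to deliver the advertised asymptotic normality.
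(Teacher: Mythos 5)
Your proposal follows essentially the same route as the paper: the proof is an application of Theorem 2 of \cite{Chen2003}, with the Hessian convergence, the covering-number/Donsker argument for stochastic equicontinuity (Assumption~\ref{ass8} plus the finite rank of $\Pi_0$), the linearization of the score in $\Pi$ via the Fr\'echet derivative, Baker's theorem yielding $\mathcal C_j$ with $\mathcal K^{1/2}\mathcal C_j^\ast=\mathcal V_j^\ast$, and the split of $\widehat\Pi_\alpha-\Pi_n$ into a regularization-bias piece killed by $\sqrt{n}\alpha\to 0$ under the source condition and an iid piece contributing $\mathcal J_{2,j}$ — all matching the paper's Lemmas on the primitive conditions and its three-step treatment of $\sqrt{n}\,\mathcal S_n^{-1}\Gamma_{2,0j}(\widehat\pi-\pi_n)$. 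The only detail left implicit is that the joint CLT gives $\mathcal J_j=\mathcal J_{1,j}+\mathcal J_{2,j}$ with no cross term because $\mathbb E[m_{1j}(\theta_0,g_i)\mid x_i,V_i]=0$ makes the two sums orthogonal, which is a one-line observation.
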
  
 
 The asymptotic covariance $\mathcal W_{0j}$ is nonsingular irrespective of the dimension of $Z_i$. Thus, we can conduct inference on $\theta_0$ and $\varsigma'\theta_0$ as discussed in Remarks~\ref{rem:tmp1} and \ref{rem:tmp2}. 
 
\subsection{Estimation of Asymptotic Variances and Average Structural Functions}\label{sec:asy.var}
\violet{ Remark~\ref{rem:tmp1} tells us that the Wald testing procedure enables us to conduct inference on $\theta_0$ without prior knowledge of $\mathcal S_n$, as long as $\widetilde{W}_{j}$, a consistent estimator of $\mathcal{W}_{0j}$ after proper normalization, is available. $\widetilde{W}_j$ is an essential input for the testing procedure. In this section, we discuss its example.}\label{r1p9}

\violet{Let $\widehat \Gamma_{1,j} = n^{-1}\sum_{i=1} ^n \dot m_{2j} ( \widehat \theta_j,\widehat g_{i,\alpha}  ) \widehat g_{i,\alpha} \widehat g_{i,\alpha} '$, $\widehat {\mathcal J}_{1,j} = n^{-1}\sum_{i=1} ^n  m_{1j} ^2  (\widehat \theta_j, \widehat g_{i,\alpha} ) \widehat g_{i,\alpha} \widehat g_{i,\alpha} '$, 
and $\widehat {\mathcal J}_{2,j} = \widehat {\sigma}_{j}^2 \widehat {\mathcal V}_{jn}  \mathcal K_{n\alpha} ^{-1}\mathcal K_n \mathcal K_{n\alpha} ^{-1}  \widehat {\mathcal V}_{jn}  ^\ast$ where $ \widehat {\sigma}_{j}^2 = n^{-1}\sum_{i=1} ^n  (\widehat \psi_j ' \widehat V_{i,\alpha} )^2 $ and $ \widehat {\mathcal V}_{jn} =n^{-1} \sum_{i=1} ^n \dot m_{2j}  ( \widehat \theta_j , \widehat g_{i,\alpha } ) Z_i \otimes \widehat g_{i,\alpha} $. 
The   theorem below summarizes Lemmas~\ref{lem8}--\ref{lem10} in the Online Supplement. } 
\begin{theorem}\normalfont\label{prop6}
	Let $\widehat {\mathcal W}_{j } =  \widehat \Gamma_{1,j} ^{-1} (\widehat {\mathcal J}_{1,j} + \widehat {\mathcal J}_{2,j}) \widehat \Gamma_{1,j} ^{-1}   $ and suppose that the conditions in Theorem \ref{prop4} hold. Then, for each $j$, $
	 \Vert \mathcal S_n ^\prime\widehat {\mathcal W}_{j}\mathcal S_n- \mathcal W_{0j}  \Vert_{\HS} = o_p(1)
	$, \violet{where $\Vert  A \Vert_{\HS}= (\tr(A'A))^{1/2}$ for a matrix $A$.} \label{r1mpmp1}
\end{theorem} 
\begin{remark}\normalfont \label{rem:asf}
	We note that interest often lies in how the outcome probability changes according to a change in $Y_2$. In practice, this is often measured by two estimates, the average structural function (ASF) and the average partial effect (APE), each of which is given by \[{\text{ASF}} (y_{2}) := \frac{1}{n} \sum_{i=1} ^n\Phi  ( y_{2} ' \widehat \beta + \widehat V_{i,\alpha} ' \widehat \psi   )\quad\text{and}\quad\text{APE} (y_2) := \frac{1}{n} \sum_{i=1} ^n \phi (y_{2} ' \widehat \beta + \widehat V_{i,\alpha} ' \widehat \psi  ) \widehat{\beta} ,\] see \cite{Wooldridge2010}, \cite{Blundell_Powell2004} and \cite{Rothe2009}. The ASF (resp.\ APE) is a consistent estimator of $\mathbb E_{V}[\mathbb P(y=1 | Y_2 = y_2, V)]$ (resp.\ $\mathbb E_{V}[  (\partial\mathbb P(y = 1 | Y_2, V)/\partial Y_2 )_{Y_2 = y_2}  ]$).  
\end{remark} 
\subsection{ Mean Squared Error and the Choice of $\alpha$  }\label{sec: mse}
 A practical challenge in implementing our estimation procedure is the selection of the regularization parameter. A theoretically  grounded approach would be to choose $\alpha$ in such a way as to minimize the conditional mean squared error (MSE), defined by $\mathbb E[ \Vert \hat{\theta}_j -\theta_0 \Vert^2 |x ]$, as in  \cite{donald2001choosing} and \cite{Carrasco2012}.  However,  our analysis of the MSE is more complicated than theirs because of  the nonlinearity and  the use of the estimated control function  whose asymptotic properties rely on a possibly infinite-dimensional random element. In particular, we need more conditions to control for the linearization error. Thus, to reduce the complexity, we consider the conditional MSE of an alternative estimator $\overline\theta$, defined as the solution to a linear problem. In this section, the subscript $j$ indicating estimators is suppressed to simplify the explanation. \violet{In  the simple case considered in Section~\ref{sec:finite}, $\overline\theta$ is given by the solution to the following: \begin{equation*}
\mathcal{L}(\overline{\theta})=\frac{\partial \mathcal{Q}_{n}(\theta_0 ,g_i)  }{ \partial \theta} + \frac{\partial^2 \mathcal{Q}_{n} (\theta_0 ,g_i)}{ \partial \theta\partial \theta'} (\overline\theta - \theta_0) + \frac{\partial ^2\mathcal{Q}_{n}(\theta_0 ,g_i)}{ \partial \theta\partial \pi'}  (\hat{\pi}_\alpha - \pi_n) = 0,
\end{equation*} where $\pi$,  $\hat{\pi}_\alpha$, and $\pi_n$ denote  the vectorizations of $\Pi'$, $\widehat\Pi_{\alpha}'$ and $\Pi_{n}'$, respectively.}
 \label{r4p5}\violet{We defer its formal definition for the general case to \eqref{thetaoverline: def} in Appendix~\ref{app: mse} of the Online Supplement, as it involves additional notations.}  In our setting, the asymptotic distribution of $ \overline\theta$ is equivalent to that of $ \widehat{\theta} $. Hence, studying $\overline \theta$ would be enough for the purpose of discussing a way to choose $\alpha$.  

 The proposition below summarizes an upper bound of $\overline\theta$'s conditional MSE; we can provide more details  if  $\mu_{jn}$'s are known. While the following result may look conservative, it is still useful in practice due to two reasons:  (i) the unknown nature of $\{\mu_{jn}\}_{j=1} ^{2d_e}$ and (ii) the fact that the rate below represents the outcome when there is at least one strong instrument.
\begin{proposition}\label{prop: mse}\normalfont \violet{Suppose that the conditions in Theorem~\ref{prop4} hold.} Then, $\mathbb E[ \Vert \mathcal S_n ' (\overline\theta - \theta_0)\Vert ^2 |x]= O_p(n^{-1}\alpha^{-1/2} + \alpha^2).$
\end{proposition} 
The MSE of $\overline \theta$ turns out to be minimized where the MSE of  $\widehat{\Pi}_\alpha$ is minimized. Due to this, the above convergence rate  is similar to that in the functional linear model studied by \citet[Proposition~2]{Benatia2017}, who suggested choosing $\alpha$ to minimize the conditional MSE. In our setting, the optimal $\alpha$ satisfies  $\alpha \sim n^{-2/5}$. However, this rate may not be preferred, because it does not ensure the asymptotic normality of our estimators. Our asymptotic normality results require a fast convergence of $\widehat{\Pi}_n$'s bias, even at the cost of a larger variance; if we choose $\alpha$ as in \cite{Benatia2017}, the limiting distribution of  $\hat{\theta}_j$ will depend on the regularization bias of the first-stage estimator, which may not be desirable in practice. Hence, a slower convergence of $\overline \theta$ (and that of $\hat \theta_j$), resulting from not choosing the optimal rate of $\alpha$, should be understood as the cost of implementing  inference with a possibly infinite-dimensional instrumental variable.    In spite of its limitation, the above criterion still provides us with a practical guideline to select the regularization parameter. We defer its detailed discussion to Appendix~\ref{stepwise} in the Online Supplement.

\section{Simulation Study\label{sec:sim}} 
In this section, we study the finite sample performance of our estimators via Monte Carlo simulations when  $Z_i \in \mathbb R^{K}$. The simulation results for the general case is separately discussed in Appendix~\ref{sec: func} of the Online Supplement. 
\subsection{Experiment 1: Gaussian Instrument \label{sec:sim1}}
We consider the following data generating process (DGP).
\begin{equation}
\begin{aligned}
y_i & = 1 \{  y_{2i}  \beta_1 + z_{1i} \beta_{2} \geq u_i   \},\\
y_{2i}  &=  Z_i '\pi + v_i ,
\end{aligned} \quad \qquad
\begin{aligned}
\begin{bmatrix}
u_i \\v_i
\end{bmatrix} \sim_{\text{iid}}\mathcal N \left(\begin{bmatrix}
0\\0
\end{bmatrix}, \begin{bmatrix}
\sigma_1 ^2  & -\rho \sigma_1 \sigma_2 \\ -\rho\sigma_1 \sigma_2 & \sigma_2 ^2
\end{bmatrix}\right),
\end{aligned}  \label{eq:sim}
\end{equation}
where $\beta_1 = 1$, $\beta_2 = -1$, $\rho =  0.6$, and $\sigma_1 ^2 = 1 /  (1-\rho ^2 )$. Thus, $\eta_i = u_i +\rho \sigma_2 ^{-1}\sigma_1 v_i \sim_\text{iid}\mathcal N  (0,1 )$.  The variable $z_{1i}$ is the first element of $Z_{i}\in \mathbb R^K$, and $Z_i \sim_{\text{iid}} \mathcal N (0, \Sigma_Z)$ where   $\Sigma_Z =  [ \sigma_z ^2   \rho_z ^{ |i-j |}   ]$, $\sigma_z ^2 = 0.5$ and $\rho_z = 0.7$. 
We let $\sigma_2^2 = 1-\pi ' \Sigma_Z \pi$ so that the unconditional variance of $y_{2i}$ becomes 1.

We set $\pi =  c_\ast (\iota_{\lfloor s K \rfloor} ', 0_{K-\lfloor s K \rfloor})'$, where $\lfloor \cdot \rfloor$ is the floor function. The parameter $s$ determines the sparsity of $\pi$. We consider sparse ($s=0.2$) and dense ($s=0.8$) cases.    We set $K$ to $ 50$.  The values of $c_\ast $ are chosen to have specific values of the concentration parameter,  $\mu^2 = n \pi ' \Sigma_Z \pi/ (1-\pi ' \Sigma_Z \pi  )$, as in \cite{Belloni}. We consider two values of $\mu^2$: 30 and 60. Under the design, the infeasible first-stage F test statistic, given by $\mu^2 / \lfloor s K \rfloor$, ranges from 0.375 to 6. \label{r2p1} \violet{The concentration parameter may not be a perfect measure of instrument strength for binary response models. However, considering its relevance in indicating the weakness of instruments in linear models,   standard  approaches to estimating endogenous binary response models, such as that of \cite{rivers1988limited}, may not produce reliable estimation results when $\mu^2$ is small. A similar scenario with small $\mu^2$ has been considered by \cite{Magnusson2010} and \cite{Dufour2018} who study weak instruments in  endogenous binary response models. Later in the section, we experiment with different values of $\mu^2$, and the results confirm the distortion of an existing estimator, similar to the TSLS, when $\mu^2$ is small. Hence, this simulation setting would be enough to study finite sample performance of our estimators when the existing estimation approach does not work well.}  


We consider two RCMLEs computed with different regularization methods: the TRCMLE (Tikhonov) and SCRCMLE (spectral cut-off).\footnote{The results from the RNLSE are similar to those from the RCMLE. Hence, they are omitted.}  Their regularization parameters are chosen as described in Appendix~\ref{stepwise} in the Online Supplement with $\alpha$ being selected from 25 equally spaced points between $\mathtt{c}_a n^{-0.6}0.001$ and $\mathtt{c}_a n^{-0.6}$.\footnote{Specifically, we compute  \eqref{eq: mse: tmp} with Generalized cross-validation in \citet[]{Carrasco2012} for the RNLSE.} The constant $\mathtt{c}_a$ is set to $\overline{\mathtt{c}}_a \max\{ 0.1, 1/\delta   \}$ where $\delta  $ is the first-stage F test. The constant $\overline{\mathtt{c}}_a$ is given by $\tr(\Sigma_Z ' \Sigma_Z)^{1/2} $  for the TRCMLE and its square for the SCRCMLE. This is designed to have a larger regularization parameter when a smaller concentration parameter is given.  We compare the performance of our estimators with  four alternatives: \citepos{rivers1988limited} 2SCMLE, the  Inf.2SCMLE (to be detailed shortly), the naive probit estimator, and the Tikhonov-regularized TSLS estimator (TTSLS).  Similar to the TSLS, the 2SCMLE is expected to be biased as $K$ increases. To separate any effect of this bias from that induced by weak instruments, we compute the 2SCMLE using  instruments with non-zero first-stage coefficients only. This modified estimator is called the Inf.2SCMLE. If there is no distortion from weak instruments, the infeasible 2SCMLE is likely to perform well when $\mu^2 = 30$ and $s=0.3$. In contrast, if the estimator is not affected by the many instruments bias, it will perform well when $\mu^2 = 60$ and $s=0.8$.  
The  TTSLS  is similar to the estimators proposed by \cite{Carrasco2012} and \cite{Carrasco2015a, Carrasco2015}, except for the use of the first-stage residual as an additional regressor in the second stage.\footnote{The variance is computed with the heteroscedasticity-robust covariance estimator $\text{HC}_1$ in \cite{mackinnon1985}. We use the regularization parameter chosen for the TRCMLE.}  This estimator is considered because of the popularity of linear probability models in empirical studies. As its true parameter is different from that of all the other estimators, we evaluate its performance at $\mathbb E  [\phi (g_i ' \theta )  ] \beta_1$ so that a reasonable comparison can be made (see \citealt[Chapter 14]{cameron2005microeconometrics}). For each estimator, we compute the median bias (Med.Bias) and median absolute deviation (MAD). We also test $H_0: \beta_1 = 1$ at 5\% significance level as explained in Remark~\ref{rem:tmp2}. The rejection rate is reported in   columns labeled ``RP'' in Table~\ref{tab1}.
\begin{table}[h!]
		\caption{\small{Simulation results (Gaussian instruments)}}\label{tab1}
	\vspace{-.5em}{\footnotesize{
			\begin{tabular*}{1\textwidth}{@{\extracolsep{\fill}}lllrccrcc}
			\midrule &&&\multicolumn{3}{c}{$\mu^2=30$}&\multicolumn{3}{c}{$\mu^2=60$}\\ \cmidrule{4-6}\cmidrule{7-9}
			$n$&$s$&&Med.Bias&MAD&RP&Med.Bias&MAD&RP\\ \midrule
	\multirow{12}{*}{$200$}&\multirow{6}{*}{$0.2$}&TRCMLE & 0.006 & 0.287 & 0.045 & 0.042 & 0.221 & 0.058 \\ 
&&	SCRCMLE & -0.047 & 0.243 & 0.040 & -0.039 & 0.189 & 0.050 \\ 
&&	Inf.2SCMLE & -0.221 & 0.196 & 0.134 & -0.128 & 0.166 & 0.096 \\ 
&&	2SCMLE & -0.559 & 0.121 & 0.840 & -0.461 & 0.113 & 0.729 \\ 
&&	Probit & -0.729 & 0.068 & 0.788 & -0.712 & 0.068 & 0.839 \\ 
&&	TTSLS & 0.078 & 0.273 & 0.110 & 0.073 & 0.204 & 0.136 \\ 
\cmidrule {2-9}  	
&\multirow{6}{*}{$0.8$}& TRCMLE & 0.033 & 0.258 & 0.053 & 0.032 & 0.195 & 0.062 \\ 
&&SCRCMLE & -0.002 & 0.226 & 0.048 & -0.015 & 0.170 & 0.047 \\ 
&&Inf.2SCMLE & -0.468 & 0.133 & 0.651 & -0.358 & 0.123 & 0.513 \\ 
&&2SCMLE & -0.516 & 0.124 & 0.776 & -0.413 & 0.117 & 0.649 \\ 
&&Probit & -0.711 & 0.070 & 0.824 & -0.683 & 0.069 & 0.892 \\ 
&&TTSLS & 0.074 & 0.244 & 0.110 & 0.047 & 0.176 & 0.131 \\ 
\midrule\multirow{12}{*}{$400$}&\multirow{6}{*}{$0.2$}&TRCMLE & -0.035 & 0.235 & 0.045 & 0.011 & 0.187 & 0.042 \\ 
&&SCRCMLE & -0.059 & 0.209 & 0.040 & -0.053 & 0.168 & 0.049 \\ 
&&Inf.2SCMLE & -0.209 & 0.182 & 0.130 & -0.128 & 0.152 & 0.095 \\ 
&&2SCMLE & -0.526 & 0.117 & 0.845 & -0.419 & 0.108 & 0.721 \\ 
&&Probit & -0.738 & 0.046 & 0.968 & -0.730 & 0.046 & 0.975 \\ 
&&TTSLS & 0.029 & 0.233 & 0.072 & 0.048 & 0.177 & 0.091 \\ 
\cmidrule {2-9}  	
&\multirow{6}{*}{$0.8$}&TRCMLE & -0.024 & 0.221 & 0.046 & 0.010 & 0.164 & 0.046 \\ 
&&SCRCMLE & -0.020 & 0.203 & 0.034 & -0.017 & 0.151 & 0.050 \\ 
&&Inf.2SCMLE & -0.450 & 0.119 & 0.662 & -0.325 & 0.108 & 0.496 \\ 
&&2SCMLE & -0.489 & 0.113 & 0.794 & -0.371 & 0.104 & 0.640 \\ 
&&Probit & -0.729 & 0.048 & 0.977 & -0.711 & 0.049 & 0.989 \\ 
&&TTSLS & 0.025 & 0.216 & 0.078 & 0.036 & 0.157 & 0.094 \\ \midrule
		\end{tabular*}
	}}
 \vspace{-1.5em}	\flushleft{\scriptsize{ Notes: The simulation results based on 2,000 replications are reported. Each cell reports the median bias (Med.Bias), median absolute deviation (MAD), and rejection probability at 5\% significance level (RP).} }\vspace{-0.8em}
\end{table}

In Table~\ref{tab1}, our estimators exhibit better performance compared to the 2SCMLE and Inf.2SCMLE. Specifically, the 2SCMLE is severely biased and does not have reasonable rejection rates. In contrast, our estimators have smaller median bias and correct size in most cases considered in the table. 
 Another interesting observation is  that the SCRCMLE outperforms the TRCMLE in the dense design, although the difference diminishes with a larger sample size. The better performance of the SCRCMLE in the small sample may be related to the relative stability of the regularized inverse when the signal is dense. Similar results can be found in   \cite{seong2021}.  

The Inf.2SCMLE does not perform well when the signal is either dense ($s=0.8$) or weak ($\mu^2= 30$) in  Table~\ref{tab1}. The distortion in the dense design may be related to the many instruments bias;  the estimator uses 40 instruments when the signal is dense. Meanwhile, the distortion when $\mu^2= 30$  casts doubt on the reliability of  \citepos{rivers1988limited} 2SCMLE when the concentration parameter is small. This is consistent with the ASF estimates reported in Figure~\ref{sim:fig}. In the figure, the ASFs of the 2SCMLE shift toward those of the probit estimator as $\mu^2$ decreases.

In Table~\ref{tab1},  the TTSLS and the TRCMLE produce similar estimation results. However, the TTSLS always exhibits rejection rates above the nominal level. Furthermore, as reported in Figure~\ref{sim:fig},  the TTSLS does not provide good ASF estimates, especially at the boundaries of the support of $y_2$, whereas the TRCMLE yields estimates close to the true values even at the boundaries.


Next, we examine the performance of the estimators under the same DGP but with different values of $K$, $\mu^2$ and $s$. In the first experiment, we change $K$ from 10 to 100 while keeping $s$ and $\mu^2$ fixed at $ 0.5$ and $ 50$, respectively. Secondly, we set $K=50$ and $\mu^2 = 50$, and change $s$ from 0.1 to 1. Lastly, we fix $K= 50$ and $s = 0.5$, and change $\mu^2$ from 25 to 250. The sample size is  200. We focus on three estimators: the TRCMLE, 2SCMLE, and probit estimator.

The estimation results are reported in Figure~\ref{fig}. The TRCMLE appears to have  the smallest median bias and correct size in all the considered cases. On the other hand, as $K$ increases, the 2SCMLE and the probit estimator tend to perform similarly. This may not be surprising, since in the linear model, it is known that the TSLS  shifts  toward the OLS estimator as $K$ increases. \violet{A similar convergence is observed as $\mu^2$ decreases, which suggests that \citepos{rivers1988limited} estimator may not be reliable when the concentration parameter is small.}
\subsection{Factor Model\label{sec:factor}}
Here, we consider the case where only  $\widetilde Z_i \in \mathbb R^{\tilde{K}}$ is observed rather than the   true instrument $Z_i \in\mathbb R^K$. Let $M$ be a $ \widetilde K \times K$ matrix consisting of  elements that are randomly drawn from $\text{Unif}[-1,1]$. This matrix is fixed across simulations, and  let $\widetilde Z_i =M  Z_i + \widetilde V_i$ where $\widetilde V_i \sim_{\text{iid}}\mathcal N  (0, \widetilde\sigma ^2 \mathcal I_{\widetilde K}  )$ and $\widetilde\sigma  = 0.3$. The measurement error $\widetilde V_i$ is independent of $u_i$ and $V_i$. The parameters are set as follows: $K$=5, $\widetilde{K}=100$, $s=1$, $\rho_z = 0$, and $\sigma_z = 1$. We consider two different values of $\mu^2$: 30 and 60.

\begin{table}[h!] 
	\caption{\small{Simulation results (Factor model, $\widetilde{K}=100$, $K = 5$)\label{tab:fac}}}
	\vspace{-.5em}	{\footnotesize
		\begin{tabular*}{1\textwidth}{@{\extracolsep{\fill}}lllrccrcc}
		\midrule &&&\multicolumn{3}{c}{$\mu^2=30$}&\multicolumn{3}{c}{$\mu^2=60$}\\ \cmidrule{4-6}\cmidrule{7-9}
 	$n$& &&Med.Bias&MAD&RP&Med.Bias&MAD&RP\\ \midrule
	\multirow{6}{*}{$200$}& &TRCMLE & -0.085 & 0.235 & 0.086 & -0.055 & 0.190 & 0.073 \\ 
	&&SCRCMLE & -0.015 & 0.241 & 0.038 & -0.012 & 0.187 & 0.039 \\  
	&&Inf.2SCMLE & -0.062 & 0.228 & 0.059 & -0.027 & 0.184 & 0.045 \\ 
	&&2SCMLE & -0.637 & 0.099 & 0.968 & -0.576 & 0.097 & 0.945 \\ 
	&&Probit & -0.715 & 0.071 & 0.792 & -0.693 & 0.073 & 0.841 \\ 
	&&TTSLS & -0.093 & 0.220 & 0.121 & -0.064 & 0.169 & 0.122 \\  \midrule
	\multirow{6}{*}{$400$}
	&&TRCMLE & -0.080 & 0.216 & 0.062 & -0.036 & 0.165 & 0.060 \\ 
	&&SCRCMLE & -0.044 & 0.220 & 0.042 & -0.030 & 0.164 & 0.051 \\ 
 	&&Inf.2SCMLE & -0.080 & 0.212 & 0.058 & -0.037 & 0.164 & 0.055 \\ 
	&&2SCMLE & -0.625 & 0.092 & 0.989 & -0.544 & 0.090 & 0.973 \\ 
	&&Probit & -0.738 & 0.049 & 0.953 & -0.725 & 0.050 & 0.971 \\ 
	&&TTSLS & -0.074 & 0.216 & 0.102 & -0.042 & 0.159 & 0.097 \\ \midrule
	\end{tabular*}
}	 \vspace{-1.5em}	\flushleft{\scriptsize{ Notes: The simulation results based on 2,000 replications are reported. Each cell reports the median bias (Med.Bias), median absolute deviation (MAD), and rejection probability at 5\% significance level (RP).} } \vspace{-0.8em}
\end{table}


The Inf.2SCMLE is computed with $Z_i$, while the others are computed using $\widetilde Z_i$. The simulation results are reported in Table~\ref{tab:fac} (the estimated ASFs are similar to those in Figure~\ref{sim:fig} and reported in the Online Supplement). The results closely  align with those in Section~\ref{sec:sim1}; our estimators exhibit considerably smaller median bias and reasonable rejection rates compared to the others. The SCRCMLE tends to have a larger bias but a smaller MAD with a larger sample size, suggesting a potential bias-variance tradeoff. The Inf.2SCMLE, computed with $Z_i$, performs considerably well, whereas the 2SCMLE exhibits a substantially large median bias and poor size control.

\section{Empirical Example: \cite{Bastian2018}\label{sec:emp}}
Recent studies on the EITC have shown that it has a considerable impact on   its recipients (see \citealp{meyer2001welfare}; \citealp{Eissa2004}) and their children (see \citealp{Dahl2012}). 
 In particular, \cite{Bastian2018} provided empirical evidence that increasing  EITC exposure positively impacts family earnings, consequently leading to better long-term academic achievement for children. 
 We revisit their work and complement it as follows: First, we reexamine the weakness of their instruments. The first-stage F statistics reported in \citet[Table 5]{Bastian2018} range between 3.2 and 5.1, suggesting that their instruments are potentially weak. However, this measure may not be perfect considering the nonlinear nature of the binary response model. Thus, we use the test proposed by \cite{frazier2020weak} to study their weakness. Second, we utilize additional instruments to increase estimation efficiency and  account for heterogeneous effects of EITC exposure on family income depending on the state of residence.  Lastly, we employ a parametric binary response model instead of the linear model; as detailed in Example~\ref{example2}, this approach is expected to produce  better estimation results from a theoretical perspective.  \label{r2p1ab}
 

We use \citepos{Bastian2018} data from the 1968-2013 waves of the Panel Study of Income Dynamics.    Our model  is similar to theirs and is given as follows:
\begin{equation} 
 y_i =  {1} \{  \beta_0 + I_{i}'\beta_1  + W_{1i} ' \beta_2 + W_{2i} '\beta_3 \geq u_i  \} ,\quad\quad I_{i} = \pi_{0} + \widetilde Z_i ' \pi_1 + W_{1i} ' \pi_{2} + W_{2i} ' \pi_{3} + V_{i} ,\label{model:eitc}
\end{equation}
where $I_i = ( I_{i,\text{(0-5)}} , I_{i,\text{(6-12)}} ,  I_{i,\text{(13-18)}})'$ and each $I_{i,(a)}$ is the family income of individual $i$ at  age interval $a$. The outcome variable $y_i$ is 1 if individual $i$ is a college graduate and 0 otherwise.\footnote{College graduation is assessed when individuals reach the age of 26.} $W_{1i}$ is an 11-dimensional vector of personal characteristics: age, age square, the number of siblings at age 18, and indicators for black, Hispanic, female, ever-married parents, and whether the individual's mother and father completed high school or are at least some college-educated. $W_{2i}$, which is measured at age 18, is state-by-year  economic indicators: per capita GDP, the unemployment rate, the top marginal income tax rate, the minimum wage, maximum welfare benefits, spending on higher education, and tax revenue. All the continuous variables are standardized before analysis.

We consider two models with different instruments. In Model 1, we follow \cite{Bastian2018} and let $\widetilde Z_i$ be $(\text{EITC}_{i,(\text{0-5})}, \text{EITC}_{i,(\text{6-12})}, \text{EITC}_{i,(\text{13-18})})'$, where $\text{EITC}_{i,(a)}$ is the standardized measure of EITC exposure of individual $i$ at age interval $a$.\footnote{The measures of EITC exposure and family income are in thousands of 2013 dollars and discounted at a 3\% annual rate from age 18. A detailed description can be found in \cite{Bastian2018}.} As EITC exposure is measured in childhood and adolescent years, this could affect children's long-term educational attainment only through $I_i$; see \cite{Bastian2018} for more discussion on its validity. In Model 2, we use 144 instruments, consisting of the measures of EITC exposure and their interactions with 47 state-of-residence dummies (observed in the sample). These additional instruments are expected to help us address the potential bias associated with the heterogeneous effects of EITC transfers on family income caused by, such as, the state-specific cost of living.   The sample size is 2,654.

We first apply \citepos{frazier2020weak} distorted J test to study if our instruments are weak in the sense of \cite{Staiger1997}. This is needed to ensure the conditions in Theorem~\ref{prop4:finite}. However, their test is not designed for the case where (i) the (sample) covariance of instruments is nearly singular and (ii) multiple endogenous variables are present. Hence, we apply the test only to Model~1 separately for each age interval $a$, after replacing $I_i$ in \eqref{model:eitc} with $I_{i,(a)}$. The computed distorted J tests are 37.15, 21.81, and 42.76 for the age intervals 0-5, 6-12, and 13-18, respectively. The critical value at 5\% significance level is around $  9.48 $.\footnote{To conduct the test, $\delta_n$, $a_i$ and $b_i$ in \cite{frazier2020weak} are set to $\widehat{\psi}/\log(\log(n))$, $( I_{i,(a)}, W_{1i}', W_{2i} ', \widetilde Z_i' , 0_{k}')'$, and $(0_{k}',1, W_{1i}', W_{2i}', \widetilde Z_i')'$ where $k = \dim(W_{1i})+\dim(W_{2i})+\dim(\widetilde Z_{i})+1=23$.} Therefore, the   tests suggest that our model is not weakly identified in the sense of \cite{Staiger1997}. 

 We consider the TRCMLE and the SCRCMLE. Their regularization parameters are chosen as in Section~\ref{sec:sim}. Specifically, $\alpha$ is chosen from 25 equally spaced points between $ \overline{\texttt{c}}_a n^{-0.6} 10^{-4}$ and $\overline{\texttt{c}}_a n^{-0.6}  10^{-1}$.   We also consider \citepos{rivers1988limited} 2SCMLE  and the probit estimator. In this section, we further consider another estimator similar to the post-Lasso estimator  proposed by \cite{Belloni}. This estimator is included  due to its popularity in empirical research. To compute it, we first select instruments with non-zero first-stage coefficients using the Lasso procedure. Then, we refit the first stage with the selected instruments and compute the first-stage fitted values and residuals. Lastly, a linear model is estimated using the fitted values and residuals as regressors.  The results are reported in  the column labeled ``Lasso'' in Table~\ref{eitc:tab2}.

 \begin{table}[h!]
 	\caption{\small{Effects of family earnings on college completion}}
 	\label{eitc:tab2}
 \vspace{-.5em}	{\footnotesize
 	\begin{tabular*}{\textwidth}{@{\extracolsep{\fill}}lllllll} \midrule 
 &Variables&TRCMLE&SCRCMLE&2SCMLE&Lasso&Probit\\\midrule
 \multirow{7}{*}{\shortstack[l]{Model 1\\\footnotesize{($\dim(\widetilde Z_i)=3$)}}}& \multirow{2}{*}{$I_{i,\text{(0-5)}}$}& 1.5237$^{***}$ & 1.5032$^{***}$ & 1.5032$^{***}$ & 0.2720$^{***}$ & 0.1309$^{***}$ \\ 
 & &(0.5502) & (0.5429) & (0.5429) & (0.0646) & (0.0484) \\  
  &\multirow{2}{*}{$I_{i,\text{(6-12)}}$  } & -2.9299$^{***}$ & -2.8910$^{***}$ & -2.8910$^{***}$ & -0.4324$^{***}$ & 0.2022$^{***}$ \\ 
  &&(1.1011) & (1.0809) & (1.0809) & (0.1217) & (0.0699) \\ 
 &\multirow{2}{*}{$I_{i,\text{(13-18)}}$}& 2.2176$^{**}$ & 2.2188$^{**}$ & 2.2188$^{**}$ & 0.3332$^{***}$ & 0.0530$^{}$ \\ 
&& (0.9601) & (0.9477) & (0.9477) & (0.0968) & (0.0502) \\  \cmidrule{2-7} 
& {Exogeneity Test}&10.7426$^{***}$ & 10.6553$^{***}$ & 10.6553$^{***}$ & 24.9366$^{***}$ &   \\ \midrule
 \multirow{7}{*}{\shortstack[l]{Model 2\\\footnotesize{($\dim(\widetilde Z_i)=144$)}}} & \multirow{2}{*}{$I_{i,\text{(0-5)}}$}&  1.5519$^{***}$ & 1.5330$^{***}$ & 0.4107$^{**}$ & 0.1750$^{***}$ & 0.1309$^{***}$ \\ 
&& (0.5346) & (0.5343) & (0.1867) & (0.0435) & (0.0484) \\ 
&\multirow{2}{*}{$I_{i,\text{(6-12)}}$}&   -2.7763$^{***}$ & -2.7998$^{***}$ & -0.3722$^{}$ & -0.2894$^{***}$ & 0.2022$^{***}$ \\ 
&&(0.9543) & (1.0032) & (0.2738) & (0.0805) & (0.0699) \\  
&\multirow{2}{*}{$I_{i,\text{(13-18)}}$}&  2.1080$^{***}$ & 2.1979$^{***}$ & 0.5260$^{**}$ & 0.2807$^{***}$ & 0.0530$^{}$ \\ 
&&(0.8114) & (0.8531) & (0.2138) & (0.0700) & (0.0502) \\ \cmidrule{2-7}
 &{Exogeneity Test}&14.4770$^{***}$ & 12.9709$^{***}$ & 10.0550$^{***}$ & 29.7836$^{***}$ &  \\ \bottomrule
 	\end{tabular*} 
 } \vspace{-1.5em}	\flushleft{\scriptsize{Notes: Standard errors of coefficient estimates are reported in parentheses. Significance $^{***}$ $p<0.01$, $^{**}$ $p<0.05$,  $^{*}$ $p<0.1$. \textcolor{redby}{$\alpha$ is chosen to 0.00404 (TRCMLE) and $4.038\times 10^{-6}$ (SCRCMLE) in Model 1 and 0.00411 (TRCMLE) and $3.429\times 10^{-3}$ (SCRCMLE) in Model 2.} } }\vspace{-0.8em}
 \end{table}

  In Table~\ref{eitc:tab2}, our and \citepos{rivers1988limited} approaches yield similar estimates in Model~1. However, substantial discrepancies arise in Model 2. While our estimators exhibit similar coefficient estimates in both models,  \citepos{rivers1988limited} estimates tend to converge toward the probit estimates in Model 2.  This is consistent with the findings in Section~\ref{sec:sim}.  Moreover,  all the instrumental variable estimators report smaller standard errors in Model 2, suggesting that the large set of instruments in Model 2 contains additional information useful for predicting family income.
  
  Similar observations can be found in Figure~\ref{fig:eitc}, which reports the estimated probability of college completion depending on  family income for each age interval. For clarity, let $\mu_{{a}}$ (resp.\ $ \sigma_{{a}}$) denote the sample mean (resp.\ standard deviation) of $I_{i(a)}$. Figure~\ref{fig:eitc} shows how the probability changes as  $I_{i(a)}$ varies from $\mu_{{a}}-\sigma_{a}$ to  $\mu_{{a}}+\sigma_{{a}}$, while other variables remain fixed at their medians. There is no significant change in the probabilities computed with the TRCMLE in both models. In contrast, the probability computed with the 2SCMLE substantially  shifts toward that of the probit estimator in Model 2.  Similar results are observed in the APE estimates  in the Online Supplement.

Next, we focus on the Lasso estimates. As mentioned in Section~\ref{sec:sim}, parameters in linear probability models are different from those in nonlinear binary response models. Hence, to make a reasonable comparison of the coefficient estimates, we compute the APEs at the sample mean of $Y_{2i}$ and report them in Table~\ref{eitc:tab3} in Appendix~\ref{add.table} of the Online Supplement. However, even in the table, Lasso estimates tend to be biased toward probit estimates in Model 2. This bias may be attributed to factors discussed in Section~\ref{sec:lasso} and to a slower decreasing rate of the eigenvalues of $\mathcal K_n$.

  Table~\ref{eitc:tab2} reports the exogeneity testing results conducted as detailed in Remark~\ref{rem:tmp2}. The $p$-values are computed from the  $\chi^2(3)$ distribution. The results indicate the presence of endogeneity at the 1\% significance level regardless of the number of instruments and estimation methods. This explains the difference in coefficient estimates   between the probit estimator and the others.  


The estimation results in Table~\ref{eitc:tab2} are similar to those in \cite{Bastian2018};  increasing family income between ages 13 and 18 positively influences early adulthood college completion. While the authors found a similar effect, their estimates, reported in column 3 of their Table 6, are not significant. This discrepancy may arise from different models and estimation methods, such as the absence of the large number of fixed effects in our approach.  Table~\ref{eitc:tab2} also reports a positive (negative) impact of an increase in family income between birth and age 5 (ages 6 and 12) on college completion. \cite{Bastian2018} reported similar mixed coefficient signs and noted that these may be attributed to cross-age correlations in EITC exposure.


\section{Conclusion \label{sec:con}}

This paper proposes an estimation procedure to resolve the issue of many (nearly) weak instruments in endogenous binary response models. The proposed estimators are easy to compute and have good asymptotic properties.  Monte Carlo studies suggest that our estimators outperform the existing estimators  when there are many (nearly) weak instruments. We  apply the estimators to study the effect of family income in childhood and adolescent years on college completion. 

\onehalfspacing

\begin{figure}[h!]
	\begin{center} 
		\caption{\small{Estimated average structural functions (Gaussian instruments) \label{sim:fig}}}
		\vspace{-0.5em}	\includegraphics[width=\textwidth, height=.25\textheight]{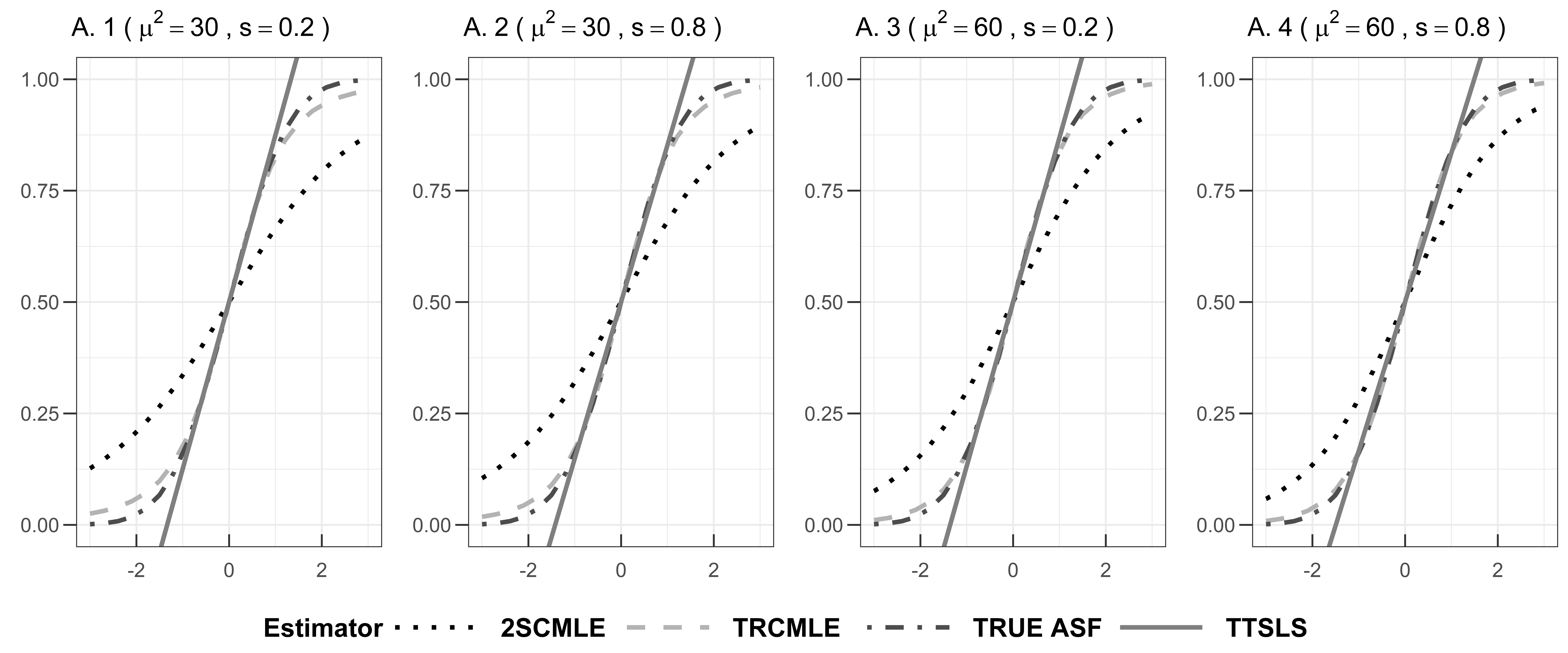}
	\end{center}
	\vspace{-1em}	\footnotesize{Notes: The simulation results based on 2,000 replications are reported. $K = 50$ and $n=200$.} \vspace{-2em}
\end{figure}

\begin{figure}[h!]
	\caption{\small{Simulation results for different parameter values}}
	\label{fig} 
	\vspace{-0.5em}\begin{subfigure}{\textwidth}
		{\subcaption{\footnotesize{Median Bias}}}
		\vspace{-0.5em}		\includegraphics[width=\linewidth, height=.15\textheight]{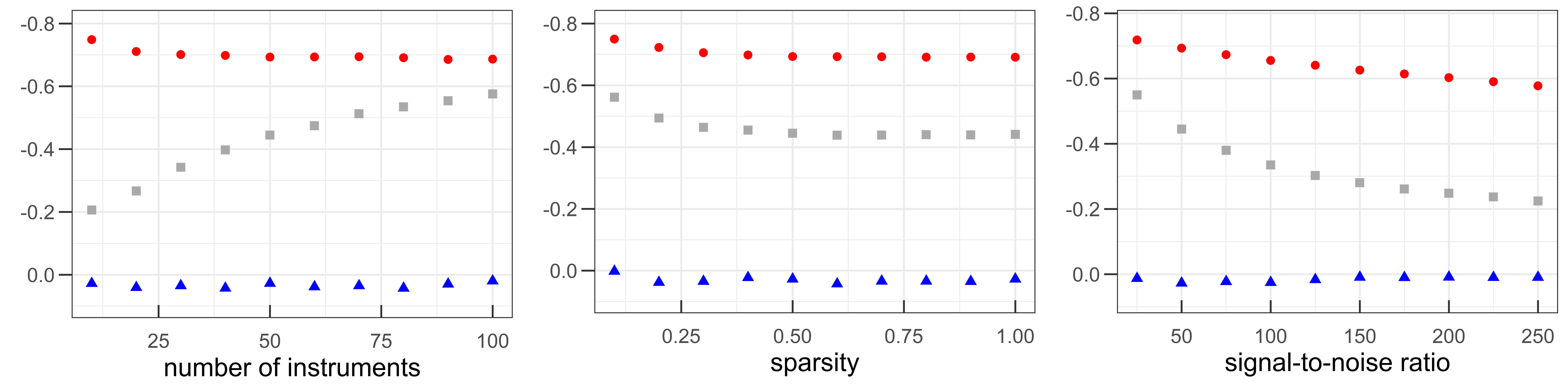} 
	\end{subfigure} \\
	\begin{subfigure}{\textwidth}
		\subcaption{\footnotesize{Median Absolute Deviation}}
		\vspace{-0.5em}	\includegraphics[width=\linewidth, height=.15\textheight]{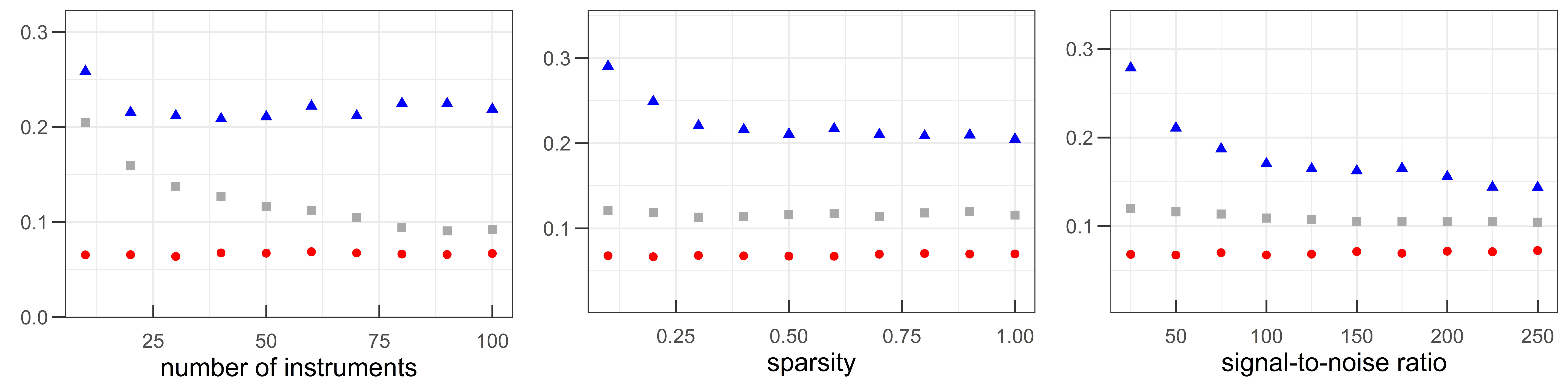} 
	\end{subfigure} \\
	\begin{subfigure}{\textwidth}
		\subcaption{\footnotesize{Rejection Probability at 5\% significance level}}
		\vspace{-0.5em}	\includegraphics[width=\linewidth, height=.15\textheight]{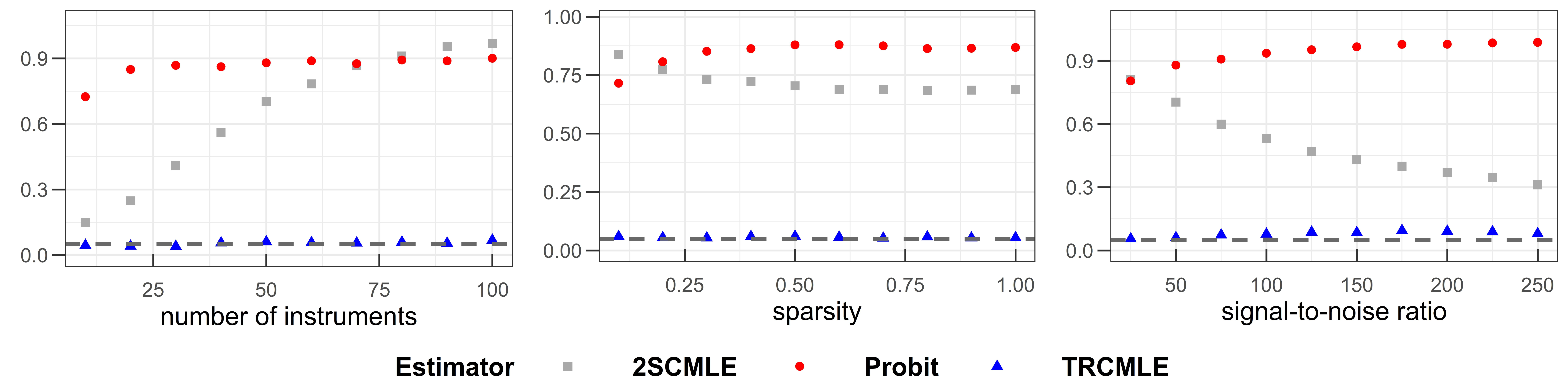} 
	\end{subfigure} 
	\footnotesize{Notes: The simulation results based on 2,000 replications are reported. The default values of $K$, $s$ and $\mu^2$ are $50$, $ \lfloor 0.5\rfloor$, and $50$. The dashed lines indicate the nominal size  0.05.} \vspace{-2em}
\end{figure}

\begin{figure}
	\caption{\small {Estimated average structural functions}}
	\vspace{-.5em}	\includegraphics[width=\textwidth , height = .25\textheight]{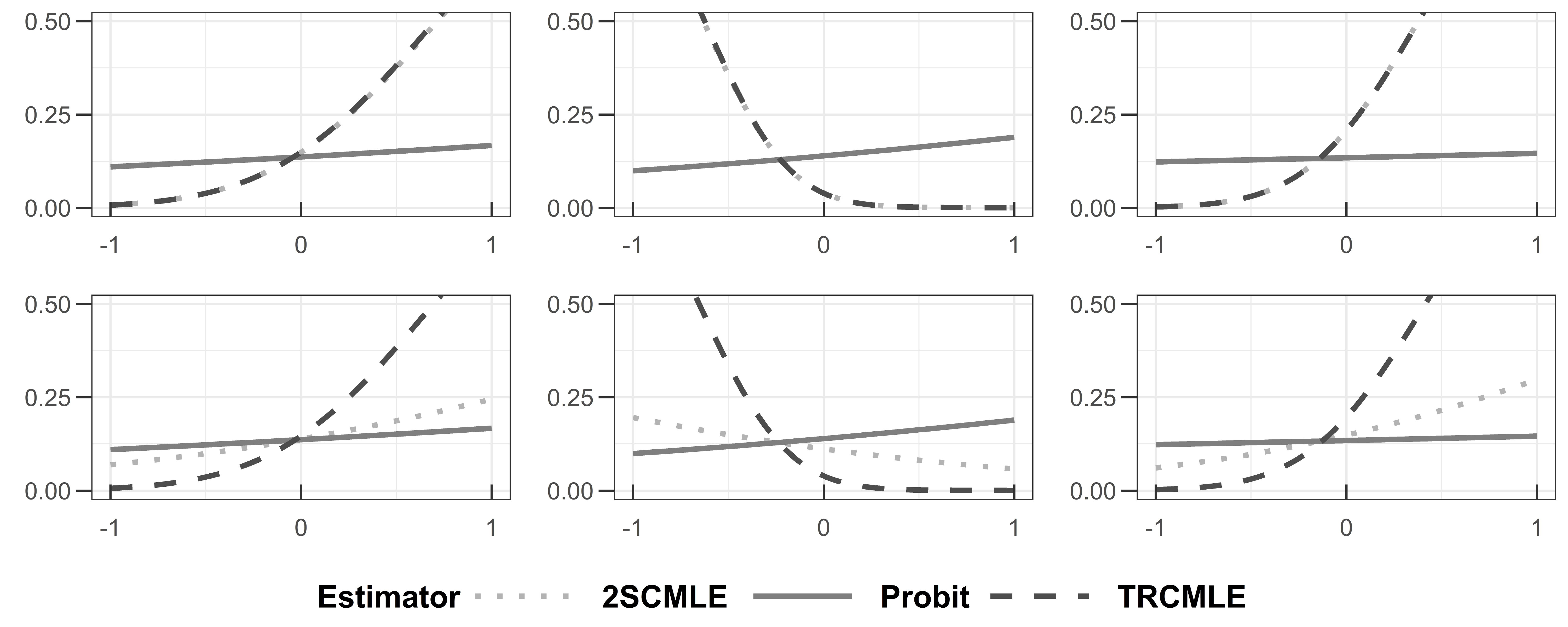}\vspace{-1em}\label{fig:eitc}
	\vspace{-1em}	\flushleft{\footnotesize{Notes: Each figure reports the estimated average structural function (ASF) that reports the probability of college completion according to family earnings in the age interval 0-5 (left), 6-12 (middle), and 13-18 (right).}} \vspace{-2em}
\end{figure}
\clearpage
\newpage
	\begin{center}
	\Large{The Online Supplement to\\ ``Binary response model with many weak instruments''}
	\end{center} 

\appendix
\makeatletter 
\def\@seccntformat#1{\@ifundefined{#1@cntformat}
	{\csname the#1\endcsname\quad}
	{\csname #1@cntformat\endcsname}}

\newcommand\section@cntformat{}
\newcommand\subsection@cntformat{}
\makeatother 

\vspace{1em}

\section{Appendix \ref*{stepwise}: Computation of estimators and the choice of $\alpha$ \label{stepwise}}
\violet{In this section, we discuss how to compute the proposed estimators and the regularization parameter. We first focus on the computation of the estimators with a prespecified $\alpha$. The following computation method is applicable regardless of the dimension of $Z_i$ by using different inner products according to $\mathcal H$; for example, when $\mathcal H = \mathbb R^{d_z}$, the inner project $\langle h_1, h_2\rangle_{\mathcal H}$ will be replaced by $h_1 ' h_2$.}
\begin{center}\begin{minipage}{.95\textwidth}
		\begin{enumerate}[\textbf{Step }1]
			\item \label{stp1}	\violet{Compute the $n \times n$ matrix $\mathbf{M}$ whose $(i,j)$th element is given by $\langle Z_i, Z_j\rangle_{\mathcal H}/n$. }
			\item \label{stp2} \violet{Obtain the eigenvalues and eigenvectors, $\{\widehat{\kappa}_j, \widehat{\varpi}_j\}_{j=1} ^n$, of the matrix $\mathbf{M}$. }
			\item \violet{For a prespecified $\alpha$, compute the $n \times n$  matrix $
				\mathcal P_{n\alpha} = \sum_{j=1} ^n {q} (\widehat\kappa_j, \alpha) \widehat{\varpi}_j\widehat{\varpi}_j'$. Then, compute $\widehat{g}_{i,\alpha}$ as the $i$th row of $ (\mathcal P_{n\alpha} \mathbf{Y}_2 , (\mathbf{I}_{n} - \mathcal P_{n\alpha}) \mathbf{Y}_2 )$, where $\mathbf{I}_n$ and $\mathbf{Y}_2$ denote the identity matrix of dimension $n$ and the $n \times d_e$ matrix whose $i$th row is given by $Y_{2i}$. } \label{stp3}
			\item \label{stp4}\violet{ Compute the estimator by using $\widehat{g}_{i,\alpha}$ in \textcolor{black}{\ref{stp3}} and the objective function in \eqref{model:5eq}.  }
		\end{enumerate}
\end{minipage}\end{center}

\violet{The last step involves the standard nonlinear optimization procedure provided in most statistical software programs. The choice of $\alpha$ may be accompanied by the estimation of $\widehat{\theta}_j$ to minimize its MSE studied in Section~\ref{sec: mse}. In  Sections~\ref{sec:sim} and \ref{sec:emp}, we used the following  to compute $\alpha$.}
\begin{center}\begin{minipage}{.95\textwidth}
		\begin{enumerate}[\textbf{Step' }1]
			\item 	\violet{Set  $\mathtt{C}_\alpha$, the parameter space of $\alpha$, with taking into account the conditions in Theorem~\ref{prop4}. For example, one may set $\mathtt{C}_\alpha = n^{-0.6} \times \mathtt{C}$ with $\mathtt{C}$ being a set of small positive numbers. }
			\item \violet{For each $\alpha \in \mathtt{C}_\alpha$, compute the bound \begin{equation}
					\mathtt{c}_{g}
					\left(  \frac{  h'  {\Upsilon}' (\mathcal P_{n\alpha} - \mathcal I)^2  {\Upsilon}h }{n} + \sigma_{v_{h}} ^2\frac{\text{tr}(\mathcal P_{n\alpha} ^2) }{n} \right)  , \label{eq: mse: tmp}
				\end{equation}for some  $h \in \mathbb R^{d_e}$.  The scalars $\sigma_{v_h} ^2$ and $\mathtt{c}_{g} $ denote $h'\Sigma_Vh$ and  $ \lambda_{\max}\left(\mathbb E[  \dot{m}_{2j} ^2 (\theta_0, g_i) g_{i} g_{i}'|x] \right)$.  The $i$th row of $n\times d_e$ matrix $ {\Upsilon}$ is equal to ${\Pi}_n Z_i$. The term in the parenthesis of \eqref{eq: mse: tmp} is similar to the conditional MSE in \citet[Proposition 2]{Carrasco2012} whose estimators are available therein. The constant  $\mathtt{c}_g$ is estimable from its  sample counterpart.}\label{stp'3}
			\item \violet{Choose $\alpha$ that minimizes the bound computed in \ref{stp'3}.}
		\end{enumerate} 
\end{minipage}\end{center} 

\violet{It would be worth noting that \eqref{eq: mse: tmp} is related to the upper bound of the conditional MSE, particularly that of \eqref{eq: ahat}. To see this in detail, a few additional notations are required. Thus, we defer its detailed discussion  until after \eqref{eq: ahat} in Appendix~\ref{app: mse}. }

\section{Appendix \ref*{sec: func}: Additional simulation study with function-valued instruments }\label{sec: func}
We consider a scenario where the functional form of $\mathbb E[Y_{2i}|x_i]$ is unknown, and to estimate it, the continuum of moments is used as $Z_i$. Specifically, we replace the first stage in \eqref{eq:sim} by\begin{equation*}
	Y_{2i} = \pi z_{1i} + \pi f(z_{2i}) + v_i,
\end{equation*}
where $z_{1i}\sim_{\text{iid}}\mathcal N(0,1)$, $z_{2i}\sim_{\text{iid}}\text{Unif}[0,1]$ and  $f(\cdot)$ is set to the probability density function of the beta distribution with shape parameters 2 and 5. The constant $\pi$ is computed as  before,  replacing $\Sigma_Z$ by its estimates calculated using 1,000 samples of $z_{1i}$ and $ f(z_{2i})$. 

Our estimators and the TTSLS are computed using  $\exp( tz_{2i})$\footnote{Since $z_{2i}$ takes values on a compact interval, this function is bounded and square integrable.} as $Z_i $ where $t$ takes values in 100 equally spaced points between -5 and 5. {We consider the weighted inner product discussed in a footnote on page~\pageref{ft: wl2}. The weight is set to the standard normal density function.} The Inf.2SCMLE is calculated with the realized values of $f(z_{2i})$, and the 2SCMLE, which requires the linear first stage, is not considered in this example. The concentration parameter is computed using the true   $f(z_{2i})$ and we consider two values: 60 and 180. Since we do not use the true values of $f(z_{2i})$, the  signal from $Z_i$ may be weaker than what the concentration parameter suggests.

\begin{table}[h!] 
	\caption{\small{Simulation results (Functional instrument of the continuum of moments)\label{tab:fun}}}
	\vspace{-.5em}	 	{\footnotesize	
		\begin{tabular*}{1\textwidth}{@{\extracolsep{\fill}}lllrccrcc}
			\midrule &&&\multicolumn{3}{c}{$\mu^2=60$}&\multicolumn{3}{c}{$\mu^2=180$}\\ \cmidrule{4-6}\cmidrule{7-9}
			$n$& &&Med.Bias&MAD&RP&Med.Bias&MAD&RP\\ \midrule
			\multirow{6}{*}{$200$}& &TRCMLE & 0.049 & 0.344 & 0.045 & -0.014 & 0.230 & 0.078 \\ 
			&&SCRCMLE & 0.001 & 0.287 & 0.046 & -0.032 & 0.204 & 0.058 \\ 
			&&Inf.2SCMLE & 0.035 & 0.239 & 0.037 & 0.028 & 0.181 & 0.043 \\ 
			&&Probit & -0.753 & 0.074 & 0.633 & -0.748 & 0.079 & 0.605 \\ 
			&&TTSLS & 0.033 & 0.341 & 0.137 & -0.047 & 0.212 & 0.155 \\ 
			\midrule 
			\multirow{6}{*}{$400$}& &TRCMLE & 0.057 & 0.341 & 0.032 & -0.018 & 0.185 & 0.068 \\ 
			&&SCRCMLE & -0.020 & 0.265 & 0.038 & -0.030 & 0.168 & 0.054 \\ 
			&&Inf.2SCMLE & -0.001 & 0.218 & 0.040 & 0.004 & 0.143 & 0.047 \\ 
			&&Probit & -0.754 & 0.051 & 0.918 & -0.759 & 0.052 & 0.896 \\ 
			&&TTSLS & 0.066 & 0.344 & 0.120 & -0.036 & 0.175 & 0.124 \\ \bottomrule
	\end{tabular*}} \vspace{-1.5em}	\flushleft{\scriptsize{Notes: The simulation results based on 2,000 replications are reported. Each cell reports the median bias (Med.Bias), median absolute deviation (MAD), and rejection probability at 5\% significance level (RP).}} \vspace{-0.8em}
\end{table}

Table~\ref{tab:fun} summarizes  simulation results. For all the cases considered in the table, the SCRCMLE performs as good as the Inf.2SCMLE that is computed using the realized values of $f(z_{2i})$. In contrast, the two Tikhonov regularized estimators report  a relatively large bias and MAD especially when $\mu^2 = 60$. This may be related to the fact that (i) the eigenvalues of the sample covariance of $Z_i$ shrink at a relatively fast rate in this example and (ii) Tikhonov regularized estimators have a relatively slow convergence rate in those cases (see, e.g., \citealt[Remark 3.2]{Benatia2017} or \citealt[p.\ 389]{Carrasco2012}). However, in general, the regularized estimators perform as good as the Inf.2SCMLE. 

\section{Appendix \ref*{sec:add_emp}: Additional empirical illustration \label{sec:add_emp}}
\cite{Miguel2004} studied the effect of economic growth on the occurrence of civil war in sub-Saharan Africa by instrumenting the economic growth to the annual rainfall growth rate. As briefly mentioned in Example~\ref{example:miguel}, the model may provide better insight if the annual curve of daily rainfall growth is employed as an instrument, rather than its average. We leave this question for future study because of the current data availability. Instead, in this section, we use a function  of the annual rainfall growth as an instrument to demonstrate how the function-valued instrumental variable can be used in practice.

The model is given by:\begin{equation}
	\text{conflict}_{is} = 1\{    \text{growth}_{is-1}\beta +  W_{is}'\gamma \geq u_{is} \},\quad\quad\text{growth}_{is-1} = \Pi_{n} Z_{is-1} + W_{is} ' \pi_2 + V_{is}. \label{eq:add_emp}
\end{equation}
The dependent variable indicates if there was a civil conflict resulting in at least 25 deaths, and  $\text{growth}_{is-1}$ measures annual economic growth in country $i$ at time $s-1$. \cite{Miguel2004} found that the lagged economic growth plays an important role in the occurence of civil conflicts associated with the 25-death threshold. The vector $W_{is}$ contains the explanatory variables: the intercept, the log of GDP per capita in 1979, a lagged measure of democracy, ethnolinguistic fractionalization, religious  fractionalization and the indicator on if country $i$ exports oil. We refer  readers to \cite{Miguel2004} for details  on these variables and the data.

To compute our estimators, the instrument $Z_{is-1}$ is set to $ \exp(tx_{is-1})$ where $x_{is-1}$ is the lagged rainfall growth that was used as the instrument in \cite{Miguel2004}.  The weight function is given by the standard normal density and the function-valued random variable is measured on 100 equally spaced points between -5 and 5. All continuous exogenous variables are standardized, and $Z_{is}$ is centered before analysis. The regularization parameters are obtained as in Section~\ref{sec:emp}. We compare the TRCMLE and the SCRCMLE, computed with the  function-valued instrument, and compare them with  the naive probit estimator and   \citepos{rivers1988limited} estimator calculated using $x_{is-1}$ as an instrument.

\begin{table}[h!]
	\caption{Estimation results for Appendix~\ref{sec:add_emp}}
	\label{mig:tab}{\footnotesize
		\begin{tabular*}{\textwidth}{@{\extracolsep{\fill}}lrrrr} \midrule 
			Variable&TRCMLE&SCRCMLE&2SCMLE&Probit\\\midrule 
			\multirow{1}{*}{$\text{growth}_{is-1}$} &-0.5136$^{}$ & -0.4930$^{}$ & -0.6056$^{}$ &   -0.0335$^{}$ \\ 
			&(0.4046) & (0.3967) & (0.4609) &  (0.0522) \\ \midrule
			{Exogeneity Testing}&1.4317$^{***}$ & 1.3647$^{***}$ & 1.5605$^{***}$ & \\ \bottomrule
	\end{tabular*} } \vspace{-1em}
	\flushleft{\scriptsize{Notes: Each cell reports point estimates of $\beta$ in \eqref{eq:add_emp}. The standard errors are reported in parentheses. The sample size is 743. The regularization parameters of the TRCMLE and SCRCMLE are respectively $1.38\times 10^{-6}$ and $1.54\times 10^{-5}$.}}
\end{table}  

Table~\ref{mig:tab} summarizes estimation results. The exogeneity testing results suggest the presence of endogeneity, which explains the significant difference between the estimates obtained of the probit estimates from the others. All instrumental variable estimators suggest that economic growth has a negative association with the occurrence of civil conflicts, although none of them are statistically significant. Similar results are obtained even when the dependent variable is specified to civil conflicts associated with deaths greater than 1,000. The insignificant results may be attributed to failure in addressing country-specific fixed effects that are not included in our model. The fixed effects are not included to mitigate the potential curse of dimensionality in the second stage. Despite the insignificance, the table provides a valuable insight on our estimators. For example, our estimators, computed using function-valued instruments, have smaller variance than the 2SCMLE.

\section{Appendix \ref*{sec:rem}: Additional remark\label{sec:rem}}
\begin{remark}\label{rem4} \normalfont
	Let $\ker \mathcal K$ and $(\ker \mathcal K)^{\perp}$ denote the kernel of $\mathcal K$ and its orthogonal complement space. Then, because $(\ker \mathcal K)^{\perp} = \cl (\ran \mathcal K)$, for any $h \in \ker \mathcal K$, \begin{equation*}\langle \mathcal V_{j} h,\mathcal V_{j} h\rangle =  \Vert \mathbb E  [ \dot m_{2j}  ( \theta_0 , g_i )   g_{0i} \langle Z_i,h\rangle_{\mathcal H} ]  \Vert ^2   \leq   \mathbb E  [   \Vert  \dot{m}_{2j}   (\theta_0 , g_i ) g_{0i}  \Vert ^2  ] \mathbb E [  \langle Z_i,h\rangle_{\mathcal H}^2   ]   = 0 ,   \end{equation*} 
	where the inequality follows from the Cauchy-Schwarz inequality. The last equality holds because $h \in \ker \mathcal K$. Thus, $\ker \mathcal K \subset \ker \mathcal V_{j}$, which implies $ \left(\ker \mathcal V_{j} \right)^{\perp} = \cl (\ran \mathcal V_{j}^\ast) \subset \cl ( \ran\mathcal K) = \left(\ker \mathcal K\right)^{\perp} $. 
\end{remark}  
\section{Appendix \ref*{add.table}: Additional Tables and Figures \label{add.table}}
\begin{table}[!htbp] 
	\caption{\small{Simulation results (Factor model with $\widetilde{K} = 30$)\label{tab:fac:add}}}
	\vspace{-.5em}	{\small
		\begin{tabular*}{1\textwidth}{@{\extracolsep{\fill}}lllrccrcc}
			\midrule &&&\multicolumn{3}{c}{$\mu^2=30$}&\multicolumn{3}{c}{$\mu^2=60$}\\ \cmidrule{4-6}\cmidrule{7-9}
			$n$& &&Med.Bias&MAD&RP&Med.Bias&MAD&RP\\ \midrule
			\multirow{6}{*}{$200$}& &TRCMLE & -0.056 & 0.236 & 0.052 & -0.025 & 0.189 & 0.050 \\ 
			&&SCRCMLE & -0.070 & 0.231 & 0.056 & -0.047 & 0.182 & 0.058 \\  
			&&Inf.2SCMLE & -0.073 & 0.227 & 0.058 & -0.034 & 0.181 & 0.049 \\ 
			&&2SCMLE & -0.431 & 0.154 & 0.476 & -0.319 & 0.142 & 0.351 \\ 
			&&Probit & -0.726 & 0.069 & 0.781 & -0.703 & 0.071 & 0.840 \\ 
			&&TTSLS & -0.050 & 0.226 & 0.080 & -0.028 & 0.165 & 0.086 \\  \midrule
			\multirow{6}{*}{$400$}
			&&TRCMLE & -0.073 & 0.231 & 0.056 & -0.028 & 0.170 & 0.052 \\ 
			&&SCRCMLE & -0.089 & 0.219 & 0.059 & -0.045 & 0.171 & 0.055 \\ 
			&&Inf.2SCMLE & -0.094 & 0.221 & 0.064 & -0.046 & 0.169 & 0.059 \\ 
			&&2SCMLE & -0.434 & 0.142 & 0.517 & -0.310 & 0.127 & 0.376 \\ 
			&&Probit & -0.736 & 0.050 & 0.953 & -0.724 & 0.050 & 0.975 \\ 
			&&TTSLS & -0.076 & 0.222 & 0.080 & -0.031 & 0.164 & 0.084 \\ \midrule
		\end{tabular*}
	} 
	{\scriptsize Notes: The simulation results based on 2,000 replications are reported. Each cell reports the median bias (Med.Bias), median absolute deviation (MAD), and rejection probability at 5\% significance level (RP).} 
\end{table}

\begin{figure}[!htbp]
	\begin{center} 
		\caption{\small{Estimated average structural functions (Factor model in Section~\ref{sec:sim}) \label{sim:fig:factor}}}
		\vspace{-0.5em}	\includegraphics[width=\textwidth, height=.25\textheight]{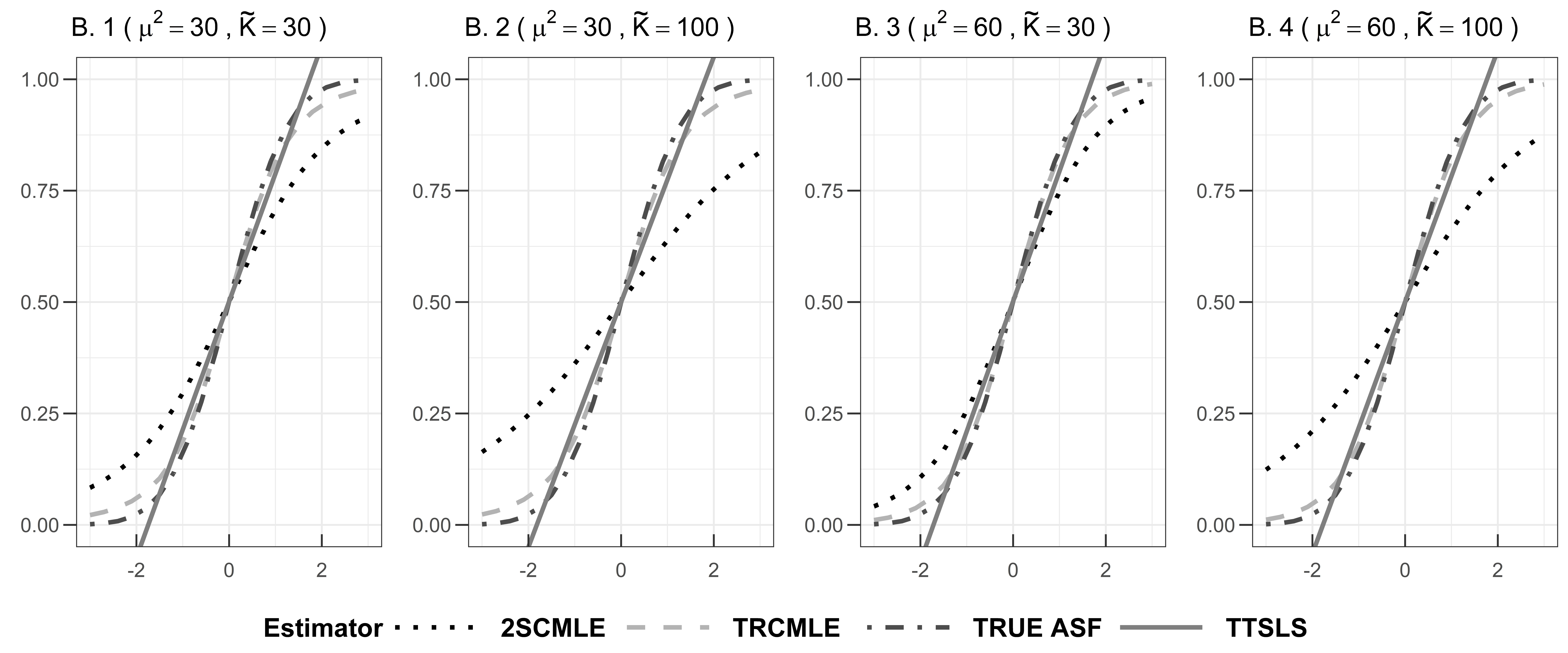}
	\end{center}
	\vspace{-1em}	\footnotesize{Notes: The estimated average structural functions based on 2,000 replications are reported. The sample size $n$ is $200$.}
\end{figure}

\begin{table}[!htbp]
	\caption{Average partial effect estimates}
	\label{eitc:tab3}
	\begin{tabular*}{\textwidth}{@{\extracolsep{\fill}}lrrrrr} \midrule 
		Variables&TRCMLE&SCRCMLE&2SCMLE&Lasso&Probit\\\midrule
		\multicolumn{6}{l}{Model 1 \small{($\dim(Z_i)=3$)}}\\		
		\multirow{1}{*}{$I_{i,\text{(0-5)}}$} &0.3322 & 0.3277 & 0.3277 & 0.2720 & 0.0285 \\ 
		\multirow{1}{*}{$I_{i,\text{(6-12)}}$}&-0.6388 & -0.6303 & -0.6303 & -0.4324 & 0.0441 \\   
		\multirow{1}{*}{$I_{i,\text{(13-18)}}$}&	0.4835 & 0.4837 & 0.4837 & 0.3332 & 0.0116 \\ \midrule
		\multicolumn{6}{l}{Model 2 \small{($\dim(Z_i)=144$)}}\\
		\multirow{1}{*}{$I_{i,\text{(0-5)}}$} &	0.3372 & 0.3328 & 0.0901 & 0.1750 & 0.0285 \\ 
		\multirow{1}{*}{$I_{i,\text{(6-12)}}$  } &	-0.6032 & -0.6078 & -0.0816 & -0.2894 & 0.0441 \\ 
		\multirow{1}{*}{$I_{i,\text{(13-18)}}$}&	0.4580 & 0.4772 & 0.1153 & 0.2807 & 0.0116 \\ 
	\end{tabular*} 
	\rule[6pt]{1\linewidth}{1.25pt}\vspace{-1em}
	\flushleft{\footnotesize{Notes: Each cell reports the estimated average partial effect at the sample mean of family income. The sample size is 2,654.}}
\end{table} 

\begin{figure}[!htbp]
	\caption{Estimated average partial effect of family income on the probability of college completion}
	\includegraphics[width=\textwidth ]{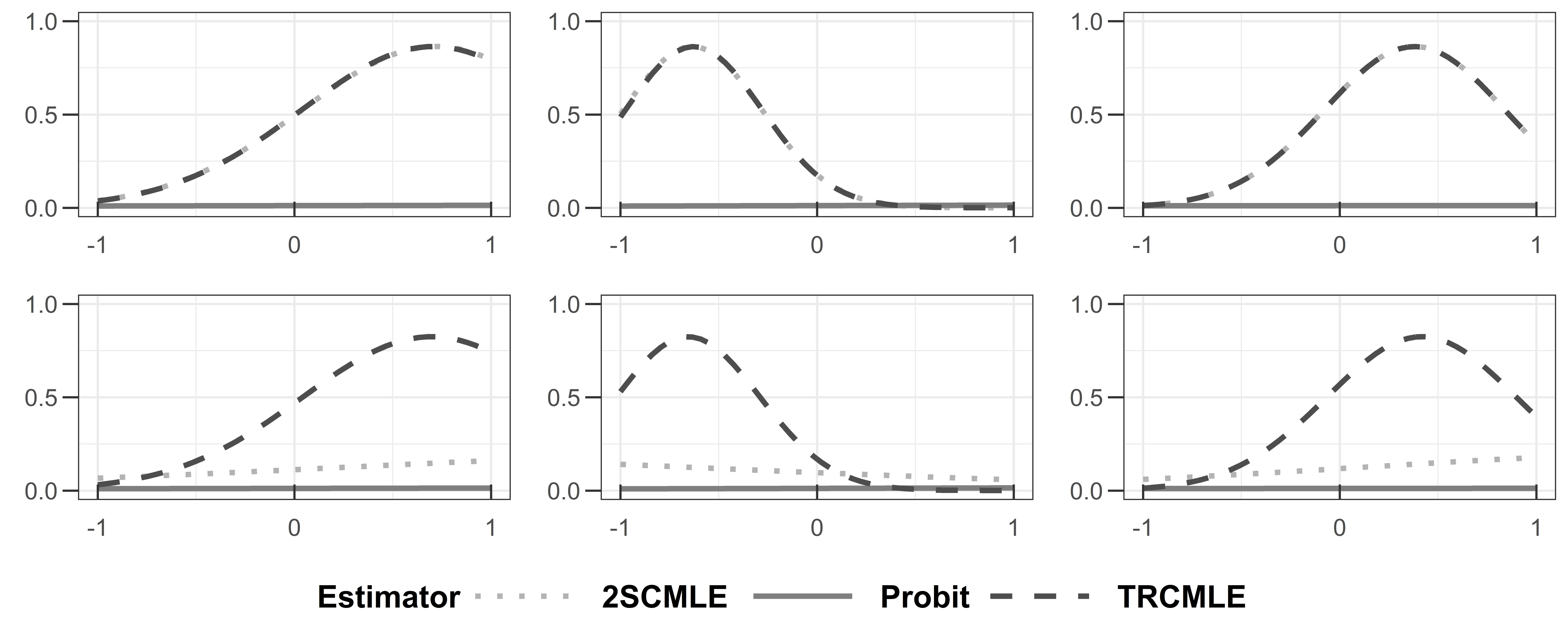}\vspace{-1em}\label{fig:eitc:ape}
	\vspace{-1em}	\flushleft{\footnotesize{Notes: Each figure reports the estimated average partial effect of family income on the probability of college completion according to the standardized family income in the age interval 0-5 (left), 6-12 (middle), and 13-18 (right).}}
\end{figure}

\section{Appendix \ref*{sec:pf}: Proofs\label{sec:pf} of the results in Sections~\ref{sec:general} and \ref{sec:asy.var}}
\subsection{Appendix~\ref*{subsec:pf}: Proofs}\label{subsec:pf}
Since Theorems~\ref{thm1:consistency:finite} and~\ref{prop4:finite} are special cases of Theorems~\ref{thm1:consistency} and~\ref{prop4}, we prove only the latter two. We first introduce a few notations. In the following,  $\mathcal M_{M} (\theta, g_i)= \partial \mathcal{Q}_M (\theta,g_i) / \partial \theta$ and $\mathcal M_{N} (\theta, g_i)= -\partial \mathcal{Q}_N (\theta,g_i) / \partial \theta$ so that the optimization problem for the RNLSE is formulated as a minimization problem. Whenever it is convenient, we will use $\sum_{i}$ to denote $\sum_{i=1} ^n$. For any $a \in \mathbb R$, $\widetilde \phi (a)$ denotes the inverse Mill's ratio given by $\widetilde \phi (a ) = \phi (a ) / \Phi  (a)$. For a linear operator $\mathcal A :\mathcal H \to \mathcal H$, $ \Vert \mathcal A  \Vert _{\op}$ and $\Vert \mathcal A \Vert _{\HS}$ respectively denote its operator norm and Hilbert-Schmidt norm, i.e.,   $ \Vert \mathcal A  \Vert _{\op} = {\sup}_{ \Vert h  \Vert_{\mathcal H} \leq 1}  \Vert \mathcal A h  \Vert_{\mathcal H}  $ and  $ \Vert \mathcal A \Vert _{\HS} =  (\sum_{j=1} ^\infty \Vert \mathcal A \zeta_j \Vert_{\mathcal H} ^2)^{1/2} =  (\sum_{j=1} ^\infty \langle\mathcal A ^{\ast }\mathcal A \zeta_{j}, \zeta_{j}\rangle_{\mathcal H})^{1/2} = (\tr(\mathcal A^{\ast} \mathcal A))^{1/2}$ where $\{  \zeta_j\}_{j \geq 1}$ is a set of orthonormal basis of $\mathcal H$. \label{r2mmm}\violet{Similarly, if $\mathcal A$ is a matrix, it reduces to $(\tr(A'A))^{1/2}$. In addition, for finite-dimensional vectors $a_1$ and $a_2 $, we sometimes use $\langle a_1, a_2\rangle $ to denote $a_1'a_2$ and $\Vert a_1\Vert = (a_1'a_1)^{1/2}$.}\label{r1mp2c} For any $h \in \mathcal H$, $\Vert \mathcal A h \Vert_{\mathcal H}  \leq \Vert \mathcal A \Vert_{\op} \Vert h\Vert_{\mathcal H}$ and $\Vert \mathcal A \Vert_{\op} \leq \Vert \mathcal A \Vert_{\HS}$.  Lastly,  let $\mathcal K_n ^{1/2} = \sum_{j=1} ^\infty \widehat \kappa_j ^{1/2} \widehat \varphi_j \otimes \widehat \varphi_j$ and  $\mathcal K^{1/2} = \sum_{j=1} ^\infty   \kappa_j ^{1/2}   \varphi_j \otimes   \varphi_j$, and let $c$ denote a generic positive constant.
\begin{proof}[Proof of Theorem \ref{thm1:consistency}]
		
		Since the compactness of $\Theta$ and the identification condition of $\theta_0$ are given in Assumption~\ref{ass4}, we focus on the proof of the uniform convergence between $\mathcal Q_{jn}(\cdot)$ and $\mathcal Q_j(\cdot)$. To this end, let $\widetilde {\mathcal Q}_{jn}(\theta) \equiv \widetilde {\mathcal Q}_{jn} (\theta,g_i ) = n^{-1} \sum_{i}  m_{j} (\theta , g_i )$ that is an auxiliary objective function associated with $\mathcal Q_j(\cdot)$. Then, because of Assumptions~\ref{ass3:finite} and \ref{ass4} and the uniform convergence results in \citet[]{andrews1987consistency}, we have  $\sup_\theta |\widetilde {\mathcal Q}_{jn}(\theta)-\mathcal Q_j(\theta) |= o_p(1)$ for $j \in \{M,N\}$. Therefore, to prove the consistency of $\widehat{\theta}_j$, it remains to show  the uniform convergence between $\widetilde {\mathcal Q}_{jn} (\cdot)$ and ${\mathcal Q}_{jn}(\cdot)$.
		
		For $j=N$, due to  the boundedness  of $\Phi (\cdot)$ and $\phi(\cdot) $, we have ${\sup}_{\theta \in \Theta }| \widetilde{\mathcal Q}_{Nn} (\theta, g_i ) - \mathcal Q_{Nn} (\theta,\widehat g_{i,\alpha} )  | \leq cn^{-1}\sum_{ i}  \Vert \mathcal S_n ^{-1} (\widehat g_{i,\alpha} - g_i ) \Vert$. Similarly, because of the Lipschitz continuity of $\widetilde \phi(\cdot)$, the  mean-value theorem, and  the  triangular inequality,  ${\sup}_{\theta \in \Theta}  |   \widetilde{\mathcal Q}_{Mn}  (\theta, g_i ) - \mathcal Q_{Mn}  (\theta,\widehat g_{i,\alpha} )   |	 \leq c (n^{-1}\sum_{ i}  \Vert \mathcal S_n ^{-1}  (\widehat g_{i,\alpha} - g_i  ) \Vert ^2 )^{1/2}  $. Hence, it suffices to show that $n^{-1}\sum_{ i}  \Vert \mathcal S_n ^{-1}  (\widehat g_{i,\alpha} - g_i  ) \Vert ^2=o_p(1)$ to prove the uniform convergence between $\widetilde {\mathcal Q}_{jn} (\theta,g_i ) $ and $ {\mathcal Q}_{jn} (\theta,g_i )$. This can be shown using   the following two inequalities.
		
		First, by using the spectral decomposition of $\mathcal K_{n\alpha} ^{-1}$, we have \begin{align}
			&\frac{1}{n}\sum_{i=1} ^n\Vert \Pi_0 (\mathcal K_n \mathcal K_{n\alpha} ^{-1} - \mathcal I) Z_i \Vert ^2 = \frac{1}{n} \sum_{i=1} ^n \Vert \sum_{j=1} ^\infty (q(\widehat \kappa_j , \alpha)-1) \langle Z_i ,\widehat \varphi_j \rangle_{\mathcal H} \Pi_0\widehat \varphi_j   \Vert ^2 \nonumber\\
			&= \sum_{j,k=1} ^\infty (q(\widehat \kappa_j,\alpha)-1)(q(\widehat \kappa_k,\alpha)-1) \langle \widehat \varphi_k, \mathcal K_n \widehat \varphi_j\rangle_{\mathcal H}  \langle \Pi_0\widehat \varphi_j , \Pi_0 \widehat \varphi_k\rangle\nonumber\\
			&=\sum_{\ell=1} ^{d_e} \sum_{j=1} ^\infty  \widehat \kappa_j (q(\widehat \kappa_j,\alpha)-1)^2 \langle \widehat \varphi_j,\pi_{\ell,0}\rangle_{\mathcal H} ^2 \leq \underset{j}{\sup} \widehat \kappa_j^{2\rho+1}(1-q(\widehat \kappa_j,\alpha))^2 \sum_{\ell=1} ^{d_e} \sum_{j=1} ^\infty  \frac{\langle \widehat \varphi_j,\pi_{\ell,0}\rangle_{\mathcal H} ^2  }{\widehat \kappa_j ^{2\rho}},	\label{p3:eq1} 
		\end{align}
		where the equalities follow from the definition of $\mathcal K_n$, $\mathcal K_{n\alpha}^{-1}$ and the orthogonality of $\{ \widehat \varphi_j \}_{j \geq 1}$ across $j$. The last term in \eqref{p3:eq1} is bounded above by $O_p( \alpha^ {\min\{(2\rho+1)/2,2\}})$ because of Assumptions \ref{ass1}.\ref{ass7} and \ref{asskinv}.\footnote{Consider the examples of $q(\kappa, \alpha)$ mentioned in the paper. \cite{Carrasco2012} shows that $ {\sup}_j \widehat \kappa_j^{(2\rho+1)/2}(1-q(\widehat \kappa_j ,\alpha))$ is bounded above by $O_p   (\alpha^{(2\rho+1)/4}  )$ for spectral cut-off and $O_p (\alpha^{\min\{(2\rho+1)/4,1\}})$ for Tikhonov regularization. Therefore, $ \sup_j \widehat\kappa_j^{2\rho+1} (q(\widehat\kappa_j, \alpha) -1 )^2 \leq (\sup_j \widehat\kappa_j^{(2\rho+1)/2} (q(\widehat\kappa_j,\alpha)-1))^2 \leq O_p (\alpha^{\min \{  (2\rho+1)/2 , 2 \}} ).$ For ridge regularization, we have ${\sup_j} {\widehat \kappa_j}^{2\rho}(1-q(\widehat \kappa_j,\alpha))^2 \leq \sup_j \alpha^2  \widehat \kappa_j^{2\rho}({\widehat \kappa_j}^2 + \alpha^2)^{-1}  \leq O_p(\alpha^{\min\{2,2\rho\}}) , \label{kinv.tmp2}$ by using arguments similar in \citet[p.397]{Carrasco2012}. Hence,  \eqref{p3:eq1} is bounded by $O_p  (\alpha^{\tilde\rho}  )$ whose convergence rate depends on the regularization scheme.}
		
		Next, let  $v_{i\ell}$ be the $\ell$th row of $V_i$. The inequality below follows from the definitions of $\Vert \cdot \Vert_{\HS}$ and $\mathcal K_n$ and the well-known properties of $\Vert \cdot \Vert_\HS$ and $\Vert \cdot \Vert_{\op}$ (see, e.g., \citealp[p.36]{Bosq2000}).  
		\begin{align} 
			\frac{1}{n } \sum_{j=1} ^n \left\Vert \frac{1}{\sqrt{n}} \sum_{i=1} ^n \left(\mathcal K_{n\alpha} ^{-1}Z_i \otimes \Lambda_n ^{-1} V_i \right) Z_j \right\Vert ^2 &  = \left\Vert \Lambda_n ^{-1}\left(\frac{1}{\sqrt n} \sum_{i=1} ^n  Z_i \otimes V_i \right) \mathcal K_{n\alpha} ^{-1} \mathcal K_n ^{1/2 } \right\Vert _{\HS} ^2\nonumber\\
			&  \leq \Vert \Lambda_n ^{-1} \Vert_{\op} ^2 \left\Vert \mathcal K_{n\alpha} ^{-1}\mathcal K_{n} ^{1/2}\right\Vert_{\op} ^2\sum_{\ell = 1}^{d_e} \left\Vert \frac{1}{\sqrt n} \sum_{i=1} ^n    Z_i   v_{i\ell}\right\Vert_{\mathcal H} ^2  \nonumber
		\end{align}
		Due to Assumptions~\ref{ass1:finite} and \ref{asskinv}, $  \Vert \Lambda_n ^{-1} \Vert_{\op} $ and $ \Vert \mathcal K_{n\alpha} ^{-1}\mathcal K_{n} ^{1/2} \Vert_{\op}$ are respectively bounded above by $\mu_m^{-1}$ and $\alpha^{-1/2}$.\footnote{We here note that \eqref{p3:eq4} is bounded above by $O_p(1/(\mu_{m,n}^2 \sqrt{\alpha}))$ under Assumption \ref{ass10} for a later use.} In addition, the central limit theorem applied to $n^{-1/2}\sum_i Z_i v_{i\ell}$ (see \citealp[Theorem 2.7]{Bosq2000}  or \citealp[Theorem 2.2]{PR1994}) tells us  $\Vert n^{-1/2}\sum_i Z_i v_{i\ell}\Vert_{\mathcal H} =O_p(1)$ for each $\ell$. Hence, we have
		\begin{equation}
			\frac{1}{n } \sum_{j=1} ^n \left\Vert \frac{1}{\sqrt{n}} \sum_{i=1} ^n \left(\mathcal K_{n\alpha} ^{-1}Z_i \otimes \Lambda_n ^{-1} V_i \right) Z_j \right\Vert ^2= O_p\left(\frac{1}{\mu_{m,n} ^2 \alpha} \right).\label{p3:eq4} 
		\end{equation} 
		From \eqref{p3:eq1} and \eqref{p3:eq4}, we find that \begin{equation*}
			\frac{1}{n}\sum_{i=1} ^n  \Vert \mathcal S_n ^{-1}  (\widehat g_{i,\alpha} - g_i ) \Vert ^2  \leq \frac{4}{n} \sum_{i=1} ^n  \Vert \Pi_0 \mathcal K_n \mathcal K_{n\alpha} ^{-1} Z_i - \Pi_0 Z_i \Vert ^2 + \frac{4}{n}\sum_{j=1} ^n  \Vert \frac{1}{\sqrt{n}}\sum_{i=1} ^n \langle \mathcal K_{n\alpha} ^{-1}Z_i,Z_j\rangle_{\mathcal H} \Lambda_n ^{-1}V_i  \Vert ^2 =o_p(1)
		\end{equation*}
		from which it can be deduced that $\underset{\theta \in \Theta}{\sup}  | \mathcal Q_{jn}  (\theta , \widehat g_{i,\alpha} ) - \mathcal Q_{j}  (\theta, g_i )  | = o_p(1)$. 	
	\end{proof}
	
	\begin{proof}[Proof of Theorem \ref{prop4}] Theorem \ref{prop4} is obtained from Lemmas \ref{lem4}-\ref{lem7} as in the proof of Theorem 2 in \cite{Chen2003} applied to $\mathcal S_n ^{-1}\partial \mathcal Q  \left(\theta ,\mathcal G_i \left(\Pi\right)\right)/\partial \theta$.  
	\end{proof}
	\begin{proof}[Proof of Theorem \ref{prop6}]
		Theorem \ref{prop6} is a consequence of Lemmas \ref{lem8}-\ref{lem10} and Slutsky's theorem. 
	\end{proof} 
	
	\subsection{Appendix \ref*{sec:lem}: Lemmas \label{sec:lem}}	 
	We use $\mathcal H^r$ to denote the Cartesian product of $r$ copies of $\mathcal H$ equipped with the inner product $\langle h_1,h_2\rangle_{\mathcal H^{r}}  = \sum_{j=1} ^r \langle h_{1j},h_{2j} \rangle_{\mathcal H} $ and its induced norm $\Vert h_1\Vert_{\mathcal H^{r}} = (  \sum_{j=1} ^r \langle h_{1j} , h_{1j}\rangle_{\mathcal H} ^2 )^{1/2} $ for $ h_1  = (h_{11},\ldots ,h_{1r})' $ and $h_2= (h_{21},\ldots ,h_{2r})' $. In addition, let $\mathcal T_i $ denote an operator from $\mathcal H ^{ d_e}$ to $\mathbb R^{d_e}$ satisfying \[ \mathcal T_i h =  \left( \langle h_{1}, Z_i\rangle_{\mathcal H} ,\ldots,  \langle h_{d_e}, Z_i\rangle_{\mathcal H}  \right)', \] for  $ h  = (h_{1},\ldots ,h_{d_e})' \in \mathcal H^{d_e} $ Then, for $\Pi:\mathcal H \to \mathbb R^{d_e}$ and $\pi = (\Pi^\ast e_1 , \ldots , \Pi ^\ast e_{d_e})' \in \mathcal H^{d_e}$, we have $\Pi Z_i = \mathcal T_i \pi$. We let $\mathcal G_i (\pi)= ((\mathcal T_i \pi) ' , (Y_{2i} -\mathcal T_i \pi) ' )' =    ((\Pi Z_i)' , (Y_{2i} - \Pi Z_i)') 
	' $ so that $g_i$ and $\widehat g_{i,\alpha}$ can be understood as the function $\mathcal G_i (\cdot)$ evaluated at $\pi_n =  (\Pi_{n} ^\ast e_1 , \ldots , \Pi_{n} ^\ast e_{d_\ell} )'  $ and $\hat \pi_\alpha =  (\widehat \Pi_{\alpha} ^\ast e_1 , \ldots , \widehat \Pi_{\alpha} ^\ast e_{d_\ell} )'  $. In the following, $m_{1j}(\theta, g_i)$ is $\phi(g_i'\theta)(y_i - \Phi(g_i'\theta))/(\Phi(g_i'\theta)(1-\Phi(g_i'\theta))) $ for the RCMLE and $-\phi(g_i'\theta)(y_i - \Phi(g_i'\theta))$ for the RNLSE. In addition, $ \dot m_{2j} (\theta , g_i )  $ are defined in the main text. $\ddot{m}_{2N} (\theta, g_i) = -   (y_i - \Phi (g_i ' \theta ) )\phi '  (g_i ' \theta )$     and $\ddot m_{2M} (\theta , g_i ) = \phi '  (g_i ' \theta  ) (y_i - \Phi (g_i'\theta ))/(\Phi (g_i ' \theta ) (1-\Phi (g_i ' \theta ) )) $

	For both $j$, $\widehat \theta_j$ satisfies that $\mathcal M_j (\widehat\theta_j, \widehat{g}_{i,\alpha})  = 0$   (\citealt[Condition 2.1]{Chen2003}). Lemmas \ref{lem6} and \ref{lem6prime} verify the results implied by Conditions (2.2)-(2.4) in \cite{Chen2003} under our setup.   We focus on the proof for $j=N$; under Assumption \ref{ass9:finite}, the proofs for the RCMLE can be obtained in a similar manner. Thus, we omit the details.

	\begin{lemma}\label{lem6}\normalfont
		Suppose that the conditions in Theorem \ref{prop4} hold, and let $\Theta_{\delta} = \{   \theta \in \Theta :  \Vert \theta - \theta_{0}   \Vert \leq \delta     \}$. Then the following holds for $j \in \{M,N\}$. \begin{enumerate}[\text{(\ref*{lem6}.}1\text{)}]
			\item \label{lem6-1}The  derivative $\Gamma _{1,j}   (\theta , g_{i} )$ in $\theta$ of $\mathcal M_{j}  (\theta , g_{i}  )$ exists for $\theta \in \Theta_{\delta}$, and is continuous at $\theta = \theta _{0}$. Let $\Gamma _{1,0j}   \equiv \Gamma _{1,j}  (\theta_{0}  , g_{i}  )$ and then $\mathcal S_n ^{-1}\Gamma _{1,0j}  \mathcal S_n ^{\prime -1} $ is nonsingular for $n$ large enough. 
			\item  \label{lem6-3}  For $\theta \in \Theta_{\delta}$, the Fr\'echet derivative $ \Gamma_{2,j}  (\theta , \mathcal G_i  (\pi_{n}  ) ): \mathcal H ^{d_e} \to  \mathbb R^{2d_e}  $ of $\mathcal M_j (\theta, \mathcal G_i (\pi))$ in $\pi =  (\pi_{1} , \ldots , \pi_{d_e} )' \in \mathcal H^{ d_{e}}   $ at $\pi_{n} $ exists. $\Gamma_{2,0j} \equiv \Gamma_{2,j} (\theta_0 , \mathcal G_i (\pi_n))$. For all $   \theta  \in \Theta_{\delta_n}  $ with a positive sequence $ \delta_n= o(1)$, \begin{enumerate}[\text{(\ref*{lem6}.2.}1\text{)}] \item\label{lem4.2.1} $ \Vert\mathcal S_n ^{-1} ( \Gamma _{2,j}  (\theta ,g_{i} )- \Gamma_{2,j}  (\theta_{0}, g_{i} )  )  ( \widehat \pi_\alpha - \pi_{n}  )   \Vert  \leq o_p(1)\Vert \mathcal S_n ' (\theta - \theta_0)\Vert    $,\item\label{lem4.2.2} $\Vert \mathcal S_n^{-1} ( \mathcal M_j (\theta , \mathcal G_i (\hat\pi_\alpha)) - \mathcal M_j(\theta , \mathcal G_i (\pi_n) ) - \Gamma_{2,j} (\theta, \mathcal G_i (\pi_n)) (\hat\pi_\alpha - \pi_n) ) \Vert \leq  O( \mu_{m,n}^{-1}\sqrt{n} ) \Vert ( \hat\Pi_\alpha -\Pi_n )\mathcal K^{1/2} \Vert_{\HS}^2  $.
			\end{enumerate}
		\end{enumerate}
	\end{lemma}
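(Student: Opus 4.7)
The plan is to verify the three claims by differentiating the first-order condition $\mathcal M_j(\theta, g)$ once more in $\theta$ (for the first claim) and once as a Fréchet derivative in the functional argument $\pi$ (for the second), then absorbing the scaling $\mathcal S_n^{-1}$ using the identity $\mathcal S_n^{-1}g_i = g_{0i} = (f_i', V_i')'$.

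For the first claim, direct differentiation gives, for the RNLSE,
\[
\Gamma_{1,N}(\theta,g_i) = \mathbb{E}\bigl[\{\phi^2(g_i'\theta) - (y_i - \Phi(g_i'\theta))\phi'(g_i'\theta)\}g_i g_i'\bigr],
\]
with an analogous formula involving the inverse Mills ratio for the RCMLE. Continuity at $\theta_0$ follows from the smoothness of $\Phi,\phi$ together with dominated convergence, whose hypotheses are supplied by Assumption~\ref{ass9:finite}: $\sup_i\|f_i\|\le c$ a.s.n., $\mathbb E[\|V_i\|^2|x_i]<\infty$, and, for the RCMLE, $\Phi(g_i'\theta)$ being bounded away from $\{0,1\}$. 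For nonsingularity, I would write $\mathcal S_n^{-1}\Gamma_{1,0j}\mathcal S_n^{\prime-1} = \mathbb{E}[c_j(g_i'\theta_0)\,g_{0i}g_{0i}']$, where $c_N = \phi^2$ and $c_M = \phi^2/[\Phi(1-\Phi)]$ are bounded away from zero by Assumption~\ref{ass9:finite}.\ref{ass9-2}; positive-definiteness of $\mathbb E[g_{0i}g_{0i}']$ then follows from Assumption~\ref{ass3:finite}.\ref{ass33} together with that of $\mathbb E[V_iV_i'|x_i]$ in Assumption~\ref{ass9:finite}.\ref{ass9-1}.

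For the second claim, I first identify the Fréchet derivative at $\pi_n$. Since $\pi\mapsto \mathcal G_i(\pi) = (\mathcal T_i\pi,\,Y_{2i}-\mathcal T_i\pi)'$ is affine, the chain rule produces a bounded linear functional on $\mathcal H^{d_e}$ which, for any direction $h$, is the sum of a term from $\ddot m_j(\theta,g_i)g_i\theta'$ acting on $(\mathcal T_i h, -\mathcal T_i h)'$ and a term from $\dot m_j(\theta,g_i)$ acting directly on $(\mathcal T_i h, -\mathcal T_i h)'$; this establishes existence. For item \ref{lem4.2.1}, the coefficients of this derivative are Lipschitz in $\theta$ around $\theta_0$, so $\Gamma_{2,j}(\theta,g_i) - \Gamma_{2,j}(\theta_0,g_i)$ has vanishing operator norm uniformly over $\theta\in\Theta_{\delta_n}$; combining this with $\|\pi-\pi_n\|\le\mu_{m,n}\delta_n/\sqrt n$ and the $O(\sqrt n/\mu_{m,n})$ operator norm of $\mathcal S_n^{-1}$ on its first $d_e$ coordinates gives the $o(1)\delta_n$ bound. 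For item \ref{lem4.2.2}, I would Taylor-expand $\pi\mapsto \mathcal M_j(\theta,\mathcal G_i(\pi))$ to second order at $\pi_n$; the remainder is a quadratic form in $\hat\pi_\alpha-\pi_n$ with kernel involving $\mathcal T_i\otimes\mathcal T_i$. Since $\mathcal T_i(\hat\pi_\alpha-\pi_n) = (\hat\Pi_\alpha-\Pi_n)Z_i$ and $\mathbb E[Z_i\otimes Z_i]=\mathcal K$, the expected quadratic form is controlled by a constant multiple of $\|(\hat\Pi_\alpha-\Pi_n)\mathcal K^{1/2}\|_{\HS}^2$, uniformly in $\theta\in\Theta_{\delta_n}$ thanks to Assumption~\ref{ass9:finite}; the additional factor $O(\sqrt n/\mu_{m,n})$ is again the operator norm of $\mathcal S_n^{-1}$ on the endogenous-slope block.

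The main obstacle will be the block-diagonal bookkeeping in the second part: the quadratic remainder couples the endogenous-slope block (scaled by $\sqrt n\Lambda_n^{-1}$ through $\mathcal S_n^{-1}$) with the control-function block (unscaled), and one must check that cross terms do not accrue a scaling worse than $\sqrt n/\mu_{m,n}$. This should reduce to Cauchy--Schwarz between the two blocks, together with the uniform upper bounds provided by Assumptions~\ref{ass9:finite}.\ref{ass9-2}--\ref{ass9-3}.
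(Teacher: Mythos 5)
Your proposal is correct and follows essentially the same route as the paper: explicit computation of $\Gamma_{1,j}$ and the Fr\'echet derivative $\Gamma_{2,j}$ via the chain rule through the affine map $\pi\mapsto\mathcal G_i(\pi)$, Lipschitz/boundedness arguments from Assumption~\ref{ass9:finite} for the continuity and the $o(1)\delta_n$ bound, and the identity $\mathbb E[\Vert\mathcal T_i(\hat\pi_\alpha-\pi_n)\Vert^2]=\Vert(\hat\Pi_\alpha-\Pi_n)\mathcal K^{1/2}\Vert_{\HS}^2$ together with $\Vert\mathcal S_n^{-1}\Vert_{\op}=O(\sqrt n\,\mu_{m,n}^{-1})$ for the quadratic remainder. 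The only cosmetic difference is that the paper controls the remainder in \ref{lem4.2.2} by the mean-value theorem for G\^ateaux derivatives applied to $\Gamma_{2,j}$ along the segment $[\pi_n,\hat\pi_\alpha]$ rather than a second-order Taylor expansion, which yields the same bound.
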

	\begin{proof}[Proof of Lemma \ref{lem6}]
		$\Gamma _{1,j}  (\theta , g_{i}  ) = \mathbb E  [ \dot m_{2j}  (\theta , g_{i } ) g_{i} g_{i} '   ] +  \mathbb E  [\ddot{m}_{2j}  (\theta , g_{i } )g_{i} g_{i}' ]$.    By the law of iterated expectations,  $\mathbb E  [\ddot{m}_{2j}  (\theta_0 , g_{i } )g_{i} g_{i}' ]$  is equal to 0. Then, \ref{lem6-1} follows from the continuity of $\dot m_{2j}(\theta, g_i)$ and $\ddot m_{2j}(\theta, g_i)$ with respect to $\theta$, Assumptions~\ref{ass3:finite} and~\ref{ass9:finite}, and the boundedness of $\phi (\cdot )$ and $\phi'(\cdot)$. 
		
		To prove \ref{lem4.2.1}, note that the Fr\'echet derivative of $\mathcal T_i h$ in $h$ is $ \mathcal T_i $ for any $h \in \mathcal H^{ d_e} $. Hence, by the chain rule, the Fr\'echet derivative of $ \mathcal M_{j}  (\theta , \mathcal G_i (\pi) )$ at $\pi_{n}$ exists, and we have
		\begin{equation}
			\begin{aligned}
				\mathcal S_n ^{-1} \Gamma_{2,N}  ( \theta, g_{i}   )  (\pi - \pi_{n} ) = &\mathbb E  [\dot{m}_{2N} (\theta, g_i )  g_{0i}\psi ' \mathcal T_i   ](\pi - \pi_{n} )+\mathbb E  [\ddot{m}_{2N} (\theta, g_i )  g_{0i}  \psi'\mathcal T_i ] (\pi - \pi_{n} ) \\
				& - \mathbb E  [m_{1N}  (\theta , g_i ) \widetilde{\mathcal S}_n ^{-1}   \mathcal T_i]  (\pi-\pi_n),
			\end{aligned}\label{eq:lem3-1}
		\end{equation} where $\widetilde{\mathcal S}_n ^{-1} = \mathcal S_n ^{-1} (\mathcal I_{d_e} , - \mathcal I_{d_e})'$. We only prove that the first term in the RHS of \eqref{eq:lem3-1} satisfies  \ref{lem4.2.1}; the proofs for the remaining terms are similar. We have  
		\begin{align}
			& \Vert \mathbb E  [\dot{m}_{2N} (\theta, g_i )  g_{0i}\psi ' \mathcal T_i   ](\widehat \pi_\alpha - \pi_{n} ) - \mathbb E  [\dot{m}_{2N} (\theta_0, g_i )  g_{0i}\psi_0 ' \mathcal T_i   ](\widehat\pi_\alpha - \pi_{n} )\Vert \nonumber\\
			&\leq  \Vert \mathbb E [( \phi ^2 (g_{i} ' \theta ) - \phi ^2 (g_{i} ' \theta_{0}  )   ) g_{i0} \psi ' \mathcal T_i ] (\widehat\pi_\alpha - \pi_{n} )  \Vert  +  \Vert \mathbb E [\phi ^2 (g_{i} ' \theta_{0}  ) g_{i0} (\psi - \psi_{0} ) ' \mathcal T_i ](\widehat\pi_\alpha- \pi_{n} )  \Vert \nonumber\\ 
			&\leq   O(1)  \mathbb E [  \Vert g_{i0} \Vert ^2 \Vert Z_i \Vert_{\mathcal H}      ] \Vert \mathcal S_n ' (\theta - \theta_0)\Vert  \Vert\widehat\pi_\alpha - \pi_n\Vert_{\mathcal H^{d_e}}  + \phi^2 (0) \mathbb E [  \Vert g_{i0} \Vert\Vert Z_i\Vert_{\mathcal H}  ]  \Vert \psi - \psi_0\Vert \Vert \widehat\pi_\alpha - \pi_n\Vert_{\mathcal H^{d_e}}  \nonumber\\ 
			&= O(1) \Vert \mathcal S_n ' (\theta - \theta_0)\Vert \Vert \widehat{\pi}_\alpha - \pi_n\Vert_{\mathcal H^{d_e}} = o_p(1) \times \Vert \mathcal S_n ' (\theta - \theta_0)\Vert  , \label{eqlem6tmp}
		\end{align}
		where the  inequalities  follow from the triangular inequality and the Lipschitz continuity   and the boundednesses of $\phi(\cdot)$. The last line is deduced from Assumption \ref{ass9:finite} and because $\Vert \widehat\pi_\alpha - \pi_n\Vert_{\mathcal H^{d_e}}  \leq \Vert \mathcal S_n ^{-1}( \widehat\Pi_\alpha - \Pi_n) \Vert_{\HS} \leq   \Vert \Pi_0(\mathcal K_{n\alpha} ^{-1}\mathcal K_n - \mathcal I)\Vert_{\HS} + O_p((\alpha ^{1/2}\mu_{m,n} ) ^{-1})\Vert n^{-1/2}\sum_i Z_i \otimes V_i\Vert_{\HS} = o_p(1) $.
		
		We then focus on \ref{lem4.2.2}. Given that $\mathcal M_j (\theta, \mathcal G_i (\pi))$ is Fr\'echet differentiable in $\pi$, it is G\^ateaux differentiable and, by the continuity of G\^ateaux derivative and the mean-value theorem, we have   \begin{align} 
			\Vert    \mathcal S_n^{-1}& ( \mathcal M_j (\theta , \mathcal G_i ( \pi)) - \mathcal M_j(\theta , \mathcal G_i (\pi_n) ) - \Gamma_{2,j} (\theta, \mathcal G_i (\pi_n)) ) (  \pi - \pi_n)    \Vert  \nonumber\\&\leq  \sup_{0\leq t \leq 1} \Vert   \mathcal S_n ^{-1} (\Gamma_{2,j} (\theta, \mathcal G_i ( \pi_n +t(  \pi - \pi_n))) -\Gamma_{2,j} (\theta, \mathcal G_i (\pi_n)))(  \pi - \pi_n) \Vert ,\label{eq:tmp111}
		\end{align} 
		see \citet[Theorem 3.2.7]{drabek2007methods}. Using the representation of $\Gamma_{2,j}(\cdot,\cdot)$ in \eqref{eq:lem3-1},   \eqref{eq:tmp111} can be decomposed into terms each of which is associated with each term in the RHS of \eqref{eq:lem3-1}. Then, by using similar arguments in proving \eqref{eqlem6tmp} and the fact that $\theta'(\mathcal G_i (\pi_n + t (  \pi - \pi_n) )- \mathcal G_i (\pi_n))  = t  \theta ' (\mathcal I_{d_e} , -\mathcal I_{d_e}) '  \mathcal T_i (\pi - \pi_n)  $, it can be shown that \begin{equation*}
			\eqref{eq:tmp111} \leq  O(1)(\mathbb E[\Vert \mathcal T_i (\pi - \pi_n)\Vert ^2 \Vert g_{0i}\Vert  ] + \Vert \mathcal S_n ^{-1}\Vert_{\op} \mathbb E[  \Vert \mathcal T_i (\pi - \pi_n)\Vert ^2 ]    ).
		\end{equation*}
		Then, the desired result is obtained because $\mathbb E[ \Vert \mathcal T_i (\pi - \pi_n) \Vert ^2 ] = \sum_{\ell=1} ^{2d_e} \mathbb E[  \langle Z_i, \pi_\ell - \pi_{n\ell} \rangle_{\mathcal H} ^2  ] = \sum_{\ell=1} ^{2d_e}   \langle \mathcal K (\Pi - \Pi_n) ^\ast e_\ell ,  (\Pi - \Pi_n) ^\ast e_\ell \rangle_{\mathcal H}  = \Vert \mathcal K^{1/2} (\Pi - \Pi_n )^\ast \Vert_{\HS}^2   $, $\Vert \mathcal S_n^{-1}\Vert_{\op} = O(\mu_{m,n}^{-1}\sqrt{n})$, and $\mathbb E[\Vert g_{0i}\Vert |x ] \leq c$ a.s.n.
			\end{proof}
			Next, we show that the term appearing in \ref{lem4.2.2} is bounded above by $o_p(n^{-1/2})$. This and Lemma \ref{lem4.2.2} imply the result obtained from Conditions (2.3).(i) and (2.4) in \cite{Chen2003}.
			\begin{lemma}\label{lem6prime}\normalfont
				Suppose that the conditions in Theorem \ref{prop4} hold. Then, $ 	\Vert \mathcal K^{1/2} (\mathcal K_{n\alpha} ^{-1} \mathcal K_n - \mathcal I) \pi_{\ell,0} \Vert_{\mathcal H}  ^2 = o_p (n^{-1})$   for all $\ell=1,\ldots,d_e$ and  $  \mu_{m,n}^{-1}\sqrt{n}  \Vert ( \widehat \Pi_\alpha - \Pi_n )\mathcal K^{1/2} \Vert_{\HS}^2  = o_p(n^{-1/2} ) $.
			\end{lemma}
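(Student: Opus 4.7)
The plan is to isolate in each of the two claims a spectral/truncation contribution (controlled by the tail condition $\rho \geq 3/2$ of Assumption~\ref{ass10}.\ref{ass10-2} together with the rate $\sqrt n \alpha \to 0$) and a sampling contribution (controlled by Assumption~\ref{ass10}.\ref{ass10-1} together with $\mu_{m,n}^2 \alpha \to \infty$). Throughout I will move between $\mathcal K^{1/2}$ and $\mathcal K_n^{1/2}$ using the perturbation bounds $\Vert \mathcal K-\mathcal K_n\Vert_{\op} = O_p(n^{-1/2})$ and $\Vert \mathcal K^{1/2}-\mathcal K_n^{1/2}\Vert_{\op}=O_p(n^{-1/4})$ from \citet{Bosq2000}, and transfer the summability condition in Assumption~\ref{ass1}.\ref{ass7} to the empirical eigensystem $\{\hat\kappa_j,\hat\varphi_j\}$ via the same eigenfunction consistency argument already invoked in the proof of Theorem~\ref{thm1:consistency}.

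For the first claim I would write $\Vert\mathcal K^{1/2}h\Vert^2 = \Vert \mathcal K_n^{1/2}h\Vert^2 + \langle (\mathcal K-\mathcal K_n)h,h\rangle$ with $h=(\mathcal K_{n\alpha}^{-1}\mathcal K_n-\mathcal I)\pi_{\ell,0}$, and apply the spectral decomposition $h = -\sum_j (1-q(\hat\kappa_j,\alpha)) \langle \hat\varphi_j,\pi_{\ell,0}\rangle \hat\varphi_j$ to obtain
\begin{equation*}
\Vert \mathcal K_n^{1/2}h\Vert^2 = \sum_j \hat\kappa_j^{2\rho+1}(1-q(\hat\kappa_j,\alpha))^2 \cdot \frac{\langle \hat\varphi_j,\pi_{\ell,0}\rangle^2}{\hat\kappa_j^{2\rho}} \leq \Bigl(\sup_j \hat\kappa_j^{2\rho+1}(1-q(\hat\kappa_j,\alpha))^2\Bigr) \cdot O_p(1),
\end{equation*}
where the $O_p(1)$ is Assumption~\ref{ass1}.\ref{ass7} transferred to the empirical spectrum. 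With $\rho\geq 3/2$ the argument appearing in the footnote of the proof of Theorem~\ref{thm1:consistency} (adapted to exponent $2\rho+1\geq 4$) yields $O_p(\alpha^2)$, which is $o_p(n^{-1})$ since $\sqrt n\alpha\to 0$. A coarser version of the same spectral estimate gives $\Vert h\Vert^2 = O_p(\alpha^{\rho\wedge 2}) = O_p(\alpha^{3/2})$, so the residual is at most $\Vert \mathcal K-\mathcal K_n\Vert_{\op}\cdot\Vert h\Vert^2 = O_p(n^{-1/2}\alpha^{3/2})$, which is again $o_p(n^{-1})$ because $\sqrt n\alpha\to 0$ forces $n^{1/2}\alpha^{3/2}\to 0$.

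For the second claim I decompose
\begin{equation*}
\widehat\Pi_\alpha - \Pi_n \;=\; \Pi_n\bigl(\mathcal K_n\mathcal K_{n\alpha}^{-1}-\mathcal I\bigr) \;+\; \Bigl(n^{-1}\sum_i Z_i\otimes V_i\Bigr)\mathcal K_{n\alpha}^{-1}.
\end{equation*}
For the first (bias) summand, the identity $\Pi_n^{\ast}e_{\ell}=\sum_{k=1}^{d_e}a_{\ell k}\pi_{k,0}$ with $|a_{\ell k}|\leq c\mu_{kn}/\sqrt n\leq c$ reduces $\Vert \Pi_n(\mathcal K_n\mathcal K_{n\alpha}^{-1}-\mathcal I)\mathcal K^{1/2}\Vert_{\HS}^2 = \sum_{\ell}\Vert \mathcal K^{1/2}(\mathcal K_{n\alpha}^{-1}\mathcal K_n-\mathcal I)\Pi_n^{\ast}e_{\ell}\Vert^2$ to finitely many applications of the first claim, giving $o_p(n^{-1})$. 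For the sampling summand I use $\Vert n^{-1}\sum_i Z_i\otimes V_i\Vert_{\HS}^2=O_p(n^{-1})$ from the CLT in $\mathcal H^{d_e}$, together with the operator-norm bound $\Vert \mathcal K_n^{1/2}\mathcal K_{n\alpha}^{-1}\Vert_{\op}^2=\sup_j \hat\kappa_j^{-1}q(\hat\kappa_j,\alpha)^2\leq c\alpha^{-1/2}$ provided by Assumption~\ref{ass10}.\ref{ass10-1}; after absorbing the cross term $\mathcal K^{1/2}-\mathcal K_n^{1/2}$ (of smaller order since $\mu_{m,n}^2\alpha\to\infty$ forces $\sqrt n\sqrt\alpha\to\infty$), this gives $\Vert(n^{-1}\sum_i Z_i\otimes V_i)\mathcal K_{n\alpha}^{-1}\mathcal K^{1/2}\Vert_{\HS}^2 = O_p(n^{-1}\alpha^{-1/2})$. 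Multiplying the two contributions by $\mu_{m,n}^{-1}\sqrt n$ and using $\mu_{m,n}\to\infty$ and $\mu_{m,n}^2\alpha\to\infty$ yields $o_p(\mu_{m,n}^{-1}n^{-1/2})+O_p(\mu_{m,n}^{-1}n^{-1/2}\alpha^{-1/2}) = o_p(n^{-1/2})$, as required.

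The main obstacle is the simultaneous handling of two perturbations: the non-commutativity between $\mathcal K^{1/2}$ and $\mathcal K_{n\alpha}^{-1}$ (whose spectral basis is that of $\mathcal K_n$, not of $\mathcal K$), and the transfer of Assumption~\ref{ass1}.\ref{ass7} from the population eigensystem to the empirical one with uniform-in-$n$ control of $\sum_j\langle\hat\varphi_j,\pi_{\ell,0}\rangle^2/\hat\kappa_j^{2\rho}$. I would import these transfer arguments from \citet{Carrasco2012} as already cited in the proof of Theorem~\ref{thm1:consistency}, rather than redoing them.
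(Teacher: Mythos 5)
Your treatment of the second claim coincides with the paper's: the same decomposition of $\widehat\Pi_\alpha-\Pi_n$ into the bias part $\Pi_n(\mathcal K_{n\alpha}^{-1}\mathcal K_n-\mathcal I)$ (reduced to the first claim through the rows of $\Pi_n$) and the sampling part $(n^{-1}\sum_i Z_i\otimes V_i)\mathcal K_{n\alpha}^{-1}$, the same bounds $\Vert n^{-1}\sum_i Z_i\otimes V_i\Vert_{\HS}^2=O_p(n^{-1})$ and $\Vert\mathcal K^{1/2}\mathcal K_{n\alpha}^{-1}\Vert_{\op}^2=O_p(\alpha^{-1/2})$, and the same final appeal to $\mu_{m,n}^2\alpha\to\infty$. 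The problem is in your first claim.

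There, you bound $\Vert\mathcal K_n^{1/2}(\mathcal K_{n\alpha}^{-1}\mathcal K_n-\mathcal I)\pi_{\ell,0}\Vert^2$ by $\sup_j\hat\kappa_j^{2\rho+1}(1-q(\hat\kappa_j,\alpha))^2$ times $\sum_j\langle\hat\varphi_j,\pi_{\ell,0}\rangle^2/\hat\kappa_j^{2\rho}$ and declare the second factor $O_p(1)$ by ``transferring'' Assumption~\ref{ass1}.\ref{ass7} to the empirical eigensystem. That transfer is the gap. Eigenvalue and eigenfunction consistency controls each fixed pair $(\hat\kappa_j,\hat\varphi_j)$ at rate $n^{-1/2}$, but the sum runs over all $j$ and is dominated by indices for which $\hat\kappa_j$ is far smaller than $\kappa_j$ (indeed $\hat\kappa_j=0$ for $j>n$); writing $\pi_{\ell,0}=\mathcal K^{\rho}w_\ell$, the quantity is essentially $\Vert\mathcal K_n^{-\rho}\mathcal K^{\rho}w_\ell\Vert^2$, and $\mathcal K_n^{-\rho}\mathcal K^{\rho}$ is not a bounded perturbation of the identity. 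Because the target rate here is $o_p(n^{-1})$ --- far finer than the $o_p(1)$ needed in Theorem~\ref{thm1:consistency}, where the paper does use this device in \eqref{p3:eq1} --- an uncontrolled factor cannot be absorbed. The same empirical source condition also underlies your bound $\Vert h\Vert^2=O_p(\alpha^{\rho\wedge 2})$ for the remainder $\langle(\mathcal K-\mathcal K_n)h,h\rangle$, so both pieces of your first-claim argument lean on it. The paper avoids the transfer entirely: it first isolates the purely population quantity $\Vert\mathcal K^{1/2}(\mathcal K_{\alpha}^{-1}\mathcal K-\mathcal I)\pi_{\ell,0}\Vert^2$, where Assumption~\ref{ass1}.\ref{ass7} applies verbatim and yields $O_p(\alpha^2)=o_p(n^{-1})$, and then controls the difference $\mathcal K^{1/2}(\mathcal K_{n\alpha}^{-1}\mathcal K_n-\mathcal K_{\alpha}^{-1}\mathcal K)\pi_{\ell,0}$ through a three-term operator identity (inserting $\mathcal K_{\alpha}-\mathcal K+\mathcal K-\mathcal K_n+\mathcal K_n-\mathcal K_{n\alpha}$) whose pieces are bounded using only operator-norm facts such as $\Vert\mathcal K^{1/2}\mathcal K_{n\alpha}^{-1}\mathcal K^{1/2}\Vert_{\op}=O_p(1)$ and $\Vert\mathcal K_n-\mathcal K\Vert_{\op}=O_p(n^{-1/2})$, never an empirical source norm. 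You should restructure the first claim along those lines.
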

			\begin{proof}For each $\ell$, we have that
				\begin{equation}
					\Vert \mathcal K^{1/2} (\mathcal K_{n\alpha} ^{-1} \mathcal K_n -\mathcal I) \pi_{\ell,0} \Vert_{\mathcal H}  ^2 \leq 2 \Vert \mathcal K^{1/2}  (\mathcal K_{\alpha} ^{-1} \mathcal K - \mathcal I) \pi_{\ell,0} \Vert_{\mathcal H} ^2  + 2 \Vert \mathcal K^{1/2} (\mathcal K_{n\alpha} ^{-1} \mathcal K_n - \mathcal K_{\alpha} ^{-1} \mathcal K)\pi_{\ell,0}\Vert_{\mathcal H}  ^2 . \label{lem.eq1}
				\end{equation}
				Using similar arguments in obtaining \eqref{p3:eq1}, we bound the first summand as follows.
				\begin{equation}
					\Vert \mathcal K^{1/2} (\mathcal K_{\alpha} ^{-1} \mathcal K - \mathcal I) \pi_{\ell,0} \Vert_{\mathcal H}  ^2 \leq  (\sup_j \kappa_j ^{\rho+1/2 } (q(\kappa_j,\alpha)-1))^2 \sum_{j=1} ^\infty \kappa_j ^{-2\rho}\langle \pi_{\ell,0} , \varphi_j \rangle_{\mathcal H} ^2  = O ( \alpha ^2 ), \label{lem.eq2}
				\end{equation}
				by Assumptions \ref{ass1}.\ref{ass7}, \ref{asskinv} and \ref{ass10}. Because $\alpha \sqrt{n } =o(1)$, \eqref{lem.eq2} is $ o_p(n^{-1})$. We next focus on the second summand in \eqref{lem.eq1}. To this end,  let $\mathcal K_\alpha$ denote the generalized inverse of $\mathcal K_\alpha^{-1}$ such that $
				\mathcal K_\alpha   = \sum_{j :  q(\kappa_j,\alpha) > 0} \frac{\kappa_j}{q(\kappa_j,\alpha)} \varphi_j \otimes \varphi_j 
				$, and $\mathcal K_{n\alpha}$ is similarly defined with $\{ \widehat{\kappa}_j , \widehat{\varphi}_j \}_{j \geq 1}$. Then, the following holds. \begin{equation}
					\Vert\mathcal K ^{-\rho}\pi_{\ell,0} \Vert_{\mathcal H}  =O(1) ,\quad\Vert \mathcal K_{n\alpha} ^{-1/2} \mathcal K \mathcal K_{n\alpha} ^{-1/2}\Vert_{\op}= O_p(1)\quad\text{and}\quad \Vert \mathcal K^{1/2} \mathcal K_{n\alpha} ^{-1}\mathcal K^{1/2}\Vert_{\op}  = O_p(1) ,\label{eq: help: tmp}
				\end{equation}because $\Vert \mathcal K_{n\alpha} ^{-1/2} \mathcal K \mathcal K_{n\alpha} ^{-1/2}\Vert_{\op} \leq \Vert \mathcal K_{n\alpha} ^{-1/2}\Vert_{\op}\Vert  \mathcal K - \mathcal K_n \Vert_{\op} \Vert  \mathcal K_{n\alpha} ^{-1/2}\Vert_{\op} + \Vert \mathcal K_{n\alpha} ^{-1}\mathcal K_n\Vert_{\op}\leq O_p( (n\alpha)^{-1/2}) + \sup_j q(\kappa_j ,\alpha) \leq o_p(1) + 1$, and the last is obtained from the second bound from that $\Vert \mathcal A^\ast \mathcal A\Vert_{\op} = \Vert \mathcal A\Vert_{\op}^2 = \Vert \mathcal A^{\ast}\Vert_{\op}^2 = \Vert\mathcal A \mathcal A^\ast \Vert_{\op} $ for any linear bounded operator $\mathcal A$. The term inside the norm of the second summand in \eqref{lem.eq1} satisfies that\begin{align}
					&\mathcal K^{1/2} (\mathcal K_{n\alpha} ^{-1}\mathcal K_n - \mathcal K_{\alpha}^{-1} \mathcal K) \pi_{\ell,0} = \mathcal K^{1/2} (\mathcal K_{n\alpha} ^{-1} - \mathcal K_{\alpha} ^{-1}) \mathcal K \pi_{\ell,0} +\mathcal K^{1/2} \mathcal K_{n\alpha} ^{-1} (\mathcal K_n - \mathcal K) \pi_\ell  \nonumber\\
					&=\mathcal K^{1/2} \mathcal K_{n\alpha} ^{-1} (\mathcal K_{\alpha} -\mathcal K+\mathcal K_n- \mathcal K_{n\alpha} +\mathcal K -\mathcal K_n )\mathcal K_{\alpha} ^{-1} \mathcal K \pi_{\ell,0}+ \mathcal K^{1/2} \mathcal K_{n\alpha} ^{-1} (\mathcal K_n - \mathcal K) \pi_{\ell,0}   \nonumber\\
					& = \mathcal K ^{1/2} \mathcal K_{n\alpha} ^{-1} (\mathcal K_{\alpha} - \mathcal K) \mathcal K_{\alpha} ^{-1} \mathcal K  \pi_{\ell,0}+\mathcal K^{1/2} \mathcal K_{n\alpha} ^{-1} (\mathcal K_n - \mathcal K_{n\alpha}) \mathcal K_{\alpha} ^{-1} \mathcal K  \pi_{\ell,0}+\mathcal K^{1/2} \mathcal K_{n\alpha} ^{-1} (\mathcal K_n - \mathcal K)(\mathcal I - \mathcal K_{\alpha} ^{-1} \mathcal K) \pi_{\ell,0} . \nonumber
				\end{align}
				Using similar arguments in \eqref{lem.eq2}, the bounds in \eqref{eq: help: tmp} and the fact $\mathcal K\mathcal K_\alpha^{-1}=\mathcal K_\alpha^{-1}\mathcal K$, we find
				\begin{equation*} 
					\Vert \mathcal K ^{1/2} \mathcal K_{n\alpha} ^{-1} (\mathcal K_{\alpha} - \mathcal K) \mathcal K_{\alpha} ^{-1} \mathcal K  \pi_{\ell,0}\Vert_{\mathcal H}  ^2 \leq  \Vert \mathcal K^{1/2} \mathcal K_{n\alpha} ^{-1}  \mathcal K^{1/2}  (\mathcal I - \mathcal K \mathcal K_{\alpha}^{-1}) \mathcal K^{\rho+1/2} \Vert_{\op} ^2  \Vert \mathcal K^{-\rho}  \pi_{\ell,0}\Vert_{\mathcal H}  ^2  = O_p(\alpha^ 2). \label{lem.eq3}  
				\end{equation*}
				Similarly, we have $ \Vert 	\mathcal K^{1/2} \mathcal K_{n\alpha} ^{-1} (\mathcal K_n - \mathcal K_{n\alpha}) \mathcal K_{\alpha} ^{-1} \mathcal K  \pi_{\ell,0} \Vert_{\mathcal H} ^2  = O_p(\alpha^2) $. Lastly, because of Assumption~\ref{ass1}.\ref{ass7} and the facts that $\Vert \mathcal K_n - \mathcal K\Vert_{\HS} = O_p(n^{-1/2})$ (see, e.g., \citealt[Theorem 4.1]{Bosq2000}) and $\Vert \mathcal K^{1/2}\mathcal K_{n\alpha} ^{-1}\Vert_{\op} ^2 =\Vert \mathcal K_{n\alpha} ^{-1} \mathcal K \mathcal K_{n\alpha} ^{-1}\Vert_{\op}= O_p(1) \Vert \mathcal K_{n\alpha} ^{-1} \Vert_{\op} = O_p(\alpha^{-1/2})$, we have that 
				\begin{equation*} 
					\Vert \mathcal K^{1/2} \mathcal K_{n\alpha} ^{-1} (\mathcal K_n - \mathcal K)(\mathcal I - \mathcal K_{\alpha} ^{-1} \mathcal K) \pi_{\ell,0}\Vert_{\mathcal H}  ^2  \leq \Vert \mathcal K^{1/2}\mathcal K_{n\alpha} ^{-1}\Vert_{\op}^2 \Vert   \mathcal K_n - \mathcal K  \Vert_{\op} ^2 \Vert (\mathcal I - \mathcal K \mathcal K_{\alpha} ^{-1}) \mathcal K^{\rho}\Vert_{\op} ^2  =o_p(n^{-1}).  \label{lem.eq4}
				\end{equation*} 
				Because $\alpha^2 n =o(1)$ and the inequality $2ab \leq  a^2 + b^2$ for $a,b \in \mathbb R$, we conclude that $\Vert \mathcal K^{1/2} (\mathcal K_{n\alpha} ^{-1} \mathcal K_n - \mathcal K_{\alpha} ^{-1} \mathcal K ) \pi_{\ell,0}\Vert_{\mathcal H} ^2 = o_p(n^{-1})$. This concludes the proof of the first part of the lemma.
				
				We then consider the second part. By the definition of $\widehat{\Pi}_n$, we find that \begin{equation}  
					\Vert (\widehat{\Pi}_\alpha - \Pi_n )\mathcal K^{1/2}\Vert_{\HS}  ^2   \leq 2 \Vert \Pi_n(\mathcal K_{n\alpha} ^{-1} \mathcal K_n - \mathcal I) \mathcal K^{1/2}  \Vert_{\HS}  ^2 +2   \Vert  \mathcal K^{1/2}  \mathcal K_{n\alpha} ^{-1}\frac{1}{n} \sum_{i=1} ^n V_i \otimes Z_i  \Vert _{\HS} ^2 . \label{teq1}
				\end{equation}
				Because of the symmetricity of $\mathcal K$, $\mathcal K_{n\alpha}^{-1}$ and $\mathcal K_n$, the first term satisfies that $\Vert \Pi_n(\mathcal K_{n\alpha} ^{-1} \mathcal K_n - \mathcal I) \mathcal K^{1/2}  \Vert_{\HS}  ^2 \leq  \Vert  \mathcal K^{1/2}(\mathcal K_{n\alpha} ^{-1} \mathcal K_n - \mathcal I)  \Pi_0 ^\ast  \Vert_{\HS}  ^2 = \sum_{\ell=1} ^{d_e} 	\Vert \mathcal K^{1/2} (\mathcal K_{n\alpha} ^{-1} \mathcal K_n -\mathcal I) \pi_{\ell,0} \Vert_{\mathcal H}  ^2$. Therefore, by the first part of the lemma, it is bounded above by $o_p(n^{-1})$. Furthermore, because of the central limit theorem in  \citet[Theorem 4.1]{Bosq2000} and \citet[Theorem 3.1]{hormann2010},  $\Vert n^{-1}\sum_{i=1} ^n V_i \otimes Z_i\Vert_{\HS} ^2 = O_p(n^{-1})$. Combining this with the bound $\Vert \mathcal K^{1/2}\mathcal K_{n\alpha} ^{-1} \Vert_{\op} ^2 = O_p(\alpha^{-1/2})$, the second term in \eqref{teq1} is bounded above by $O_p(n^{-1}\alpha ^{-1/2})$. Thus, the desired result is obtained due to $\mu_{m,n} ^2 \alpha \to \infty$.
			\end{proof}

			Lemmas \ref{lem4} and \ref{lem5} verify the result implied by Condition (2.5) in \cite{Chen2003} via Jain-Marcus Theorem (\citealt[p.213]{Vaart1996}). 
			\begin{lemma}\label{lem4}\normalfont
				Suppose that Assumption \ref{ass8} holds. Then, $
				\int_{0} ^\infty \sqrt{\log{\mathfrak N (  \epsilon, \mathcal H^{d_e },  \Vert \cdot \Vert_{\mathcal H}  )}} \partial \epsilon <\infty  $.
			\end{lemma}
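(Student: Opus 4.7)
The plan is to bound the covering number of $\mathcal{H}^{d_e}$ in terms of that of $\mathcal{H}$ and then transfer the integrability condition from Assumption~\ref{ass8}. Since the norm on $\mathcal{H}^{d_e}$ is $\Vert h \Vert = (\sum_{\ell=1}^{d_e} \Vert h_\ell \Vert^2)^{1/2}$, a natural strategy is to cover each coordinate separately by $\mathcal{H}$-balls of radius $\epsilon/\sqrt{d_e}$ and take products.

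Concretely, fix $\epsilon>0$ and let $\{\mathcal{B}_k(\epsilon/\sqrt{d_e})\}_{k=1}^{N_\epsilon}$ be a minimal covering of $\mathcal{H}$, where $N_\epsilon = \mathfrak{N}(\epsilon/\sqrt{d_e}, \mathcal{H}, \Vert \cdot \Vert)$. For any $h=(h_1,\ldots,h_{d_e}) \in \mathcal{H}^{d_e}$, each coordinate $h_\ell$ lies in some $\mathcal{B}_{k_\ell}(\epsilon/\sqrt{d_e})$ centered at $h_{k_\ell}$, and hence $\Vert h - (h_{k_1},\ldots,h_{k_{d_e}}) \Vert^2 = \sum_{\ell=1}^{d_e} \Vert h_\ell - h_{k_\ell}\Vert^2 \leq d_e \cdot (\epsilon/\sqrt{d_e})^2 = \epsilon^2$. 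Thus the $N_\epsilon^{d_e}$ product-centers form an $\epsilon$-cover of $\mathcal{H}^{d_e}$, giving
\begin{equation*}
\mathfrak{N}(\epsilon, \mathcal{H}^{d_e}, \Vert \cdot \Vert) \;\leq\; \mathfrak{N}(\epsilon/\sqrt{d_e}, \mathcal{H}, \Vert \cdot \Vert)^{d_e}.
\end{equation*}
Taking the logarithm and then the square root yields $\sqrt{\log \mathfrak{N}(\epsilon, \mathcal{H}^{d_e}, \Vert\cdot\Vert)} \leq \sqrt{d_e}\,\sqrt{\log \mathfrak{N}(\epsilon/\sqrt{d_e}, \mathcal{H}, \Vert\cdot\Vert)}$.

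Integrating both sides with respect to $\epsilon$ on $(0,\infty)$ and changing variables $u = \epsilon/\sqrt{d_e}$ (so $d\epsilon = \sqrt{d_e}\, du$), one obtains
\begin{equation*}
\int_0^\infty \sqrt{\log \mathfrak{N}(\epsilon, \mathcal{H}^{d_e}, \Vert\cdot\Vert)}\, d\epsilon \;\leq\; d_e \int_0^\infty \sqrt{\log \mathfrak{N}(u, \mathcal{H}, \Vert\cdot\Vert)}\, du,
\end{equation*}
which is finite by Assumption~\ref{ass8}. Since the argument is a routine product-space estimate and the substitution is elementary, there is no substantive obstacle; the only point worth verifying carefully is the inequality $\mathfrak{N}(\epsilon, \mathcal{H}^{d_e}, \Vert\cdot\Vert) \leq \mathfrak{N}(\epsilon/\sqrt{d_e}, \mathcal{H}, \Vert\cdot\Vert)^{d_e}$, which follows directly from the Pythagorean-type decomposition of the product norm.
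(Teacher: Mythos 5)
Your argument is correct and is essentially the same as the paper's: both cover $\mathcal H^{d_e}$ by products of $\epsilon/\sqrt{d_e}$-balls (equivalently, $\epsilon_1$-balls with $\epsilon = \sqrt{d_e}\,\epsilon_1$), use the Pythagorean decomposition of the product norm to verify the cover, obtain $\mathfrak N(\epsilon,\mathcal H^{d_e},\Vert\cdot\Vert)\leq \mathfrak N(\epsilon/\sqrt{d_e},\mathcal H,\Vert\cdot\Vert)^{d_e}$, and conclude by a change of variables together with Assumption~\ref{ass8}. No gaps.
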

			\begin{proof}[Proof of Lemma \ref{lem4}]
				Let $\mathfrak N = \mathfrak N (\epsilon_1 , \mathcal H , \Vert \cdot \Vert_{\mathcal H}  )$ and $\{\mathcal B_k (\epsilon_1)\}_{k=1} ^{\mathfrak N}$ be the collection of $\epsilon_1$-balls centered at $\{h_k \}_{k=1} ^{\mathfrak N}$ satisfying $\mathcal H \subset\bigcup_{k=1} ^\mathfrak{N} \mathcal B_k (\epsilon_1)  $. We also let $\mathfrak D$ be the set of $d_e$-tuples over $\{h_k \}_{k=1} ^{\mathfrak N}$. Then, for $\epsilon = d_e^{1/2} \epsilon_1$, we can define the collection of $\epsilon$-balls, denoted $\{\mathcal D_k (\epsilon)\}_{k=1} ^{\mathfrak{N}^{d_e}}$, each of which is centered at an element of $\mathfrak D$. Note that, for any $\pi \in \mathcal H ^{d_e}$ and $\ell=1,\ldots, d_e$, there exists $h_\ell ^\ast \in  \{h_k \}_{k=1} ^{\mathfrak N}$ such that $\Vert e_\ell ' \pi - h_\ell ^\ast \Vert_{\mathcal H}  \leq \epsilon_1$ by the definition of $\{h_k \}_{k=1} ^{\mathfrak N}$. Therefore, for any $\pi \in \mathcal H ^{d_e}$, there always exists some $h^\ast = (h_1 ^\ast ,\ldots , h_{d_e} ^\ast)' $ such that $\Vert \pi - h^\ast \Vert_{\mathcal H^{d_e}} \leq (d_e \max_{1\leq \ell \leq d_e} \Vert e_\ell ' \pi - h_\ell ^\ast\Vert_{\mathcal H}^2)^{1/2}  \leq \epsilon$. Hence, $\mathcal H ^{ d_e} \subset  \bigcup_{k=1} ^{\mathfrak{N}^{d_{e}}} \mathcal D_k (\epsilon)$, and thus, $	\int_{0} ^\infty \sqrt{\log{\mathfrak N (\epsilon, \mathcal H ^{d_{e}} ,  \Vert \cdot \Vert_{\mathcal H}  )}}   \partial \epsilon \leq   \int_{0} ^\infty \sqrt{d_e \log{\mathfrak N  }}   \partial \epsilon_1 <\infty  $ by Assumption \ref{ass8}. 
			\end{proof}
			
			\begin{lemma}\label{lem5}\normalfont
				Suppose that the conditions in Theorem~\ref{prop4} hold. Let $\widetilde m_{1j,\ell}  (\theta , \mathcal G_i (\pi) ) = e_\ell ' \partial m_j  (\theta ,  \mathcal G_i (\pi))/\partial \theta  $. Then, $ \widetilde m_{1j,\ell}  (\theta ,  \mathcal G_i (\pi) )$ is H\"older continuous with respect to $\theta$ and $\pi$ for $j \in \{M , N\}$ and $\ell = 1,\ldots,2d_e$.
			\end{lemma}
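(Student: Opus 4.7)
The plan is to verify the Hölder continuity/envelope Conditions (3.1)–(3.2) of \cite{Chen2003} for the scalar score components $\widetilde m_{1j,\ell}(\theta, \mathcal G_i(\pi))$ on a shrinking neighborhood $\mathcal N_n := \Theta_{\delta_n} \times \mathbb L_{\delta_n}$ of $(\theta_0, \pi_n)$. For $j=N$ the score is $\widetilde m_{1N,\ell}(\theta, \mathcal G_i(\pi)) = (y_i - \Phi(\mathcal G_i(\pi)'\theta))\,\phi(\mathcal G_i(\pi)'\theta)\, e_\ell'\mathcal G_i(\pi)$, and for $j=M$ it is the same expression divided by $\Phi(\mathcal G_i(\pi)'\theta)(1-\Phi(\mathcal G_i(\pi)'\theta))$. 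I will use the identity $\mathcal G_i(\pi_1) - \mathcal G_i(\pi_2) = (\mathcal I_{d_e}', -\mathcal I_{d_e}')'\mathcal T_i(\pi_1 - \pi_2)$ together with $\|\mathcal T_i(\pi_1 - \pi_2)\| \leq \|Z_i\|\|\pi_1-\pi_2\|$ throughout.

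For Condition (3.1), the goal is the local Lipschitz bound
$$|\widetilde m_{1j,\ell}(\theta_1, \mathcal G_i(\pi_1)) - \widetilde m_{1j,\ell}(\theta_2, \mathcal G_i(\pi_2))| \leq C_i\bigl(\|\theta_1 - \theta_2\| + \|\pi_1 - \pi_2\|\bigr)$$
on $\mathcal N_n$, from which the $L^2$ modulus of continuity at rate $\delta_n$ follows immediately. I would obtain this by adding and subtracting the two intermediate configurations $(\theta_1, \mathcal G_i(\pi_2))$ and $(\theta_2, \mathcal G_i(\pi_2))$, and applying (i) the global boundedness and Lipschitz continuity of $\phi$ and $\Phi$, (ii) for $j=M$, the Lipschitz continuity of $a\mapsto \phi(a)/[\Phi(a)(1-\Phi(a))]$ on any compact sub-interval of $\mathbb R$, which is legitimate here because Assumption~\ref{ass9:finite}.\ref{ass9-2} forces $\sup_i \Phi(g_i'\theta)$ to be bounded away from $0$ and $1$ a.s.n.\ uniformly on $\Theta$, and (iii) the factorization $\mathcal G_i(\pi_1)'\theta_1 - \mathcal G_i(\pi_2)'\theta_2 = \mathcal G_i(\pi_1)'(\theta_1-\theta_2) + (\mathcal G_i(\pi_1) - \mathcal G_i(\pi_2))'\theta_2$ to separate the $\theta$- and $\pi$-increments. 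The resulting envelope is a polynomial in $\|Z_i\|$ and $\|g_i\|$ of degree two, with coefficients that are uniform over $\mathcal N_n$ once $\delta_n$ is small enough.

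For Condition (3.2), I need $\mathbb E[C_i^2]<\infty$ uniformly in $n$. The dominant terms in $C_i$ are of the form $\|Z_i\|\cdot\|g_i\|$ and $\|Z_i\|^2$, so Cauchy–Schwarz combined with Assumption~\ref{ass9:finite}.\ref{ass9-0:finite} ($\mathbb E[\|Z_i\|^4]<\infty$), Assumption~\ref{ass9:finite}.\ref{ass9-1} ($\mathbb E[\|V_i\|^4|x_i]<\infty$), and Assumption~\ref{ass9:finite}.\ref{ass9-3} ($\sup_i\|f_i\|\leq c$ a.s.n., so $\|g_i\|\leq c + \|V_i\|$) yields the required finite second moment. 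Together with the entropy integrability of $\mathcal H^{d_e}$ established in Lemma~\ref{lem4}, this delivers the stochastic equicontinuity that Conditions (3.1)–(3.2) are designed to produce.

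The main obstacle is the bookkeeping for $j=M$: the factor $\phi/[\Phi(1-\Phi)]$ is neither bounded nor Lipschitz on all of $\mathbb R$, and it is precisely the role of Assumption~\ref{ass9:finite}.\ref{ass9-2} to confine $g_i'\theta$ to a region where both properties hold. A secondary subtlety is that the cross-term $(\mathcal G_i(\pi_1) - \mathcal G_i(\pi_2))'\theta_2$ produces a $\|Z_i\|\|\pi_1-\pi_2\|$ contribution, so a square envelope requires the fourth-moment condition $\mathbb E[\|Z_i\|^4]<\infty$ rather than only the second-moment condition used in Theorem~\ref{thm1:consistency}; all other pieces reduce to routine Cauchy–Schwarz and the bounds already exploited in Lemmas~\ref{lem6} and~\ref{lem6prime}.
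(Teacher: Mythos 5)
Your proposal is correct and follows essentially the same route as the paper: a triangle-inequality decomposition of the score difference, the Lipschitz continuity and boundedness of $\Phi$ and $\phi$ (with Assumption~\ref{ass9:finite}.\ref{ass9-2} taming the factor $\phi/[\Phi(1-\Phi)]$ for $j=M$), Cauchy--Schwarz, and the moment bounds $\mathbb E[\Vert Z_i\Vert^4]<\infty$ and $\mathbb E[\Vert V_i\Vert^4\,|\,x_i]<\infty$ from Assumption~\ref{ass9:finite}. The only (immaterial) difference is organizational: the paper assigns the $\theta$-increment terms to Condition (3.1) with envelope $\Vert Z_i\Vert^2$ and verifies Condition (3.2) directly for the $\langle \pi_{1\ell}-\pi_{2\ell},Z_i\rangle$ term, whereas you establish a single Lipschitz bound for the whole score and deduce (3.2) from square-integrability of the envelope.
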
\begin{proof}[Proof of Lemma \ref{lem5}] We will focus on the proof for the first $d_e$ $\ell$'s, since those for the remainder can be obtained in a similar manner. 	For a linear operator $\Pi_1:\mathcal H \to \mathbb R^{d_e}$ (resp.\ $\Pi_2:\mathcal H \to \mathbb R^{d_e}$),  let $\pi_{1\ell} = \Pi_1 ^\ast e_\ell$ (resp.\ $\pi_{2\ell} = \Pi_2 ^\ast e_\ell$) and, for each $s=1,2$,  let $\Phi_s = \Phi    (\mathcal G_i ' (\pi_s)  \theta_s  )$ and $\phi_s = \phi    (\mathcal G_i ' (\pi_s)    \theta_s  )$.   
				By the triangular inequality, we have\begin{align} 
					&|\widetilde m_{1N,\ell}  (\theta_1 ,\mathcal G_i  (\pi_1 ) )  -\widetilde m_{1N,\ell} (\theta_2 ,\mathcal G_i  (\pi_2 ) ) |\nonumber\\&	\leq  |      (\Phi_1  - \Phi_2 )  \phi_1    | |\langle \pi_{1\ell}, Z_i\rangle_{\mathcal H}  |       + |   (y_i - \Phi _2 ) (\phi _1  - \phi_2   )   | |  \langle \pi_{1\ell}, Z_i\rangle_{\mathcal H}  |  +  |  (y_i - \Phi_2 )\phi_2   | |\langle \pi_{1\ell}  -   \pi_{2\ell}, Z_i\rangle _{\mathcal H}  | .\nonumber
				\end{align} 
				Because of the Lipschitz continuity of $\Phi(\cdot)$, the triangular inequality, and the Cauchy-Schwarz inequality, the first term in the RHS satisfies 
				\begin{equation}
					|      (\Phi_1 - \Phi_2  ) \phi_1     | |\langle \pi_{1\ell}, Z_i\rangle_{\mathcal H} | \leq  (\phi(0))^2  (\Vert \theta_1 - \theta_2 \Vert \Vert \pi_{1,\ell}\Vert_{\mathcal H}  + \Vert \pi_1 -\pi_2 \Vert_{\mathcal H^{d_e}}\Vert \theta_2\Vert  )\Vert  \pi_{1,\ell}\Vert_{\mathcal H}\Vert Z_i\Vert_{\mathcal H} ^2 .  \label{lem5eq1} \nonumber
				\end{equation}
				Similarly, the second term satisfies
				\begin{equation}
					|   (y_i - \Phi_1  ) (\phi_1     - \phi_2    )   | |\langle \pi_{1\ell} ,Z_i\rangle_{\mathcal H}  |  \leq 2 |\phi'(1)| (\Vert \theta_1 - \theta_2 \Vert \Vert \pi_{1,\ell}\Vert_{\mathcal H}  + \Vert \pi_1 -\pi_2 \Vert_{\mathcal H^{d_e}}\Vert \theta_2\Vert  )\Vert  \pi_{1,\ell}\Vert_{\mathcal H}\Vert Z_i\Vert_{\mathcal H} ^2 .
					\label{lem5eq2} \nonumber
				\end{equation}
				
				\noindent The last term is bounded as follows:
				\begin{equation}
					|  (y_i - \Phi_2 )\phi_2   | |\langle \pi_{1\ell}  -   \pi_{2\ell}, Z_i\rangle _{\mathcal H}  |\leq 2\phi(0)\Vert  \pi_{1\ell}  -   \pi_{2\ell}\Vert_{\mathcal H^{d_e}} \Vert Z_i\Vert_{\mathcal H}. \nonumber
				\end{equation} 
				If we let  $\Vert n^{-1/2}Z_i\Vert_\mathcal H ^2$ be the bound $M_{ni}$ in \citet[p.213]{Vaart1996}, then it  satisfies  $\sum_i    \mathbb E[\Vert n^{-1/2}Z_i\Vert_\mathcal H ^4 ] = O(1)$ under Assumption~\ref{ass9:finite}.  
				%
			\end{proof}

			Lemma \ref{lem7} proves Condition (2.6) in \cite{Chen2003}.
			\begin{lemma}\label{lem7}\normalfont
				Suppose that the conditions for Theorem \ref{prop4} hold. Then, for $j\in \{M,N\}$, as $n \hto \infty$\begin{equation*}
					\sqrt{n} \mathcal J_{j} ^{-1/2}\mathcal S_n ^{-1} \left( \mathcal M_{jn} \left(\theta_{0} , g_{i}\right) + \Gamma_{2,0j} \left(\widehat \pi_\alpha - \pi_{n}\right)\right) \dto \mathcal N \left(0, I_{2d_e} \right).
				\end{equation*} 
			\end{lemma}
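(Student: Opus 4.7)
The plan is to decompose the left-hand side into a second-stage score piece, a first-stage regularization bias, and a first-stage variance contribution; show that the bias is $o_p(1)$; and establish a joint CLT for the remaining two via Cram\'er--Wold plus a Lindeberg--Feller triangular-array argument. From the representation of $\Gamma_{2,j}$ in \eqref{eq:lem3-1} (and its analog for $j=M$), at $\theta=\theta_0$ the summands containing $\ddot m_{2j}$ and $m_{1j}$ are conditionally mean zero given $(x_i,V_i)$ by the same iterated-expectations argument used in Lemma~\ref{lem6}, so $\mathcal S_n^{-1}\Gamma_{2,0j}(\widehat\pi-\pi_n) = \mathcal V_j\bigl[(\widehat\Pi_\alpha-\Pi_n)^*\psi_0\bigr]$. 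Splitting the first-stage error,
\begin{equation*}
(\widehat\Pi_\alpha-\Pi_n)^*\psi_0 = (\mathcal K_{n\alpha}^{-1}\mathcal K_n-\mathcal I)\Pi_n^*\psi_0 + \mathcal K_{n\alpha}^{-1}\frac{1}{n}\sum_{i=1}^n (\psi_0'V_i)Z_i ,
\end{equation*}
yields $\sqrt n\,\mathcal S_n^{-1}[\mathcal M_{jn}(\theta_0,g_i)+\Gamma_{2,0j}(\widehat\pi-\pi_n)] = A_n+B_n+C_n$, where $A_n=n^{-1/2}\sum_i\mathcal S_n^{-1}\widetilde m_{1j}(\theta_0,g_i)$, $B_n=\sqrt n\,\mathcal V_j(\mathcal K_{n\alpha}^{-1}\mathcal K_n-\mathcal I)\Pi_n^*\psi_0$, and $C_n=\mathcal V_j\mathcal K_{n\alpha}^{-1}\bigl(n^{-1/2}\sum_i\psi_0'V_iZ_i\bigr)$.

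For $B_n$, the factorization $\mathcal V_j=\mathcal C_j\mathcal K^{1/2}$ (obtained from $\mathcal K^{1/2}\mathcal C_j^*=\mathcal V_j^*$ together with boundedness of $\mathcal C_j$ via Baker's theorem) and the identity $\Pi_n^*\psi_0 = \Pi_0^*\Lambda_n'\psi_0/\sqrt n$ give
\begin{equation*}
\|B_n\| \leq \|\mathcal C_j\|_{\op}\,\|\Lambda_n'\psi_0\|\cdot\max_{1\leq\ell\leq d_e}\bigl\|\mathcal K^{1/2}(\mathcal K_{n\alpha}^{-1}\mathcal K_n-\mathcal I)\pi_{\ell,0}\bigr\|.
\end{equation*}
By Lemma~\ref{lem6prime} the maximum is $o_p(n^{-1/2})$, and since $\|\Lambda_n\|_{\op}=O(\sqrt n)$, we conclude $B_n=o_p(1)$.

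For $A_n+C_n$ I would invoke the Cram\'er--Wold device: for each $a\in\mathbb R^{2d_e}$, $a'(A_n+C_n)=n^{-1/2}\sum_{i=1}^n\zeta_{ni}$ with $\zeta_{ni}=a'\mathcal S_n^{-1}\widetilde m_{1j}(\theta_0,g_i)+\psi_0'V_i\langle\mathcal K_{n\alpha}^{-1}\mathcal V_j^*a,Z_i\rangle$. Replacing $\mathcal K_{n\alpha}^{-1}$ by its deterministic counterpart $\mathcal K_\alpha^{-1}$ costs $o_p(1)$, because $\|\mathcal K_{n\alpha}^{-1}-\mathcal K_\alpha^{-1}\|_{\op}=o_p(1)$ and $n^{-1/2}\sum\psi_0'V_iZ_i$ is $O_p(1)$ in $\mathcal H$, leaving an iid mean-zero array to which a standard Lyapunov CLT applies (the needed $(2+\delta)$-moment bound follows from Assumption~\ref{ass9:finite}). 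The score piece contributes $a'\mathcal J_{1,j}a$ directly; the first-stage piece contributes
\begin{equation*}
\sigma_{\psi_0}^2\,\langle\mathcal C_j^*a,\,\mathcal K^{1/2}\mathcal K_\alpha^{-1}\mathcal K\,\mathcal K_\alpha^{-1}\mathcal K^{1/2}\mathcal C_j^*a\rangle \longrightarrow \sigma_{\psi_0}^2\,\|\mathcal C_j^*a\|^2 = a'\mathcal J_{2,j}a,
\end{equation*}
using Assumption~\ref{asskinv} and the inclusion $\cl(\ran\mathcal V_j^*)\subset\cl(\ran\mathcal K)$ (Remark~\ref{rem4}), which guarantees that $\mathcal K^{1/2}\mathcal K_\alpha^{-1}\mathcal K^{1/2}$ converges to the identity on the relevant subspace. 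The cross-covariance between the two pieces vanishes because $\widetilde m_{1j}(\theta_0,g_i)$ is conditionally mean zero given $(x_i,V_i)$ while $\psi_0'V_iZ_i$ is $(x_i,V_i)$-measurable; the limit variance of $a'(A_n+C_n)$ is therefore $a'\mathcal J_ja$.

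The hardest step is the analysis of $C_n$: the operator $\mathcal V_j\mathcal K_{n\alpha}^{-1}$ depends on the full sample through $\mathcal K_n$, its operator norm diverges as $\alpha=\alpha_n\to 0$, and $\alpha$ itself drifts with $n$. The structural fact that rescues the argument is the Baker-type factorization $\mathcal V_j=\mathcal C_j\mathcal K^{1/2}$ combined with the growth control $\alpha^{1/2}\kappa^{-1}q(\kappa,\alpha)\leq c$ from Assumption~\ref{ass10}.\ref{ass10-1}: together these keep the composition bounded on the limit subspace $\cl(\ran\mathcal K)$ supporting the Hilbert-space CLT limit of $n^{-1/2}\sum\psi_0'V_iZ_i$, so the triangular-array argument closes cleanly and the conclusion follows upon multiplying by $\mathcal J_j^{-1/2}$.
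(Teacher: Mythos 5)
Your overall architecture tracks the paper's proof closely: the decomposition into a score term, a regularization-bias term, and a first-stage variance term; the treatment of the bias term via the Baker factorization $\mathcal V_j=\mathcal C_j\mathcal K^{1/2}$ together with Lemma~\ref{lem6prime}; the vanishing cross-covariance by conditional mean-zero of the score given $(x_i,V_i)$; and the identification of the limit variance through $\mathcal K^{1/2}\mathcal K_\alpha^{-1}\mathcal K\mathcal K_\alpha^{-1}\mathcal K^{1/2}\to\mathcal P_{(\ker\mathcal K)^\perp}$ all correspond to the paper's Steps 1 and 3. The gap is in your one-line justification for replacing $\mathcal K_{n\alpha}^{-1}$ by $\mathcal K_\alpha^{-1}$ in $C_n$. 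The claim $\Vert\mathcal K_{n\alpha}^{-1}-\mathcal K_\alpha^{-1}\Vert_{\op}=o_p(1)$ is false under the maintained rates: both operators have operator norm of order $\alpha^{-1/2}\to\infty$, and their difference does not shrink in operator norm (for spectral cut-off, a sample eigenvalue crossing the threshold already produces a discrepancy of order $\alpha^{-1/2}$; for Tikhonov the natural resolvent bound is $O_p(n^{-1/2}\alpha^{-1})$, which diverges because $\sqrt n\,\alpha\to0$). Even the composed operator $\mathcal V_j(\mathcal K_{n\alpha}^{-1}-\mathcal K_\alpha^{-1})$ cannot be controlled in operator norm sharply enough to be multiplied against the $O_p(1)$ Hilbert-space bound on $n^{-1/2}\sum_i\psi_0'V_iZ_i$, so the triangular-array CLT you set up is not yet reached.

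The paper closes this step (its Step 2) by a genuinely different mechanism: it writes $\mathcal V_j(\mathcal K_{n\alpha}^{-1}-\mathcal K_\alpha^{-1})=\widetilde{\mathcal A}_1+\widetilde{\mathcal A}_2+\widetilde{\mathcal A}_3$ via a resolvent-type identity and, for each piece, applies Markov's inequality conditionally on $x$ to the specific random vector $n^{-1/2}\sum_iV_i'\psi_0\widetilde{\mathcal A}_kZ_i$, exploiting $\mathbb E[V_i\mid x_i]=0$ so that only the conditional second moment $\sigma_{\psi_0}^2\Vert\mathcal K_n^{1/2}\widetilde{\mathcal A}_k^\ast\Vert_{\HS}^2$ must be bounded. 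The $\mathcal K_n^{1/2}$ arising from the empirical covariance of $Z_i$ and the $\mathcal K^{1/2}$ from the Baker factorization then combine with Assumption~\ref{asskinv} to yield $O_p((n\alpha)^{-1})$, which vanishes precisely because $\mu_{m,n}^2\alpha\to\infty$ forces $n\alpha\to\infty$. You need this conditional second-moment argument (or an equivalent) to legitimize the substitution; an operator-norm bound on the difference of regularized inverses will not do.
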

			\begin{proof}[Proof of Lemma \ref{lem7}]
				We first note that the second term satisfies  \begin{align}
					&\sqrt{n } \mathcal S_n ^{-1} \Gamma_{2,0j}  (\widehat \pi - \pi_{n} )  
					=  \sqrt{n}\mathcal V_{j} (\mathcal K_{n\alpha}^{-1} \mathcal K_n - \mathcal I ) \Pi_n ^\ast \psi_0 + \mathcal V_{j} \mathcal K_{n\alpha}^{-1} \frac{1}{\sqrt{n}}\sum_{i=1} ^n V_i'\psi_0 Z_i , \nonumber\\
					& \quad=  \sqrt{n}\mathcal V_{j} (\mathcal K_{n\alpha}^{-1} \mathcal K_n - \mathcal I ) \Pi_n ^\ast \psi_0 + \frac{1}{\sqrt{n}}\sum_{i=1} ^n V_i'\psi_0 \mathcal V_{j} ( \mathcal K_{n\alpha}^{-1} - \mathcal K_\alpha ^{-1})Z_i + \mathcal V_{j} \mathcal K_{\alpha}^{-1} \frac{1}{\sqrt{n}}\sum_{i=1} ^n V_i'\psi_0 Z_i   . \label{eq:asy:new:eq1}
				\end{align}
				The proof consists of three steps. In the first two, we show the first two terms in \eqref{eq:asy:new:eq1} are $o_p(1)$, and then we derive the limiting distribution of the sum of the last term in \eqref{eq:asy:new:eq1} and $\sqrt{n}\mathcal S_n^{-1}\mathcal M_{jn}(\theta_0, g_i)$.
				
				\textbf{Step 1:}  Because there exists a bounded linear operator $\mathcal C_j$ such that $\mathcal V_j = \mathcal C_j \mathcal K ^{1/2}$ (\citealp[Theorem 1]{baker1973joint}),  Lemma~\ref{lem6prime} tells us that   \begin{equation*} 
					\Vert \mathcal V_j (\mathcal K_{n\alpha}^{-1}\mathcal K_n - \mathcal I) \Pi_n ^\ast \psi_0  \Vert  ^2 \leq \sum_{\ell=1} ^{d_e}\Vert \mathcal C_j \Vert_{\op} ^2 \Vert    \mathcal K^{1/2} (\mathcal K_{n\alpha}^{-1} \mathcal K_n -\mathcal I) \pi_{\ell,0} \Vert_{\mathcal H}  ^2  \Vert \psi_0\Vert ^2 = o_p(n^{-1}) . \label{eq:tmp1}
				\end{equation*}

				\textbf{Step 2:} We consider the second term in  \eqref{eq:asy:new:eq1}. Note that $  \mathcal V_j  (\mathcal K_{n\alpha} ^{-1} - \mathcal K_{\alpha} ^{-1})    $ can be decomposed into the sum of $\widetilde{\mathcal A}_{1}$, $\widetilde{\mathcal A}_2$ and $\widetilde{\mathcal A}_3$ each of which is defined by \begin{equation*} 
					\widetilde{\mathcal A}_1 = \mathcal C_j \mathcal K^{1/2} \mathcal K_{n\alpha} ^{-1} (\mathcal K_{n} - \mathcal K) \mathcal K_{\alpha} ^{-1},\quad \widetilde{\mathcal A}_2 =  \mathcal C_j \mathcal K^{1/2} \mathcal K_{n\alpha} ^{-1} ( \mathcal K - \mathcal K_\alpha ) \mathcal K_{\alpha} ^{-1}  ,\quad\widetilde{\mathcal A}_3 =   \mathcal C_j \mathcal K^{1/2} \mathcal K_{n\alpha} ^{-1} (\mathcal K_{n\alpha} - \mathcal K_n) \mathcal K_{\alpha} ^{-1} ,   \label{eq:tmp2}
				\end{equation*}
				where the subscript $j$'s on $\widetilde{\mathcal A}_{1}, \widetilde{\mathcal A}_2$ and $\widetilde{\mathcal A}_3$  are suppressed for the ease of exposition. We first consider $n^{-1/2}\sum_{i=1} ^n V_i ' \psi_0 \widetilde{\mathcal A}_1 Z_i$. To obtain the  result, we  decompose it   into $ \mathcal C_j \mathcal K^{1/2}\mathcal K_{n\alpha} ^{-1} n^{-1/2}\sum_{i=1} ^n \widetilde{\mathcal K}_{-i}V_i'\psi_0 \mathcal K_\alpha ^{-1}Z_i $ and $\mathcal C_j \mathcal K^{1/2}\mathcal K_{n\alpha} ^{-1} n^{-1/2}\sum_{i=1} ^n \widetilde{\mathcal K}_{i}V_i'\psi_0\mathcal K_\alpha ^{-1} Z_i $ with $\widetilde{\mathcal K}_{-i} = n^{-1}\sum_{\ell \neq i} (Z_\ell \otimes Z_\ell - \mathcal K)$ and $\widetilde{\mathcal K}_{i} = n^{-1}(Z_i \otimes Z_i -\mathcal K )$. Note that $\mathbb E[\widetilde{\mathcal K}_{-i}V_i'\psi_0 \mathcal K_\alpha ^{-1}Z_i ] = 0$ and $ \mathbb E[\Vert\widetilde{\mathcal K}_{-i}V_i'\psi_0 \mathcal K_\alpha ^{-1} Z_i \Vert _{\mathcal H} ^2 ]\leq \sigma_{  \psi_0} ^2 \Vert \mathcal K^{1/2}\mathcal K_{\alpha} ^{-1}\Vert_{\HS} ^2  \mathbb E[\Vert \widetilde{\mathcal K}_{-i}\Vert_{\HS} ^2 ]   \leq c \alpha ^{-1/2} \mathbb E[\Vert \widetilde{\mathcal K}_{-i}\Vert_{\HS} ^2 ]  = O_p(n^{-1}\alpha^{-1/2})$ (\citealt[Theorem 4.1]{Bosq2000}). Then, by Markov's inequality,  $\Vert n^{-1/2}\sum_i \widetilde{\mathcal K}_{-i}V_i'\psi_0 Z_i \Vert_{\mathcal H} = O_p(n^{-1/2}\alpha^{-1/4})$. Moreover, $ \Vert \mathcal K^{1/2}\mathcal K_{n\alpha} ^{-1}\Vert_{\op} ^2 \leq (1+o_p(1)) \Vert \mathcal K_{n\alpha} ^{-1} \Vert_{\op} = O_p(\alpha^{-1/2})$. Thus, we have  $$\Vert \mathcal C_j \mathcal K^{1/2}\mathcal K_{n\alpha} ^{-1} n^{-1/2}\sum_{i=1} ^n \widetilde{\mathcal K}_{-i}V_i'\psi_0 \mathcal K_\alpha ^{-1}Z_i\Vert  \leq\Vert \mathcal K^{1/2}\mathcal K_{n\alpha}^{-1}\Vert_{\op} O_p(n^{-1/2}\alpha^{-1/4})   =  O_p((n\alpha)^{-1/2}), $$ which is $o_p(1)$ because $\mu_m ^2 \alpha \to \infty$. The second term associated with $\widetilde{\mathcal K}_i$ satisfies  $\mathbb E[  \widetilde{\mathcal K}_i V_i'\psi_0 \mathcal K_\alpha^{-1}Z_i] = 0$ and $\mathbb E[ \Vert  \widetilde{\mathcal K}_i V_i'\psi_0 \mathcal K_\alpha^{-1}Z_i\Vert_{\mathcal H} ^2]  \leq O(n^{-2})\mathbb E[\Vert \mathcal K_{\alpha} ^{-1}\Vert_{\op}^2 \Vert Z_i\Vert_{\mathcal H}^4] = O(n^{-2}\alpha^{-1})$. Using similar arguments as above,   $\mathcal C_j \mathcal K^{1/2}\mathcal K_{n\alpha} ^{-1} n^{-1/2}\sum_{i=1} ^n \widetilde{\mathcal K}_{i}V_i'\psi_0\mathcal K_\alpha ^{-1} Z_i  = o_p(1)$, and thus  $n^{-1/2}\sum_{i=1} ^n V_i ' \psi_0 \widetilde{\mathcal A}_1 Z_i = o_p(1)$.  

					We now consider $n^{-1/2}\sum_{i=1} ^n V_i ' \psi_0 \widetilde{\mathcal A}_2 Z_i$. Note that $\Vert (\mathcal K_\alpha - \mathcal K)\mathcal K_\alpha ^{-1}\Vert_{\op} = O_p(\alpha^{1/2})$ (Assumption~\ref{asskinv} and arguments similar to those in \citealt[p.394]{Carrasco2012}) and $\Vert \mathcal K^{1/2}\mathcal K_{n\alpha} ^{-1}\Vert_{\op}   = O_p(\alpha^{-1/4})$. Therefore, $\Vert \widetilde{\mathcal A}_2 \Vert_{\op}  = O_p(\alpha^{1/4})$. Combining this with the central limit theorem applied to $n^{-1/2}\sum_i Z_i V_i'\psi_0$, we have  
					\begin{equation}
						\Vert n^{-1/2}\sum_{i=1} ^n V_i ' \psi_0 \widetilde{\mathcal A}_2 Z_i\Vert \leq \Vert \widetilde{\mathcal A}_2 \Vert_{\op} \Vert n^{-1/2}\sum_i V_i'\psi_0 Z_i\Vert_{\mathcal H} = o_p(1) .\label{eq:temp3}
					\end{equation}
					By using similar arguments, $n^{-1/2}\sum_i V_i'\psi_0 \widetilde{\mathcal A}_3 Z_i = o_p(1)$. Thus, the second term in  \eqref{eq:asy:new:eq1} is $ o_p(1)$.
						
						\textbf{Step 3:}	Let $\mathcal J_{2,j\alpha } = \sigma_{  \psi_0} ^2 \mathcal V_j \mathcal K_{\alpha} ^{-1} \mathcal K \mathcal K_{\alpha} ^{-1} \mathcal V_j^\ast$ and $\mathcal J_{j\alpha} = \mathcal J_{1,j} + \mathcal J_{2,j\alpha}$. Then, the following holds from the results in Steps 1 and 2, the central limit theorem, the continuous mapping theorem, and the orthogonality between $m_{1j} (\theta_0,g_i )   g_{0i}$ and $V_i '\psi_0\mathcal V_j \mathcal K ^{-1}  Z_i$:
						\begin{align}
							&\sqrt{n} \mathcal J_{j\alpha} ^{-1/2} \mathcal S_n ^{-1}  \left(\mathcal M _{jn}  (\theta_{0}  , g_{i } ) + \Gamma _{2j}  (\theta_{0}  , g   )  (\widehat \pi_\alpha - \pi_{n} ) \right) \nonumber\\
							&= \frac{1}{\sqrt{n}}\mathcal J_{j\alpha} ^{-1/2} \sum_{i=1} ^n (m_{1j} (\theta_0,g_i )  g_{0i}+  V_i ' \psi_0 \mathcal V_j \mathcal K_\alpha ^{-1} Z_i) +o_p(1)  \dto \mathcal N  (0, \mathcal I_{2d_e}  ). \nonumber
						\end{align} 
						
						It remains to show that $\Vert \mathcal J_{j\alpha} - \mathcal J_j\Vert_{\HS} = \Vert \mathcal J_{2,j\alpha }- \mathcal J_{2,j}\Vert_{\HS} = o_p(1) $. It follows from \citet[Theorem 1]{baker1973joint} and the commutative property between $\mathcal K^{1/2}$, $\mathcal K$ and $\mathcal K_{\alpha} ^{-1}$; specifically, we have $
						\Vert \mathcal J_{2,j\alpha }- \mathcal J_{2,j}\Vert_{\HS}   \leq \Vert\mathcal C_j\Vert_{\HS} ^2 \Vert\mathcal K_\alpha ^{-2}   \mathcal K^2 - \mathcal I  \Vert_{\op} = o(1),$ because of Assumption \ref{asskinv} and arguments similar to those in \citet[p.394]{Carrasco2012}. 
					\end{proof} 
					\begin{lemma}\label{lem8} \normalfont
						Suppose that the conditions in Theorem \ref{prop4} hold. Then, $
						\Vert \mathcal S_n ^{-1}  (\widehat \Gamma_{1,j} \hspace{-.1em}-\hspace{-.1em} \Gamma_{1,0j}  )\mathcal S_n ^{\prime -1}  \Vert_{\HS} \hspace{-.2em}=\hspace{-.2em} o_p(1) \label{lem8-eq2}
						$.
					\end{lemma}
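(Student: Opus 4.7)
The strategy is to move to the normalized coordinates $\widehat g_{0i,\alpha} := \mathcal S_n^{-1}\widehat g_{i,\alpha}$ and $g_{0i}=\mathcal S_n^{-1}g_i$, observing that the scalar arguments of $\dot m_{2j}$ are invariant under the rescaling, since $\widehat g_{i,\alpha}'\widehat\theta_j = \widehat g_{0i,\alpha}'\mathcal S_n'\widehat\theta_j$ and analogously for the population counterpart. With this in mind,
\begin{equation*}
\mathcal S_n^{-1}\widehat\Gamma_{1,j}\mathcal S_n^{\prime-1} = \frac{1}{n}\sum_{i=1}^n \dot m_{2j}(\widehat\theta_j,\widehat g_{i,\alpha})\,\widehat g_{0i,\alpha}\widehat g_{0i,\alpha}',\qquad \mathcal S_n^{-1}\Gamma_{1,0j}\mathcal S_n^{\prime-1} = \mathbb E[\dot m_{2j}(\theta_0,g_i)g_{0i}g_{0i}'],
\end{equation*}
and I would decompose the difference as $A_1+A_2+A_3$, where $A_1$ isolates the variation of the weight $\dot m_{2j}$, $A_2$ isolates the variation of the outer product $\widehat g_{0i,\alpha}\widehat g_{0i,\alpha}'-g_{0i}g_{0i}'$ at the true weight $\dot m_{2j}(\theta_0,g_i)$, and $A_3$ is the usual sample-average-versus-expectation term.

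The term $A_3$ is handled by the standard weak law of large numbers, since $\dot m_{2j}(\theta_0,g_i)$ is uniformly bounded (by Assumption~\ref{ass9:finite}.\ref{ass9-2} for the RCMLE and trivially for the RNLSE) and $\mathbb E[\Vert g_{0i}\Vert^4]<\infty$ under Assumptions~\ref{ass3:finite} and~\ref{ass9:finite}. For $A_2$, using $aa'-bb' = (a-b)a' + b(a-b)'$ together with two applications of Cauchy--Schwarz, one obtains
\begin{equation*}
\Vert A_2\Vert_{\HS} \leq c\Bigl(\tfrac{1}{n}\sum_i\Vert\widehat g_{0i,\alpha}-g_{0i}\Vert^2\Bigr)^{1/2}\Bigl(\bigl(\tfrac{1}{n}\sum_i\Vert\widehat g_{0i,\alpha}\Vert^2\bigr)^{1/2} + \bigl(\tfrac{1}{n}\sum_i\Vert g_{0i}\Vert^2\bigr)^{1/2}\Bigr),
\end{equation*}
where the first factor is $o_p(1)$ by the averaged bound $n^{-1}\sum_i\Vert\mathcal S_n^{-1}(\widehat g_{i,\alpha}-g_i)\Vert^2=O_p(\alpha^{(2\rho+1)/2})+O_p((\mu_{m,n}^2\alpha)^{-1})=o_p(1)$ already established inside the proof of Theorem~\ref{thm1:consistency}, while the second factor is $O_p(1)$ via the triangle inequality and the LLN applied to $\Vert g_{0i}\Vert^2$.

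The term $A_1$ is the main obstacle. Invoking the Lipschitz continuity of $\dot m_{2j}$ in its scalar argument gives $|\dot m_{2j}(\widehat\theta_j,\widehat g_{i,\alpha})-\dot m_{2j}(\theta_0,g_i)| \leq L|\widehat g_{0i,\alpha}'\mathcal S_n'\widehat\theta_j - g_{0i}'\mathcal S_n'\theta_0|$, and splitting this scalar difference as $(\widehat g_{0i,\alpha}-g_{0i})'\mathcal S_n'\widehat\theta_j + g_{0i}'\mathcal S_n'(\widehat\theta_j-\theta_0)$ produces the two bias channels. Noting that the operator norm of $\mathcal S_n$ is bounded under Assumption~\ref{ass1} (since every diagonal entry of $n^{-1/2}\Lambda_n$ satisfies $\mu_{jn}/\sqrt{n}\leq 1$) together with $\widehat\theta_j\pto\theta_0$ from Theorem~\ref{thm1:consistency}, one obtains $\Vert\mathcal S_n'\widehat\theta_j\Vert = O_p(1)$ and $\Vert\mathcal S_n'(\widehat\theta_j-\theta_0)\Vert = o_p(1)$. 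Combining these with $\Vert\widehat g_{0i,\alpha}\Vert^2\leq 2\Vert g_{0i}\Vert^2 + 2\Vert\widehat g_{0i,\alpha}-g_{0i}\Vert^2$ and repeated Cauchy--Schwarz reduces the bound on $\Vert A_1\Vert_{\HS}$ to products of the $L^2$ average of $\Vert\widehat g_{0i,\alpha}-g_{0i}\Vert$ (which is $o_p(1)$) with fourth-moment averages of $\Vert g_{0i}\Vert$ (which are $O_p(1)$ under Assumption~\ref{ass9:finite}), yielding $\Vert A_1\Vert_{\HS}=o_p(1)$ and hence the claim.
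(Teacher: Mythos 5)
Your proposal is correct and follows essentially the same route as the paper's proof: a telescoping decomposition into a weight-variation term, an outer-product-variation term, and a sample-versus-population term, each controlled by the averaged bound $n^{-1}\sum_{i}\Vert \mathcal S_n^{-1}(\widehat g_{i,\alpha}-g_i)\Vert^2=o_p(1)$ from the consistency proof together with the Lipschitz continuity and boundedness of $\phi$ (resp.\ $\dot m_{2j}$), Cauchy--Schwarz/H\"older, and a Markov/LLN argument for the last term. The only cosmetic difference is which factor (estimated versus true) is held fixed in each cross term of the telescoping sum.
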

					\begin{proof}[Proof of Lemma \ref{lem8}]
						\begin{equation*}
							\begin{aligned}
								& \Vert \mathcal S_n ^{-1}  (\widehat \Gamma_{1,N} - \Gamma_{1,0N}  )\mathcal S_n ^{\prime -1}  \Vert_\HS\leq   \Vert  \frac{1}{n} \sum_{i=1} ^n  \phi^2  (\widehat g_{i,\alpha} ' \widehat \theta_N )\mathcal S_n ^{-1} (\widehat g_{i,\alpha} \widehat g_{i,\alpha} ' -g_{i} g_{i} '  )\mathcal S_n ^{\prime -1}  \Vert_\HS \\
								& +   2\phi(0) \Vert\frac{1}{n}\sum_{i=1} ^n   (\phi    (g_i ' \theta_{0}  )   -\phi  (\widehat  g_{i,\alpha} ' \widehat \theta_N  ) )\mathcal S_n ^{-1} g_{i} g_{i} ' \mathcal S_n ^{\prime -1}  \Vert_\HS+ \Vert  \frac{1}{n} \sum_{i=1} ^n \mathcal S_n ^{-1}  (\phi^2 (g_{i} ' \theta_{0} )g_{i} g_{i} ' - \Gamma_{1,0N}  )\mathcal S_n ^{\prime -1} \Vert_\HS  .
							\end{aligned} 
						\end{equation*}
						The first term in the RHS satisfies
						\begin{align}
							&  n^{-1} \sum_{i=1} ^n \Vert\phi^2  (\widehat g_{i,\alpha} ' \widehat \theta_N )\mathcal S_n ^{-1} (\widehat g_{i,\alpha} \widehat g_{i,\alpha} ' -g_{i} g_{i} '  )\mathcal S_n ^{\prime -1}  \Vert   _\HS \nonumber\\
							&\leq \phi^2 (0)\left( n^{-1}\sum_{i=1} ^n   \Vert \mathcal S_n ^{-1 } (\widehat g_{i,\alpha} - g_i)\Vert^2 + 2     (n^{-1} \sum_{i=1 }^n \Vert  g_{i0} \Vert ^2 )^{1/2}  (n^{-1}\sum_{i=1} ^n \Vert \mathcal S_n^{-1} (\widehat g_{i,\alpha} - g_i) \Vert^2 )^{1/2} \right) , \label{eq:new:1}
						\end{align}
						where the inequality follows from the triangular inequality, the H\"older's inequality, and the boundedness of $\phi  (\cdot)$. The bounds in \eqref{p3:eq1} and \eqref{p3:eq4} tell us that \eqref{eq:new:1} is $ o_p(1)$. By using similar arguments and the Lipschitz continuity of $\phi (\cdot)$, we also have
						\begin{equation}
							n^{-1}\sum_{i=1} ^n \Vert (\phi   (g_i ' \theta_{0}  )   -\phi (\widehat  g_{i,\alpha} ' \widehat \theta_N  ) )\mathcal S_n ^{-1} g_{i} g_{i} ' \mathcal S_n ^{\prime -1}  \Vert _\HS \leq O_p(1) ( n^{-1} \sum_{i=1} ^n |g_i ' \theta_0 - \hat{g}_{i,\alpha} '\hat{\theta}_N|   ^2    )^{1/2}  	 = o_p(1)
							. \label{eq:new:2}
						\end{equation} 
						Lastly, $\mathbb E [\phi^2 (g_i ' \theta_0)\mathcal S_n^{-1}g_i g_i' \mathcal S_n^{\prime-1}] = \Gamma_{1,0N}$ and $\mathbb E [ \Vert \phi^2 (g_i ' \theta_0) \mathcal S_n ^{-1}g_i g_i' \mathcal S_n^{\prime-1}\Vert_\HS^2   ]  =O(1) $ by Assumptions \ref{ass3:finite} and \ref{ass9:finite}, and thus  
						\begin{equation}
							\Vert n^{-1} \sum_{i=1} ^n \mathcal S_n ^{-1} \left(\phi^2\left(g_{i} ' \theta_{0}\right)g_{i} g_{i} ' - \Gamma_{1,0N} \right)\mathcal S_n ^{\prime -1} \Vert_\HS  = o_p (1),\label{eq:new:3}
						\end{equation}
						by the law of large numbers. Hence, Lemma \ref{lem8} follows from \eqref{eq:new:1}-\eqref{eq:new:3}.
					\end{proof}
					\begin{lemma}\label{lem9} \normalfont
						Suppose that the conditions in Theorem \ref{prop4} hold. Then, $
						\Vert \mathcal S_n ^{-1}  \widehat {\mathcal J}_{1,j}\mathcal S_n ^{\prime -1}  -  {\mathcal J}_{1,j}   \Vert_\HS  = o_p(1) $.
					\end{lemma}
					The proof of Lemma~\ref{lem9} is similar to that of Lemma \ref{lem8}. Thus, the details are omitted. 
					\begin{lemma}\label{lem10} \normalfont
						Suppose that the conditions in Theorem \ref{prop4} hold. Then, $ \Vert \mathcal S_n ^{-1} \widehat {\mathcal J}_{2,j} \mathcal S_n ^{\prime -1} - \mathcal J_{2,0j}  \Vert _\HS  = \hspace{-.2em}o_p(1)$.
					\end{lemma}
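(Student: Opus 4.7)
The plan is to add and subtract intermediate quantities so as to isolate four distinct sources of error: (i) estimation error in $\widehat\sigma_j^2$, (ii) estimation error in $\widehat{\mathcal V}_{jn}$, (iii) sampling error in the inner sandwich $\mathcal K_{n\alpha}^{-1}\mathcal K_n\mathcal K_{n\alpha}^{-1}$, and (iv) regularization bias $\mathcal J_{2,j\alpha}-\mathcal J_{2,j}$, where $\mathcal J_{2,j\alpha}=\sigma_{\psi_0}^2\mathcal V_j\mathcal K_\alpha^{-1}\mathcal K\mathcal K_\alpha^{-1}\mathcal V_j^\ast$. The last piece is already handled by display \eqref{eq.lem.tmp4} in the proof of Lemma~\ref{lem7}, so the real work is bounding (i)--(iii).

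For (i), I would use that $\widehat\theta_j\overset{p}{\to}\theta_0$ implies $\widehat\psi_j\overset{p}{\to}\psi_0$ (since $\psi=\theta_{(d_e+1):2d_e}-\theta_{1:d_e}$), combined with the bound $n^{-1}\sum_i\Vert\widehat V_{i,\alpha}-V_i\Vert^2=o_p(1)$ already established in the proof of Theorem~\ref{thm1:consistency} and an ordinary WLLN on $n^{-1}\sum_iV_iV_i'$ (Assumption~\ref{ass9:finite}.\ref{ass9-1}) to conclude $\widehat\sigma_j^2\overset{p}{\to}\sigma_{\psi_0}^2$. For (ii), I would mimic the Lipschitz/Cauchy--Schwarz argument of Lemma~\ref{lem8}: write $\mathcal S_n^{-1}\widehat{\mathcal V}_{jn}=n^{-1}\sum_i\dot m_{2j}(\widehat\theta_j,\widehat g_{i,\alpha})Z_i\otimes\mathcal S_n^{-1}\widehat g_{i,\alpha}$, split into (a) replacing $(\widehat\theta_j,\widehat g_{i,\alpha})$ by $(\theta_0,g_i)$ using boundedness and Lipschitz continuity of $\dot m_{2j}$ and $n^{-1}\sum_i\Vert\mathcal S_n^{-1}(\widehat g_{i,\alpha}-g_i)\Vert^2=o_p(1)$, and (b) invoking the law of large numbers on $n^{-1}\sum_i\dot m_{2j}(\theta_0,g_i)Z_i\otimes g_{0i}$, justified by the fourth-moment conditions in Assumption~\ref{ass9:finite}. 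This yields $\Vert\mathcal S_n^{-1}\widehat{\mathcal V}_{jn}-\mathcal V_j\Vert_{\HS}=o_p(1)$.

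Step (iii) is the main obstacle, because the operators $\mathcal K_{n\alpha}^{-1}$ are unbounded as $\alpha\to 0$. The key is the Baker factorization $\mathcal V_j=\mathcal C_j\mathcal K^{1/2}$ with $\Vert\mathcal C_j\Vert_{\HS}<\infty$, which lets me write
\begin{equation*}
\mathcal V_j\bigl(\mathcal K_{n\alpha}^{-1}\mathcal K_n\mathcal K_{n\alpha}^{-1}-\mathcal K_\alpha^{-1}\mathcal K\mathcal K_\alpha^{-1}\bigr)\mathcal V_j^\ast=\mathcal C_j\mathcal K^{1/2}\bigl(\mathcal K_{n\alpha}^{-1}\mathcal K_n\mathcal K_{n\alpha}^{-1}-\mathcal K_\alpha^{-1}\mathcal K\mathcal K_\alpha^{-1}\bigr)\mathcal K^{1/2}\mathcal C_j^\ast,
\end{equation*}
so it suffices to bound the middle factor in operator norm. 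I would then decompose $\mathcal K_{n\alpha}^{-1}\mathcal K_n\mathcal K_{n\alpha}^{-1}-\mathcal K_\alpha^{-1}\mathcal K\mathcal K_\alpha^{-1}$ into telescoping pieces $(\mathcal K_{n\alpha}^{-1}-\mathcal K_\alpha^{-1})\mathcal K_n\mathcal K_{n\alpha}^{-1}+\mathcal K_\alpha^{-1}(\mathcal K_n-\mathcal K)\mathcal K_{n\alpha}^{-1}+\mathcal K_\alpha^{-1}\mathcal K(\mathcal K_{n\alpha}^{-1}-\mathcal K_\alpha^{-1})$, and then reuse exactly the operator-norm identities displayed in Step~2 of Lemma~\ref{lem7}: $\Vert\mathcal K^{1/2}\mathcal K_{n\alpha}^{-1}\Vert_{\op}=O_p(\alpha^{-1/4})$, $\Vert\mathcal K_n-\mathcal K\Vert_{\op}=O_p(n^{-1/2})$, and the identity $\mathcal K_{n\alpha}^{-1}-\mathcal K_\alpha^{-1}=\mathcal K_{n\alpha}^{-1}(\mathcal K_\alpha-\mathcal K_{n\alpha})\mathcal K_\alpha^{-1}$ together with $\mathcal K_{n\alpha}-\mathcal K_\alpha=(\mathcal K_n-\mathcal K)+(\mathcal K_\alpha-\mathcal K)-(\mathcal K_{n\alpha}-\mathcal K_n)$ and Assumption~\ref{asskinv}. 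Under the rate conditions of Theorem~\ref{prop4} ($\mu_{m,n}^2\alpha\to\infty$, $\sqrt n\alpha\to 0$), all these pieces collapse to $o_p(1)$.

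Finally, one combines (i)--(iii) with the regularization-bias bound \eqref{eq.lem.tmp4} by the triangle inequality and Slutsky's theorem. The cross terms involving $(\widehat\sigma_j^2-\sigma_{\psi_0}^2)$ and $(\mathcal S_n^{-1}\widehat{\mathcal V}_{jn}-\mathcal V_j)$ are absorbed using the uniform bound $\Vert\mathcal K^{1/2}\mathcal K_{n\alpha}^{-1}\mathcal K_n^{1/2}\Vert_{\op}=O_p(1)$ already recorded after \eqref{eq: help: tmp}, which keeps the sandwich operator norm bounded even as $\alpha\to 0$. This yields $\Vert\mathcal S_n^{-1}\widehat{\mathcal J}_{2,j}\mathcal S_n^{\prime-1}-\mathcal J_{2,j}\Vert_{\HS}=o_p(1)$ as claimed.
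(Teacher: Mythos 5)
Your overall architecture (isolating the $\widehat\sigma_j^2$ error, the $\widehat{\mathcal V}_{jn}$ error, the sandwich $\mathcal K_{n\alpha}^{-1}\mathcal K_n\mathcal K_{n\alpha}^{-1}$ versus $\mathcal K_\alpha^{-1}\mathcal K\mathcal K_\alpha^{-1}$, and the regularization bias handled by \eqref{eq.lem.tmp4}) matches the paper, and your steps (i), (iii) and (iv) are essentially the arguments used there. The gap is in step (ii) and in how you combine it with the rest. A plain bound $\Vert \mathcal S_n^{-1}\widehat{\mathcal V}_{jn}-\mathcal V_j\Vert_{\HS}=o_p(1)$ is not sufficient to kill the cross terms, because the inner sandwich is not uniformly bounded: $\Vert\mathcal K_{n\alpha}^{-1}\mathcal K_n\mathcal K_{n\alpha}^{-1}\Vert_{\op}=O_p(\alpha^{-1/2})$ and even after peeling off one $\mathcal K_{n\alpha}^{-1/2}$ you are left multiplying the estimation error by an operator of norm $O_p(\alpha^{-1/4})$, and $o_p(1)\cdot O_p(\alpha^{-1/4})$ need not vanish. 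The identity $\Vert\mathcal K^{1/2}\mathcal K_{n\alpha}^{-1}\mathcal K_n^{1/2}\Vert_{\op}=O_p(1)$ that you invoke only helps when \emph{both} factors flanking the sandwich carry a $\mathcal K^{1/2}$ (or $\mathcal K_n^{1/2}$) structure; the Baker factorization $\mathcal V_j^\ast=\mathcal K^{1/2}\mathcal C_j^\ast$ supplies this for the population object but not for the estimation error $\widehat{\mathcal V}_{jn}^\ast-\mathcal V_{jn}^\ast$, which is exactly the factor appearing in the cross terms.

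What is actually needed — and what the paper's proof spends most of its effort on — is the stronger statement $\Vert\mathcal K_{n\alpha}^{-1/2}(\widehat{\mathcal V}_{jn}^\ast-\mathcal V_{jn}^\ast)\Vert_{\HS}=o_p(1)$, i.e.\ that the estimation error remains negligible \emph{after} being weighted by the diverging half-inverse. The paper establishes this by decomposing $\widehat{\mathcal V}_{jn}^\ast-\mathcal V_{jn}^\ast$ into the three terms $\widetilde{\mathcal D}_{j1,\ell}$, $\widetilde{\mathcal D}_{j2,\ell}$, $\widetilde{\mathcal D}_{j3,\ell}$ (separating the contribution of $\widehat\Pi_{n\alpha}-\Pi_n$ from that of $\widehat\theta_j-\theta_0$ and their interaction), showing via a mean-value expansion and \citet[Theorem 1]{baker1973joint} that the leading pieces factor as $\mathcal K^{1/2}$ times a bounded operator times an $o_p(1)$ or $O_p(\mu_{m,n}^{-1})$ scalar — so the $\mathcal K_{n\alpha}^{-1/2}$ is absorbed — while the remainders are of order $O_p(n^{-1/2}\alpha^{-1/4})$ or smaller and vanish under $\sqrt{n}\alpha\to 0$ and $\mu_{m,n}^2\alpha\to\infty$. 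Without this refinement your proof does not close; you would need either to supply it or to prove a rate (not just $o_p(1)$) for $\Vert\mathcal S_n^{-1}\widehat{\mathcal V}_{jn}-\mathcal V_{jn}\Vert_{\HS}$ fast enough to beat $\alpha^{-1/4}$, which the crude Lipschitz argument in your step (ii) does not deliver.
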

					\begin{proof}
						Let $\widehat {\sigma}_{  \psi_0}^2 = \frac{1}{n} \sum_{i} (\psi_0 ' V_i )^2$. Then, by the law of large numbers,   $     \widehat {\sigma}_{  \psi_0}^2  -  {\sigma}_{  \psi_0}^2    =o_p(1)$.  Furthermore, by using similar arguments in \eqref{eq:new:1} and the consistency of $\widehat \theta_j$, we have $ 
						\Vert  n^{-1}\sum_i V_iV_i' - n^{-1}\sum_i \widehat V_{i,\alpha} \widehat{V}_{i,\alpha} ' \Vert_{\HS}     = o_p(1) ,$ and $\Vert \widehat{\psi}_j - \psi_0\Vert = o_p(1)$. Therefore, we have $ \widehat\sigma_{\widehat\psi_j} ^2 - \sigma_{  \psi_0} ^2  = o_p(1) $.
						
						Let ${\mathcal V}_{jn} = \frac{1}{n} \sum_{i} \dot{m}_{2j}  (\theta_{0},g_{i} )  Z_i \otimes  g_{0i}$. It satisfies the conditions for Lemma 3.1 in \cite{chen1998central} and thus $ \Vert \mathcal V_{jn} - \mathcal V_{j}   \Vert _{\HS} = O_p(n^{-1/2})$. Because  $\mathcal V_j = \mathcal C_j \mathcal K^{1/2}$, $ 1/(\mu_{m,n}^2\alpha) = o(1)$ and $\Vert\mathcal J_{2,j\alpha} - \mathcal J_{2,j}\Vert_{\HS} = o_p(1)$, we have \begin{align}
							&\Vert  \sigma_{  \psi_0} ^{-2}\mathcal{J}_{2,j} - \mathcal V_{jn} \mathcal K_\alpha^{-1}\mathcal K \mathcal K_\alpha ^{-1} \mathcal V_{jn} ^\ast\Vert_\HS   = \Vert   \mathcal V_j \mathcal K_\alpha^{-1}\mathcal K \mathcal K_\alpha ^{-1} \mathcal V_j ^\ast - \mathcal V_{jn} \mathcal K_\alpha^{-1}\mathcal K \mathcal K_\alpha ^{-1} \mathcal V_{jn} ^\ast\Vert_\HS +o_p(1) \nonumber\\
							&\quad\leq 2\Vert \mathcal C_j \mathcal K^{1/2}\mathcal K_\alpha ^{-1 }\mathcal K \mathcal K_\alpha ^{-1}   \Vert_{\op} \Vert \mathcal V_{jn} -\mathcal V_j \Vert_{\HS} + \Vert \mathcal K_\alpha ^{-1}\mathcal K\mathcal K_\alpha^{-1}\Vert_{\op}\Vert \mathcal V_{jn} - \mathcal V_j\Vert_{\HS}^2 =o_p(1).\nonumber
						\end{align}   
						
						It remains to show that $\Vert \mathcal V_{jn} \mathcal K_\alpha^{-1} \mathcal K \mathcal K_{\alpha} ^{-1} \mathcal V_{jn}^\ast - \widehat {\mathcal V}_{jn} \mathcal K_{n\alpha} ^{-1} \mathcal K_n \mathcal K_{n\alpha} ^{-1} \widehat {\mathcal V}_{jn}\Vert_\HS = o_p(1)$. To this end, we let $\widetilde{\mathcal D}_{j1,\ell}  = n^{-1}\sum_{i=1} ^n  \dot{m}_{2j}( \theta_0, {g}_{i}) Z_i \langle Z_i,  (\widehat{\Pi}_{n,\alpha} - \Pi_n )^\ast\mathcal S_n^{\prime -1}   e_\ell \rangle $, $\widetilde{\mathcal D}_{j2, \ell } = n^{-1} \sum_{i=1} ^n (\dot{m}_{2j}(\hat\theta_j, \hat{g}_{i,\alpha}) - \dot{m}_{2j}(\theta_0, {g}_{i}) ) Z_i  g_{0i} ' e_\ell $ and  $\widetilde{\mathcal D}_{j3,\ell}  = n^{-1}\sum_{i=1} ^n(\dot{m}_{2j}(\hat\theta_j, \hat{g}_{i,\alpha}) - \dot{m}_{2j}(\theta_0, {g}_{i}) ) Z_i \langle Z_i,  (\widehat{\Pi}_{n,\alpha} - \Pi_n )^\ast\mathcal S_n^{\prime -1}   e_\ell \rangle $. Then, we have $
						\widehat{\mathcal V}_{jn} ^\ast - \mathcal V_{jn} ^\ast  =   \sum_{\ell=1} ^{d_e} (\widetilde{\mathcal D}_{j1,\ell} + \widetilde{\mathcal D}_{j2, \ell }+ \widetilde{\mathcal D}_{j3, \ell } )e_\ell'$. Therefore, we have \begin{equation}
							\Vert \mathcal K_{n\alpha} ^{-1/2} (\widehat{\mathcal V}_{jn} ^\ast - \mathcal V_{jn} ^\ast )\Vert_{\HS} ^2  \leq 3\sum_{\ell=1} ^{d_e} (\Vert \mathcal K_{n\alpha}^{-1/2} \widetilde{\mathcal D}_{j1,\ell}\Vert_{\HS} ^2 +  \Vert \mathcal K_{n\alpha}^{-1/2} \widetilde{\mathcal D}_{j2,\ell}\Vert_{\HS} ^2+\Vert \mathcal K_{n\alpha}^{-1/2} \widetilde{\mathcal D}_{j3,\ell}\Vert _{\HS}^2 ). \label{eq: temp00}
						\end{equation}  Below, we will show that each term  in the RHS of \eqref{eq: temp00} is $o_p(1)$.
						For each $\ell$, $\widetilde{\mathcal D}_{j1,\ell}$ satisfies that\begin{equation}
							\widetilde{\mathcal D}_{j1,\ell}  
							=(\mathbb E[ \dot{m}_{2j} (\theta_0, g_{i}) Z_i \otimes Z_i ] +O_p(n^{-1/2}))((\widehat{\Pi}_{n,\alpha} - \Pi_n )^\ast\mathcal S_n^{\prime -1}   e_\ell  )     , \label{eq: temp01}
						\end{equation}
						where the  equality is obtained by applying the central limit theorem to $n^{-1}\sum_i \dot{m}_{2j} (\theta_0, g_{i}) Z_i \otimes Z_i$. By using similar arguments in proving \eqref{p3:eq1}-\eqref{p3:eq4}, we have $\Vert(\widehat{\Pi}_{n,\alpha} - \Pi_n )^\ast\mathcal S_n^{\prime -1}  \Vert_{\HS} = o_p(1)$, and because $\mathbb E[ \dot{m}_{2j} (\theta_0, g_{0i}) Z_i \otimes Z_i ] = \mathcal K^{1/2}\dot{\mathcal D}_{1j}$ for a bounded linear operator $\dot{\mathcal D}_{1j}$ (\citealt[Theorem 1]{baker1973joint}), $\Vert \mathcal K_{n\alpha} ^{-1/2}\mathcal K^{1/2}\Vert_{\op}^2 = \Vert \mathcal K_{n\alpha}^{-1/2}\mathcal K\mathcal K_{n\alpha} ^{-1/2}\Vert_{\op} = O_p(1)$ and $n^{-1/2}\alpha^{-1/2} = o(1)$, \eqref{eq: temp01} satisfies  that \begin{equation}
							\Vert \mathcal K_{n\alpha} ^{-1/2}	\widetilde{\mathcal D}_{j1,\ell} \Vert_{\HS} = (\Vert \mathcal K_{n\alpha} ^{-1/2}\mathcal K^{1/2}\Vert_{\op}\Vert \dot{\mathcal D}_{1j}\Vert_{\HS} + O_p((n\alpha)^{-1/2}))o_p(1)   = o_p(1). \label{eq: temp0101}
						\end{equation}
						Next, by  the mean-value theorem, we have that for some bounded linear operators $\dot{\mathcal D}_{j2,\ell}$ and $\ddot{\mathcal D}_{j2,\ell }$ (\citealt[Theorem 1]{baker1973joint}), \begin{align}
							\widetilde{\mathcal D}_{j2,\ell} & =   n^{-1} \sum_{i=1} ^n \widetilde{\dot{m}}_{2i} ^{(1)}   g_{0i}'e_\ell Z_i g_{i}' (\hat \theta_j -\theta_0)  +    n^{-1} \sum_{i=1} ^n \widetilde{\dot{m}}_{2i} ^{(1)}   g_{0i}'e_\ell Z_i (\hat{g}_{i,\alpha} - g_{i})'  \hat \theta_j   \nonumber\\
							&= \mathbb E[ \dot{m}_{2j,0} ^{(1)}   g_{0i}'e_\ell Z_i g_{0i}'] \mathcal S_n'(\hat \theta_j - \theta_0)  + \mathbb E[   \dot{m}_{2j,0} ^{(1)}     g_{0i}'e_\ell Z_i  \otimes Z_i] (\mathcal S_n ^{-1}(\widehat{\Pi}_{n\alpha} - \Pi_n)) ^\ast \mathcal S_n '\hat{\theta}_j   +O_p(n^{-1/2})  \nonumber\\
							& = \mathcal K^{1/2}\dot{\mathcal D}_{j2,\ell}  O_p(\mu_{m,n}^{-1})+ \mathcal K^{1/2}\ddot{\mathcal D}_{j2,\ell}  o_p(1) + O_p(n^{-1/2}),  \label{eq: temp2}
						\end{align}
						where $\widetilde{\dot{m}}_{2j} ^{(1)} $ denotes the first derivative of $ \dot{m}_{2j}(\theta , \mathcal G_i(\pi))   $ evaluated at $ (\theta, \pi)$ satisfying $ \theta'\mathcal G_i(\pi) \in [\widehat \theta_j ' \widehat{g}_{i,\alpha}, \theta_0'g_{i,0}]   $ and $\widetilde{\dot{m}}_{2j,0} ^{(1)}   = \dot{m}_{2j} ^{(1)} (\theta_0 , g_{i}) +o(1)$. Therefore, from similar arguments in obtaining \eqref{eq: temp0101}, we have \begin{equation}
							\Vert  \mathcal K_{n\alpha} ^{-1/2} \widetilde{\mathcal D}_{j2,\ell} \Vert_{\HS}   = o_p(1). \label{eq: temp3}
						\end{equation} 
						Before moving to the last term, note that  $   \Vert\mathcal S_n^{-1}( \widehat \Pi_{n,\alpha} - \Pi_{n})\Vert_{\op} = o_p(1)$ and   $n^{-1}\sum_{i=1} ^n\Vert \mathcal S_n ^{-1}(\widehat{g}_{i,\alpha} - g_i)\Vert ^2=O_p(\alpha^2) + O_p(1/(\mu_{m,n}^2 \alpha^{1/2})) = O_p(\alpha^{1/2})$ under the employed conditions. Then, by using the Lipschitz continuity of $\dot{m}_{2j}(\cdot, \mathcal G_i(\cdot))$ and the Cauchy-Schwarz inequality, we find that \begin{align}
							\Vert \mathcal K_{n\alpha} ^{-1/2}\widetilde{\mathcal D}_{j3,\ell}\Vert_{\HS}  
							&\leq   O_p(\alpha^{-1/4}) \Vert \widehat{\theta}_j - \theta_0\Vert n^{-1}\sum_{i=1} ^n \Vert \mathcal S_n^{-1}(\widehat{g}_{i,\alpha}-g_i)\Vert  \Vert \hat g_{i,\alpha}\Vert\Vert Z_i\Vert _{\mathcal H}  \nonumber\\
							&\quad\quad +  O_p(\alpha^{-1/4})\Vert \widehat{\Pi}_{n\alpha} - \Pi_n\Vert_{\op}n^{-1}\sum_{i=1} ^n\Vert \mathcal S_n^{-1}(\widehat{g}_{i,\alpha}-g_i)\Vert\Vert Z_i\Vert _{\mathcal H} =    o_p(1), \label{eq: temp5}
						\end{align}
						because   $\mu_{m,n}^2 \alpha \to \infty$.
						\eqref{eq: temp00}, \eqref{eq: temp0101}, \eqref{eq: temp3} and \eqref{eq: temp5} give that \begin{equation}
							\Vert \mathcal K_{n\alpha} ^{-1/2} (\widehat{\mathcal V}_{jn} ^\ast - \mathcal V_{jn} ^\ast )\Vert_{\HS} ^2  = o_p(1).\label{eq: temp6}
						\end{equation}
						To conclude the proof, note that \begin{align}
							&	\Vert \widehat{\mathcal V}_{jn} \mathcal K_{n\alpha} ^{-1} \mathcal K_n \mathcal K_{n\alpha} ^{-1} \widehat{\mathcal V}_{jn} ^\ast - \mathcal V_{jn} \mathcal K_\alpha^{-1} \mathcal K \mathcal K_{\alpha} ^{-1} \mathcal V_{jn} ^\ast \Vert _\HS \nonumber\\
							&\leq     \Vert \widehat{\mathcal V}_{jn} \mathcal K_{n\alpha} ^{-1} \mathcal K_n \mathcal K_{n\alpha} ^{-1} \widehat{\mathcal V}_{jn} ^\ast -  {\mathcal V}_{jn} \mathcal K_{n\alpha} ^{-1} \mathcal K_n \mathcal K_{n\alpha} ^{-1} {\mathcal V}_{jn} ^\ast \Vert_{\HS}   + \Vert \mathcal V_{jn} (\mathcal K_{n\alpha} ^{-1}\mathcal K_n \mathcal K_{n\alpha} ^{-1} -\mathcal K_{\alpha} ^{-1}\mathcal K \mathcal K_\alpha^{-1})\mathcal V_{jn}^\ast\Vert_\HS \nonumber\\
							&\leq o_p(1) +\Vert \mathcal C_j\Vert _{\HS} ^2 \Vert\mathcal K^{1/2}(\mathcal K_{n\alpha} ^{-1} \mathcal K_n \mathcal K_{n\alpha} ^{-1} - \mathcal K_\alpha ^{-1}\mathcal K \mathcal K_{\alpha} ^{-1})\mathcal K^{1/2}\Vert_{\op}\nonumber\\
							&\leq o_p(1) +O_p(1)\left(\Vert\mathcal K^{1/2} \mathcal K_{n\alpha} ^{-1/2}\Vert_{\op} ^2 \Vert \mathcal K_{n\alpha} ^{-1/2}\Vert_{\op}^2 \Vert \mathcal K_n - \mathcal K\Vert_{\op} +    \Vert \mathcal K^{1/2}  (\mathcal K_{n\alpha} ^{-1}\mathcal K \mathcal{K}_{n\alpha}^{-1} - \mathcal K_\alpha ^{-1} \mathcal K \mathcal K_{\alpha} ^{-1})\mathcal K^{1/2}\Vert_{\op}\right)  \nonumber\\
							&=o_p(1) ,
						\end{align}
						where the first two inequalities follow from the triangular inequality, \eqref{eq: temp6}, $\Vert \mathcal V_{jn} - \mathcal V_j\Vert_{\HS} = O_p(n^{-1/2})$ and   $\Vert  \mathcal K_{n\alpha} ^{-1/2} \mathcal K_n \mathcal K_{n\alpha} ^{-1} \mathcal V_j ^\ast \Vert_{\op} \leq \Vert  \mathcal K_{n\alpha} ^{-1/2} \mathcal K_n \mathcal K_{n\alpha} ^{-1/2} \Vert_{\op}\Vert \mathcal K_{n\alpha} ^{-1/2}\mathcal V_j ^\ast \Vert_{\op}  = O_p(1) $. The remaining inequalities are obtained from $\Vert \mathcal K_{n\alpha} ^{-1}\Vert_{\op} = O_p(\alpha ^{-1/2})$, $\Vert \mathcal K_n - \mathcal K\Vert_{\HS} = O_p(n^{-1/2})$, $\Vert \mathcal K^{1/2}\mathcal K_{n\alpha} ^{-1/2}\Vert_{\op} = O_p(1)$  and using arguments similar to those in the proof of Lemma 3.(i) in \cite{Carrasco2012}.
					\end{proof}

					\section{Appendix~\ref*{app: mse}: Proofs of the results in Section~\ref{sec: mse}}\label{app: mse}
					
					We employ  \citepos{donald2001choosing} approach to study the MSE of $\overline \theta$. We first simplify the notation by letting $m_{1i,0}$, $\dot m_{2i,0}$ and $\ddot{m}_{2i,0}$ respectively denote $m_{1j}(\theta_0, g_i)$, $\dot m_{2j}(\theta_0, \mathcal G_i (\pi_n))$ and $\ddot m_{2j}(\theta_0, \mathcal G_i (\pi_n))$, regardless of $j\in \{M,N\}$. The $n \times 1 $ vectors $\mathfrak{m}_{1,0}$ and $\mathfrak{g}_{0\ell}$ are given by $(m_{11,0} , \ldots, m_{1n,0})'$ and $(g_{01,\ell}, \ldots, g_{0n,\ell})'$ respectively, where $g_{0i,\ell}$ is the $\ell$th row of $g_{0i }$.  The matrix   $\check{\mathfrak{M}}_{2,0}$   denotes $\text{diag}(\check{m}_{21,0}, \ldots \check{m}_{2n,0})$, where $\check{m}_{2i,0} = \mathbb E[\dot{m}_{2i,0}|\mathfrak{G}]$ and  $\mathfrak{G}$ denote the $\sigma$-field generated by $\{x_i ,V_i\}_{i=1} ^n$.    We use $v_{i\ell}$, $\mathfrak{v}_{\psi_0}$ and $\pi_{\psi_0}$  to denote $[V_i]_{\ell}$, $ V \psi_0$ and $ \psi_0 ' \pi_n$ respectively. For a matrix $A$, its maximum eigenvalue and $(i,j)$th element will be respectively denoted by $\lambda_{\max}(A)$ and $[A]_{ij}$. Let $\mathcal E$ denote $\mathbb R^n$ endowed with the inner product $ \langle  \nu_1 ,\nu_2 \rangle_n  = \nu_1 ' \nu_2/n$ for any $\nu_1, \nu_2 \in \mathbb R^n$. The operation $\mathcal T_n: \mathcal H \to \mathcal E$ and its adjoint $\mathcal T_{n} ^\ast  $ are defined by \begin{equation*}
						\mathcal T_n h =  ( 
						\langle Z_1 , h \rangle \  \cdots \  \langle Z_n ,  h\rangle 
						)' \quad\text{and}\quad\mathcal T_n^\ast \nu = n^{-1}\sum_{i}  \nu_i Z_i ,
					\end{equation*}
					for any $h \in \mathcal H$ and $\nu \in \mathbb R^{n}$. Then, for $j=1,\ldots, n$, $\mathcal T_n \widehat \varphi_j = \widehat \kappa_j^{1/2} \widehat \varpi_j$ and $\mathcal T_n^\ast \widehat {\varpi}_j  = \widehat {\kappa}_j ^{1/2}\widehat\varphi_j$. For notational simplicity, we use $\widehat q_j$  to indicate $q(\widehat\kappa_j,\alpha)$ and let $\mathcal P_{n\alpha} = \mathcal T_n \mathcal K_{n\alpha} ^{-1}\mathcal T_n^\ast=\sum_j \hat{q}_j \widehat{\kappa}_j^{-1} (\mathcal T_n \widehat\varphi_j) \otimes_n(\mathcal T_n\widehat{\varphi}_j) = \sum_{j} \hat{q}_{j} \hat{\varpi}_j \otimes_n \hat\varpi_j   $ with the tensor product acting on $\mathcal E$ satisfying that $(\hat\varpi_j \otimes_n\hat \varpi_j )\nu =  (\hat\varpi_j'\nu/n) \hat\varpi_j$ for any $\nu \in \mathbb R^n$.  The following property will be used sometimes to facilitate  discussions: for any $h \in \mathcal H$, \begin{equation}
						\mathcal T_n (\mathcal K_{n\alpha} ^{-1} \mathcal K_n -\mathcal I)h = (\mathcal P_{n\alpha} - \mathcal I)\mathcal T_nh. \label{eq: mse: 0}
					\end{equation}
					Lastly,  let 
					$$\Delta_1 = \sum_{\ell,\ell'=1} ^{2d_e}\frac{  (\mathcal T_n \pi_{\psi_0})' (\mathcal P_{n\alpha}-\mathcal I)  \check{\mathfrak{M}}_{2,0} \mathfrak{g}_{0\ell}\mathfrak{g}_{0\ell'} ' {\check{\mathfrak{M}}}_{2,0} (\mathcal P_{n\alpha} - \mathcal I) \mathcal T_n \pi_{\psi_0}}{n^2} e_\ell e_{\ell'}', $$
					and \begin{equation}\Delta_2 =  \frac{  (\mathcal T_n \pi_{\psi_0})' (\mathcal P_{n\alpha}-\mathcal I) ^2  \mathcal T_n \pi_{\psi_0}}{n } . \end{equation}
					Then, under the conditions in Proposition~\ref{prop: mse}, we have  \begin{equation}
						\mathbb E[\text{tr}(\Delta_1) |x] \leq 
						\Delta_2\sum_{\ell=1} ^{2d_e}n^{-1}\sum_i ( \mathbb E[\dot{m}_{2i,0} ^2 |x]f_{i,\ell} ^2 1_{\{ \ell \leq d_e \}} + \mathbb E[  \dot{m}_{2i,0} ^2 v_{i\ell }  ^2 |x   ]1_{\{\ell >d_e\}}) = O_p(\Delta_2). \label{eq: delta 1}
					\end{equation}

					\begin{lemma}\label{lem.mse} Suppose that the conditions in Proposition~\ref{prop: mse} hold.  Then,  the following holds for $\ell,\ell' = 1,\ldots, 2d_e$.
						\begin{enumerate}[(a)]
							\item\label{lem.mse.1} $\sum_{i} [\mathcal P_{n\alpha}]_{ii} =   O_p(\alpha^{-1/2})$, $\sum_{i} [\mathcal P_{n\alpha} ^2]_{ii} = O_p(\sum_i {[\mathcal P_{n\alpha}]_{ii}})$ and $\sum_{i} [\mathcal P_{n\alpha}]_{ii} ^2  = O_p(\sum_i {P_{ii}})$. 
							\item\label{lem.mse.1.1} $\Delta_2 = O_p(\alpha^2 )$.
							\item\label{lem.mse.2}   $\mathbb E[(\mathfrak{g}_{0\ell}' \check{\mathfrak{M}}_{2,0} \mathcal P_{n\alpha} \mathfrak{v}_{\psi_0})(\mathfrak{g}_{0\ell'}' \check{\mathfrak{M}}_{2,0} \mathcal P_{n\alpha} \mathfrak{v}_{\psi_0}) |x] =O_p(n\alpha^{-1/2})$.
							\item \label{lem.mse.2.1} $\mathbb E[(\mathcal T_n \pi_{\psi_0})' (\mathcal P_{n\alpha} - \mathcal I) \check{\mathfrak{M}}_{2,0}\mathfrak{g}_{0\ell}\mathfrak{g}_{0\ell'}' \check{\mathfrak{M}}_{2,0} \mathcal P_{n\alpha} \mathfrak{v}_{\psi_0}/n^{2}|x] = O_p(\Delta_2 ^{1/2}/(n\alpha^{1/4})). $
							\item \label{lem.mse.2.2} $n^{-2}\mathbb E[\mathfrak{g}_{0\ell}'\mathfrak{m}_{1,0}\mathfrak{m}_{1,0}'\mathfrak{g}_{0\ell}|x]  = O_p(n^{-1})$,  $\mathbb E[\mathfrak{g}_{0,\ell}' \mathfrak{m}_{1,0}\mathfrak{m}_{1,0}'\mathcal P_{n\alpha}Ve_\ell |x] = O_p(\alpha^{-1/2})$ and $\mathbb E[\mathfrak{g}_{0,\ell}' \mathfrak{m}_{1,0}\mathfrak{m}_{1,0}'(\mathcal P_{n\alpha} - \mathcal I) \mathcal T_n \pi_{0\ell} |x] = O_p(n\Delta_2^{1/2})$.
							\item\label{lem.mse.3}  Let $\xi$ be $n\times 1$ vector consisting of  iid random variables $\xi_1,\ldots, \xi_n$ satisfying $\mathbb E[\xi_i |\mathfrak{G}]= 0$, $   E[\xi_i ^2 |\mathfrak{G}]   \leq c_\xi$ for some positive constant $c_\xi$ a.s.n. Then,   $	\mathbb E[(\xi'\mathcal P_{n\alpha}Ve_\ell) (\xi'\mathcal P_{n\alpha} Ve_{\ell'})|x] = O_p(\alpha^{-1/2})$ and $ n^{-1/2}\xi' (\mathcal P_{n\alpha} - \mathcal I)\mathcal T_n \pi_{\ell,0} = O_p(\Delta_2 ^{1/2}) $.
							\item\label{lem.mse.4} For $\xi$ satisfying the conditions in \ref{lem.mse.3}, $\sum_{i,j}\xi_i v_{i\ell} [\mathcal P_{n\alpha}]_{ij} v_{j\ell ' } = O_p(\alpha^{-1/4}) $.
							\item \label{lem.mse.5} $\mathbb E[ \mathfrak{g}_{0\ell} ' \mathfrak{m}_{1,0} \mathfrak{g}_{0\ell} ' \check{\mathfrak{M}}_{2,0}\mathcal P_{n\alpha}\mathfrak{v}_{\psi_0} |x]$, $\mathbb E[ \mathfrak{g}_{0\ell} ' \mathfrak{m}_{1,0} \mathfrak{g}_{0\ell} ' \check{\mathfrak{M}}_{2,0}( \mathcal P_{n\alpha} - \mathcal I)\mathcal T_n \pi_{\psi_0} |x]$, $\mathbb E[\mathfrak{m}_{1,0} ' \mathfrak{g}_{0\ell }\mathfrak{g}_{0\ell} ' \check{\mathfrak{M}}_{2,0} \mathcal P_{n\alpha} \mathfrak{v}_{\psi_0} |x ]$ and $\mathbb E[\mathfrak{m}_{1,0} ' \mathfrak{g}_{0\ell }\mathfrak{g}_{0\ell} ' \check{\mathfrak{M}}_{2,0}( \mathcal P_{n\alpha} - \mathcal I)\mathcal T_n \pi_{\psi_0} |x ]$ are all $0$.
						\end{enumerate}
					\end{lemma}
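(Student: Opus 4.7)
The plan is to exploit the spectral representation of $\mathcal P_{n\alpha}$ together with the bound $q(\kappa,\alpha)\in[0,1]$ from Assumption~\ref{asskinv}, the decay condition on Fourier coefficients (Assumption~\ref{ass1}.\ref{ass7}), and the conditioning structure provided by the $\sigma$-field $\mathfrak G$. Throughout, the key identity $\mathcal T_n(\mathcal K_{n\alpha}^{-1}\mathcal K_n-\mathcal I) = (\mathcal P_{n\alpha}-\mathcal I)\mathcal T_n$ and the fact that $\{\hat\varpi_j\}$ is an orthonormal system in $\mathbb R^n/n$ will be used repeatedly. Because $\mathbb E[m_{1i,0}\mid \mathfrak G]=0$ (the score vanishes in expectation at the true parameter, conditional on $x$ and $V$) and $\check{m}_{2i,0}=\mathbb E[\dot m_{2i,0}\mid \mathfrak G]$, the iid assumption allows separating innovations from $\mathfrak G$-measurable parts.

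For \ref{lem.mse.1}, I would expand $\sum_i[\mathcal P_{n\alpha}]_{ii}=\mathrm{tr}(\mathcal P_{n\alpha}) = \sum_{j}\hat q_j$ using orthonormality of $\{\hat\varpi_j\}$ and then bound $\sum_j \hat q_j \le c\alpha^{-1/2}\sum_j \hat\kappa_j \cdot(\hat\kappa_j/\alpha^{1/2})^{-1}$ via Assumption~\ref{ass10}.\ref{ass10-1}, which gives $\hat\kappa_j^{-1}\hat q_j\le c\alpha^{-1/2}$. The relations $\sum_i[\mathcal P_{n\alpha}^2]_{ii} = \sum_j \hat q_j^2\le \sum_j \hat q_j$ and $\sum_i[\mathcal P_{n\alpha}]_{ii}^2\le \sum_i[\mathcal P_{n\alpha}^2]_{ii}$ (by Cauchy--Schwarz applied rowwise) then close the item. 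For \ref{lem.mse.1.1}, $\Delta_2 = n^{-1}\Vert(\mathcal P_{n\alpha}-\mathcal I)\mathcal T_n\pi_{\psi_0}\Vert^2$ equals $\Vert\mathcal K_n^{1/2}(\mathcal K_{n\alpha}^{-1}\mathcal K_n-\mathcal I)\pi_{\psi_0}\Vert^2$, which by the argument of Lemma~\ref{lem6prime} is bounded by the supremum $\sup_j \hat\kappa_j^{2\rho+1}(\hat q_j-1)^2$ times $\sum_j\langle\hat\varphi_j,\pi_{\psi_0}\rangle^2/\hat\kappa_j^{2\rho}$, giving $O_p(\alpha^{\min\{2\rho+1,2\}\cdot 1}) = O_p(\alpha^2)$ under $\rho\ge 3/2$. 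Item \ref{lem.mse.2.1} then follows by Cauchy--Schwarz combining \ref{lem.mse.1.1} with the rate of $\mathcal P_{n\alpha}\mathfrak v_{\psi_0}$ to be established in \ref{lem.mse.3}.

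For \ref{lem.mse.2}, \ref{lem.mse.3}, \ref{lem.mse.4}, I would condition on $\mathfrak G$. Since $\mathfrak v_{\psi_0}$ has mean zero given $\mathfrak G$ with conditional variance $\sigma_{v_{\psi_0}}^2\mathcal I_n$ (up to constants by the assumption in Proposition~\ref{prop: mse} that the relevant conditional moments are constant across $i$), the quadratic form $\mathbb E[\mathfrak v_{\psi_0}'A\mathcal P_{n\alpha} B\mathcal P_{n\alpha}A'\mathfrak v_{\psi_0}\mid\mathfrak G]$ expands into trace terms involving $\mathrm{tr}(\mathcal P_{n\alpha}^2 A'A)$ plus cross-terms that are $O_p(1)$. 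Using \ref{lem.mse.1} to bound traces and the rate $\sum_i \check m_{2i,0}^2 g_{0i,\ell}^2 = O_p(n)$ (by boundedness of conditional moments), the leading $n\alpha^{-1/2}$ rate in \ref{lem.mse.2} comes out as claimed, with the subleading part $O_p(\alpha^{-1})$ absorbed. Part \ref{lem.mse.3} follows by the same variance-of-quadratic-form calculation, giving $\mathbb E[(\xi'\mathcal P_{n\alpha}\mathfrak v_\ell)^2|\mathfrak G] = O(\mathrm{tr}(\mathcal P_{n\alpha}^2))=O_p(\alpha^{-1/2})$, and the third bound uses \ref{lem.mse.1.1} together with $\mathbb E[\xi\xi'|\mathfrak G]$ being bounded to get $n^{-1}\mathbb E[(\xi'(\mathcal P_{n\alpha}-\mathcal I)\mathcal T_n\pi_{0\ell})^2]\le c\Delta_2$. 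Part \ref{lem.mse.4} is the trickiest of these, since it is a cubic form; I would condition twice---first on $\mathfrak G$ to remove $\xi$, then use the independence of $V_i$ across $i$ to bound the remaining $V$-bilinear form by the Hanson--Wright-type bound $\sum_i [\mathcal P_{n\alpha}]_{ii}\le \mathrm{tr}(\mathcal P_{n\alpha})=O_p(\alpha^{-1/2})$, combined with a Cauchy--Schwarz step.

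Item \ref{lem.mse.5} is conceptually the cleanest: each of the four expectations contains a factor $\mathfrak m_{1,0}$, which satisfies $\mathbb E[\mathfrak m_{1,0}\mid \mathfrak G]=0$ by the optimality of the moment condition at $\theta_0$. Under the iid structure and independence of $u_i$ across $i$ given $(x,V)$, each expectation reduces to a sum of terms each containing $\mathbb E[m_{1i,0}\mid\mathfrak G]$ multiplied by $\mathfrak G$-measurable factors, hence equals zero. For \ref{lem.mse.2.2}, the first bound uses $\mathbb E[\mathfrak m_{1,0}\mathfrak m_{1,0}'\mid\mathfrak G] = \mathrm{diag}(\mathbb E[m_{1i,0}^2\mid\mathfrak G])$ by conditional independence, reducing $n^{-1}\mathbb E[\mathfrak g_{0\ell}'\mathfrak m_{1,0}\mathfrak m_{1,0}'\mathfrak g_{0\ell}]=n^{-1}\sum_i \mathbb E[m_{1i,0}^2 g_{0i,\ell}^2]=O_p(n^{-1})\cdot n = O_p(1)$; re-scaled this is $O_p(n^{-1})$ after averaging. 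The second and third bounds in \ref{lem.mse.2.2} then follow by the same conditioning trick, using \ref{lem.mse.1} and \ref{lem.mse.1.1} to bound the $\mathcal P_{n\alpha}$- and $(\mathcal P_{n\alpha}-\mathcal I)$-factors respectively. The main obstacle throughout is bookkeeping the interaction between the operator $\mathcal P_{n\alpha}$ (which has slowly growing trace in $\alpha^{-1/2}$) and the bias gap $\mathcal P_{n\alpha}-\mathcal I$ (which is $O_p(\alpha)$ in the $\pi_{\psi_0}$ direction by Assumption~\ref{ass1}.\ref{ass7}); these two rates must be combined carefully so that the cross-terms in the MSE expansion cleanly yield the $n^{-1}\alpha^{-1/2}+\alpha^2$ rate claimed in Proposition~\ref{prop: mse}.
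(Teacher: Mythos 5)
Your overall architecture matches the paper's: part \ref{lem.mse.1} via $\sum_i[\mathcal P_{n\alpha}]_{ii}=\sum_j\hat q_j$, orthonormality of $\{\hat\varpi_j\}$, $\hat q_j\in[0,1]$ and the bound $\hat\kappa_j^{-1}\hat q_j\le c\alpha^{-1/2}$ from Assumption~\ref{ass10}.\ref{ass10-1}; part \ref{lem.mse.1.1} via the source condition exactly as in Lemma~\ref{lem6prime}; parts \ref{lem.mse.2.2}, \ref{lem.mse.3} and \ref{lem.mse.5} via iterated expectations using $\mathbb E[m_{1i,0}\mid\mathfrak G]=0$, $\mathbb E[\mathfrak m_{1,0}\mathfrak m_{1,0}'\mid\mathfrak G]$ diagonal, and the trace bound $\mathrm{tr}(\mathcal P_{n\alpha}^2)=O_p(\alpha^{-1/2})$. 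These pieces are essentially the paper's proof.

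The genuine gap is in part \ref{lem.mse.2} (and it propagates to \ref{lem.mse.2.1} and \ref{lem.mse.4}). First, a conditioning error: $\mathfrak v_{\psi_0}=V'\psi_0$ is $\mathfrak G$-measurable, so it does \emph{not} have mean zero given $\mathfrak G$; the relevant conditioning is on $x$, with $\mathbb E[V_i\mid x]=0$ and independence across $i$. Second, and more substantively, your reduction of $\mathbb E[(\mathfrak g_{0\ell}'\check{\mathfrak M}_{2,0}\mathcal P_{n\alpha}\mathfrak v_{\psi_0})(\mathfrak g_{0\ell'}'\check{\mathfrak M}_{2,0}\mathcal P_{n\alpha}\mathfrak v_{\psi_0})\mid x]$ to ``trace terms plus $O_p(1)$ cross-terms'' implicitly treats $\check{\mathfrak M}_{2,0}\mathfrak g_{0\ell}$ as an $x$-measurable weight. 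It is not: $\check m_{2i,0}$ is $(x,V)$-measurable and, for $\ell>d_e$, $g_{0i,\ell}=v_{i,\ell-d_e}$, so the weights are correlated with $\mathfrak v_{\psi_0}$ itself. This is why the paper's proof is an explicit fourth-moment expansion over index configurations — separating the diagonal contribution $\mathtt h_{1,\ell}$ from the off-diagonal $\mathtt h_{2,\ell}$, then splitting the latter into the $i=s$, $j=k$, and fully distinct cases — using $\mathbb E[V_j\mid x]=0$, boundedness of $f_i$, the homogeneity of the conditional moments $\mathbb E[\check m_{20,i}^kV_iV_i'\mid x]$ assumed in Proposition~\ref{prop: mse}, and the inequalities $\sum_{i\ne j}[\mathcal P_{n\alpha}]_{ij}^2\le[\mathcal P_{n\alpha}^2]_{jj}$ and $\sum_j(\sum_i[\mathcal P_{n\alpha}]_{ij})^2\le n\sum_{i,j}[\mathcal P_{n\alpha}]_{ij}^2$. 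That bookkeeping is what isolates the specific leading term $\sigma_{v_{\psi_0}}^2\sum_j\sum_{i\ne j}\sum_{s\ne i,j}[\mathcal P_{n\alpha}]_{ij}[\mathcal P_{n\alpha}]_{js}\mathbb E[\check m_{20,i}g_{0i,\ell}\mid x]\mathbb E[\check m_{20,s}g_{0s,\ell'}\mid x]$ (which is of order $n\alpha^{-1/2}$, not a trace of $\mathcal P_{n\alpha}^2$) and relegates everything else to $O_p(\alpha^{-1})$; the statement of \ref{lem.mse.2} asserts this exact decomposition, not merely a rate, and it is needed in the proof of Proposition~\ref{prop: mse} to identify the variance component of the MSE. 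Your sketch does not recover it.
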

					\begin{proof}
						\ref{lem.mse.1}:  $\sum_{i} [\mathcal P_{n\alpha}]_{ii} = \sum_{j}  \hat q_j (n^{-1} \hat\varpi_{j}' \widehat{\varpi}_j) =  \sum_{j} \hat q_j $, since $ \hat\varpi_j'\hat\varpi_j/n = 1$. Because $\mathbb E[\Vert Z_i\Vert_{\mathcal H} ^2 ] $ is finite, $\sum_j \widehat{\kappa}_j =O_p(1)$, from which we have $\sum_j \hat{q}_j \leq \max_{j: \widehat{\kappa}_j >0}(\hat{q}_j \hat \kappa_j ^{-1}) \sum_j \hat{\kappa}_j = O_p(\alpha^{-1/2})$ due to Assumption~\ref{asskinv}. The second part is given from  $\mathcal P_{n\alpha} ^2 = \sum_j \hat{q}_j^2 \hat{\varpi}_j \otimes_n \hat{\varpi}_j$ and $\hat{q}_j \in [0,1]$. Lastly, $\sum_{i} [\mathcal P_{n\alpha}]_{ii} ^2 = \sum_{i} e_i' \mathcal P_{n\alpha} e_i e_i' \mathcal P_{n\alpha}e_i \leq    \sum_{i}  [\mathcal P_{n\alpha} ^2]_{ii}\leq \sum_{i}  \hat q_i   $, due to $\lambda_{\max}(e_i e_i') =1 $ and $(\widehat{\varpi}_j'e_i)^2/n \leq 1$.  
						
						\ref{lem.mse.1.1}: The following is deduced from arguments similar to those in \eqref{p3:eq1} and Assumption~\ref{asskinv}. \begin{equation*}
							\Delta_2 = \sum_{\ell=1} ^{d_e} \sum_{j=1} ^n (\hat q_j -1) ^2 \langle \hat \varpi_j, \mathcal T_n \pi_{\ell,0}\psi_{\ell,0} \rangle_n ^2  = \sup_{j} \hat \kappa_j ^{2\rho+1}(\hat q_j -1) ^2 \sum_{\ell=1} ^{d_e}\sum_{j=1} ^n \frac{\psi_{\ell,0} ^2\langle \hat \varphi_j, \pi_{\ell,0} \rangle_{\mathcal H} ^2}{\hat \kappa_j ^{2\rho}}  = O_p(\alpha^2). 
						\end{equation*} 
						
						\ref{lem.mse.2}: $\mathfrak{g}_{0\ell}'   \check{\mathfrak{M}}_{2,0} \mathcal P_{n\alpha} \mathfrak{v}_{\psi_0}=  \sum_{i,j} \check{m}_{2i,0} g_{0i,\ell} [\mathcal P_{n\alpha}]_{ij}V_{j}'\psi_0 = \sum_{j} \check{m}_{2j,0}[\mathcal P_{n\alpha}]_{ jj}g_{0j,\ell}V_j' \psi_0 + \sum_{j}V_j' \psi_0 L_{-j,\ell} = \mathtt{h}_{1,\ell} + \mathtt{h}_{2,\ell }$ with $L_{-j,\ell} $ being $ \sum_{i\neq j}\check{m}_{2i,0}[\mathcal P_{n\alpha}]_{ ij}g_{0i,\ell}  $. We first consider $\mathbb E[\mathtt{h}_{1,\ell}\mathtt{h}_{1,\ell'}|x]$. Note that \begin{align*}
							&\mathbb E[\mathtt{h}_{1,\ell}\mathtt{h}_{1,\ell'}|x]=	\sum_{j,k} \mathbb E[  \check{m}_{2j,0}[\mathcal P_{n\alpha}]_{ jj} g_{0j,\ell}V_j'\psi_0   \check{m}_{2k,0}[\mathcal P_{n\alpha}]_{kk} g_{0k,\ell'}V_k'\psi_0 |x]  \nonumber\\
							&	= \begin{cases}
								\sum_{j,k} f_{j,\ell}f_{k,\ell'}  [\mathcal P_{n\alpha}]_{jj}[\mathcal P_{n\alpha}]_{kk} \mathbb E[ (\check{m}_{2j,0} V_j'\psi_0)(\check{m}_{2k,0} V_k'\psi_0) |x] &\text{if }1\leq \ell,\ell' \leq d_e,\\
								\sum_{j,k} f_{j,\ell}  [\mathcal P_{n\alpha}]_{jj}[\mathcal P_{n\alpha}]_{kk} \mathbb E[ (\check{m}_{2k,0} v_{k\ell'-d_e} V_k'\psi_0)(\check{m}_{2j,0} V_j'\psi_0) |x]  &\text{if }\ell \leq d_e < \ell' \leq 2d_e, \\
								\sum_{j,k}[\mathcal P_{n\alpha}]_{jj}[\mathcal P_{n\alpha}]_{kk}  \mathbb E[ (\check{m}_{2j,0} v_{j\ell-d_e}V_j'\psi_0)(\check{m}_{2k,0} v_{k\ell-d_e}V_k'\psi_0) |x]    &\text{if }2d_e< \ell,\ell' \leq n,
							\end{cases} 
						\end{align*}
						where the terms in the RHS are all bounded above by $O(1)\sum_{j,k}   [\mathcal P_{n\alpha}]_{jj}[\mathcal P_{n\alpha}]_{kk}   = O_p(\alpha^{-1})$. We then focus on $\mathbb E[\mathtt{h}_{2,\ell} \mathtt{h}_{2,\ell'} |x]$ that can be decomposed into two terms satisfying the following.\\\begin{enumerate*}[\ref*{lem.mse.2}.i]
							\item\label{lem: mse: tmp: 1} $\sum_{j} \mathbb E[  (V_{j}'\psi_0) ^2 L_{-j,\ell} L_{-j,\ell'} |x  ] = O_p(n\alpha^{-1/2})$,
							\item\label{lem: mse: tmp: 2} $ \sum_{j\neq k} \mathbb E[ (V_j'\psi_0)(V_k'\psi_0)L_{-j,\ell}L_{-k,\ell'} |x  ]  = O_p(\alpha^{-1})$.
						\end{enumerate*} 
						
						\textbf{Proof of \ref{lem: mse: tmp: 1}:} Let $\mathtt{h}_{21,\ell\ell'} =\sum_j\sum_{i \neq j} [\mathcal P_{n\alpha}]_{ij}[\mathcal P_{n\alpha}]_{ji} \mathbb E[ \check{m}_{2i,0}^2  g_{0i,\ell}g_{0i,\ell'} |x  ] $ and $\mathtt{h}_{22,\ell\ell'} = \sum_j\sum_{i\neq j}\sum_{s\neq i , j}$ $[\mathcal P_{n\alpha}]_{ij}[\mathcal P_{n\alpha}]_{js} \mathbb E[ \check{m}_{2i,0}  g_{0i,\ell}|x]\mathbb E[ \check{m}_{2s,0}g_{0s,\ell' } |x  ]$. Then, we have  \begin{equation}
							\sum_{j} \mathbb E[  (V_{j}'\psi_0) ^2 L_{-j,\ell} L_{-j,\ell'} |x  ]
							=\sigma_{{\psi_0}} ^2   (\mathtt{h}_{21,\ell\ell'} + \mathtt{h}_{22,\ell\ell'}). \label{eq: mse: 1}
						\end{equation}  
						Because of the boundedness of $\mathbb E[\Vert g_{0i} \Vert^2 |x]$ and $\check m_{2i,0}$, the term $\mathtt{h}_{21,\ell\ell'} $ is bounded as follows. \begin{equation}
							\mathtt{h}_{21,\ell\ell'}  \leq O_p \left( \sum_{j}(\sum_{i\neq j} [\mathcal P_{n\alpha}]_{ji}^2 )^{1/2}(\sum_{i\neq j}  [\mathcal P_{n\alpha}]_{ij} ^2) ^{1/2} \right)  \leq O_p (\sum_j  [\mathcal P_{n\alpha} ^2]_{jj} )  ,\label{eq: mse: 2}
						\end{equation}
						where the inequality follows from H\"older's inequality.
								Moreover,  $\mathtt{h}_{22,\ell\ell'}$ in \eqref{eq: mse: 1} is given as follows.
								\begin{equation}
									\mathtt{h}_{22,\ell\ell' } = \begin{cases}
										\sum_{j} \sum_{i \neq j} \sum_{s \neq i , j} \mathbb E[\check{m}_{2i,0} |x]  \mathbb E[\check{m}_{2s,0} |x]f_{i\ell}f_{s\ell'}[\mathcal P_{n\alpha}]_{ij}[\mathcal P_{n\alpha}]_{js} &\text{if }1\leq \ell, \ell' \leq d_e,\\
										\sum_{j} \sum_{i\neq j} \sum_{s \neq i,j}	\mathbb E[\check{m}_{2i,0} |x] \mathbb E[  \check{m}_{2s,0}  v_{s\ell'-d_e } |x   ]  f_{i\ell} [\mathcal P_{n\alpha}]_{ij}[\mathcal P_{n\alpha}]_{js} &\text{if }1\leq \ell\leq d_e<\ell'\leq 2d_e,\\
										\sum_{j} \sum_{i \neq j} \sum_{s \neq i , j}	\mathbb E[  \check{m}_{2i,0}   v_{i\ell-d_e }   |x   ]\mathbb E[  \check{m}_{2s,0}    v_{s\ell' -d_e} |x   ] [\mathcal P_{n\alpha}]_{ij}[\mathcal P_{n\alpha}]_{js} &\text{if }2d_e \leq \ell,\ell' \leq n 
									\end{cases} \nonumber
								\end{equation} 
								Note that  for any sequence $a_j$, $\tilde{a}_j$ and $b_{ij}$ satisfying $\max |a_j|, \max|\tilde{a}_j| \leq c_a$,  $\sum_j (\sum_{i\neq j}a_i b_{ij})(  \sum_{i\neq j}\tilde{a}_i {b}_{ij}) \leq   c_a ^2n\sum_j \sum_{i}b_{ij} ^2$ and $\sum_j \sum_{i\neq j} a_i\tilde{a}_i b_{ij} ^2 \leq c_a^2 \sum_j \sum_i b_{ij} ^2  $. Because $\max_{1\leq j \leq n}\max_{1\leq \ell\leq d_e } 	\mathbb E[  |\check{m}_{2j,0}   g_{0j,\ell}|   |x   ]|  \leq \lambda_{\max}(\mathbb E[\check{{m}}_{2j,0}^2 g_{0j} g_{0j} ' |x ])  = \widetilde{\mathtt{c}}_g ^{1/2}$, we have\begin{align}
									\mathtt{h}_{22,\ell\ell' }  \leq \widetilde{ \mathtt{c}}_g  ( n\sum_{j}   \sum_{i  }[\mathcal P_{n\alpha}]_{ij}^2 + \sum_{i,j} [\mathcal P_{n\alpha}]_{ij} ^2   )   =   O_p(n\alpha^{-1/2}).\label{eq: mse: 6} 
								\end{align}  
									By combining with the bound of $\mathtt{h}_{21,\ell\ell'}$, \eqref{eq: mse: 1} is bounded above by $O_p(n\alpha^{-1/2})$.
									
									\textbf{Proof of \ref{lem: mse: tmp: 2}:} Let $L_{-ij,\ell\ell'}$ being defined without $i$th and $j$th elements. Because $\mathbb E[V_j |x] = 0$, \begin{equation*}
										\sum_j\sum_{s\neq j} \mathbb E[V_{j}'\psi_0V_{s}'\psi_0 L_{-js,\ell}L_{-sj,\ell'} |x ] = 0   \quad\text{and}\quad\sum_j\sum_{s\neq j}  \mathbb E[ V_{j}'\psi_0V_{s}'\psi_0  L_{-sj,\ell'} (L_{-j,\ell}- L_{-js,\ell}) |x] = 0.
									\end{equation*} From the above, we find that  \begin{align*}
										&\sum_j\sum_{s\neq j} \mathbb E[  V_{j}'\psi_0V_s'\psi_0 L_{-j,\ell}L_{-s,\ell'} |x  ]\\
										&	=\begin{cases}
											\sum_j\sum_{s\neq j}  f_{j,\ell} f_{s,\ell'} (\mathbb E[\check{m}_{2j,0} V_j '\psi_0 |x])^2[\mathcal P_{n\alpha}]_{jj}[\mathcal P_{n\alpha}]_{ss} &\text{if }1\leq \ell,\ell' \leq d_e, \\
											\sum_j\sum_{s\neq j} f_{j,\ell} \mathbb E[\check{m}_{2j,0} V_j '\psi_0 |x] \mathbb E[V_s'\psi_0 v_{s\ell'-d_e}\check{m}_{2s,0}|x] [\mathcal P_{n\alpha}]_{jj}[\mathcal P_{n\alpha}]_{ss}  &\text{if }1\leq \ell\leq  d_e<\ell'\leq 2d_e ,\\
											\sum_j\sum_{s\neq j}     \mathbb E[V_j'\psi_0 v_{j\ell- d_e}\check{m}_{2j,0}|x] \mathbb E[V_j'\psi_0 v_{j\ell ' -d_e}\check{m}_{2j,0}|x] [\mathcal P_{n\alpha}]_{jj}[\mathcal P_{n\alpha}]_{ss} &\text{if }2d_e<\ell,\ell'\leq n.
										\end{cases}
									\end{align*}
									Because of the boundednesses of $\check{m}_{2j,0}$  and $\mathbb E[\Vert g_i\Vert ^2 |x]$ and the fact $[\mathcal P_{n\alpha}]_{jj} \geq 0$, the above is bounded above by $O_p(\sum_{j} \sum_{s\neq j}[\mathcal P_{n\alpha}]_{jj}[\mathcal P_{n\alpha}]_{ss} )  = O_p(\alpha^{-1})$, from which the desired result is obtained.  
									
									\ref{lem.mse.2.1}: Let $\gamma_{i,\psi_0}=[\mathcal T_n \pi_{\psi_0}]_i$ and $\mathcal Q_{n\alpha}=\mathcal P_{n\alpha} -\mathcal I$. Due to Assumption~\ref{ass3:finite},  \begin{align*}
										&\mathbb E[(\mathcal T_n \pi_{\psi_0}) ' (\mathcal P_{n\alpha} - \mathcal I)\check{\mathfrak{M}}_{2,0}\mathfrak{g}_{0\ell} \mathfrak{g}_{0\ell'} ' \check{\mathfrak{M}}_{2,0}\mathcal P_{n\alpha} \mathfrak{v}_{\psi_0} |x] \nonumber\\
										&= \sum_{i,j} \gamma_{i,\psi_0} [\mathcal Q_{n\alpha}]_{ij} [\mathcal P_{n\alpha}]_{jj}\mathtt{t}_{1j,\ell\ell'}+ 2 \sum_{i,j}\sum_{k \neq j}  \gamma_{i,\psi_0}  [\mathcal Q_{n\alpha}]_{ij}[\mathcal P_{n\alpha}]_{jk} \mathtt{t}_{2j,\ell}\mathtt{t}_{3k,\ell'}, 
									\end{align*} 
									where $\mathtt{t}_{1 j ,\ell\ell'} = \mathbb E[\check{m}_{2j,0} ^2 g_{0j,\ell} g_{0j,\ell'} V_{j} ' \psi_0 |x  ]$, $\mathtt{t}_{2 j ,\ell} =\mathbb E[\check{m}_{2j,0} g_{0j,\ell} V_j'\psi_0 |x] $ and $\mathtt{t}_{3 j ,\ell'} = \mathbb E[\check{m}_{2j,0} g_{0j,\ell'} |x]$. The quantities are bounded for all $\ell$ and $\ell'$ under the employed assumptions. Hence, we have  \begin{align*}
										\sum_{i,j}  \gamma_{i,\psi_0} [\mathcal Q_{n\alpha}]_{ij} [\mathcal P_{n\alpha}]_{jj}\mathtt{t}_{1j,\ell\ell'}  &\leq (\sum_{j} [\mathcal P_{n\alpha}]_{jj}^2 \mathtt{t}_{1j,\ell\ell'}^2 ) ^{1/2}(\sum_{j}\sum_{i,\ell}\gamma_{i,\psi_0}\gamma_{\ell,\psi_0} [\mathcal Q_{n\alpha}]_{ij}[\mathcal Q_{n\alpha}]_{\ell j} )^{1/2} \nonumber\\
										& \leq O_p(\alpha^{-1/4}) (  \sum_{i,\ell}\gamma_{i,\psi_0}\gamma_{\ell,\psi_0} [\mathcal Q_{n\alpha}^2]_{i\ell}  )^{1/2} = O_p(\alpha^{-1/4}n^{1/2}\Delta_2 ^{1/2}),
									\end{align*}
									where the last bound follows from Lemma \ref{lem.mse}.\ref{lem.mse.1} and   the definition of $\Delta_2$. Similarly, we have \begin{align*}
										\sum_{i,j}\sum_{k\neq j} \gamma_{i,\psi_0} [\mathcal Q_{n\alpha}]_{ij} [\mathcal P_{n\alpha}]_{jk} \mathtt{t}_{2j,\ell} \mathtt{t}_{3k,\ell'}&\leq (\sum_{j} (\sum_{k\neq j} [\mathcal P_{n\alpha}]_{jk} \mathtt{t}_{2j,\ell} \mathtt{t}_{3k,\ell'})^2 )^{1/2} (\sum_{j} (\sum_i \gamma_{i,\psi_0} [\mathcal Q_{n\alpha}]_{ij})^2 )^{1/2} \nonumber\\
										&\leq (\sum_j \sum_{k\neq j} [\mathcal P_{n\alpha}]_{jk} ^2 \mathtt{t}_{2j,\ell}^2 \sum_{k\neq j} \mathtt{t}_{3k,\ell'} ^2   )^{1/2} (n \Delta_2 )^{1/2} \nonumber\\
										&\leq (O(n) \sum_{j}\sum_k [\mathcal P_{n\alpha}]_{jk}^2 )^{1/2}(n \Delta_2 )^{1/2} \leq O_p(\alpha^{-1/4} n\Delta_2 ^{1/2}).
									\end{align*} 
									
									\ref{lem.mse.2.2}: The first part follows from the facts that $\mathbb E[\mathfrak{m}_{1,0}\mathfrak{m}_{1,0}' |\mathfrak{G}] = \text{diag}(\mathbb E[ m_{11,0} ^2  |\mathfrak{G}   ], \ldots, \mathbb E[ m_{1n,0} ^2  |\mathfrak{G}   ])$ and $m_{1i,0} ^2$ is  bounded for all $i$. The second is deduced from  the independence across $i$; specifically, \begin{equation*}
										\mathbb E[\mathfrak{g}_{0\ell}' \mathfrak{m}_{1,0}\mathfrak{m}_{1,0}'\mathcal P_{n\alpha}V e_\ell |x] 
										= \sum_{i}\mathbb E [\mathbb E[ m_{1i,0} ^2 |\mathfrak{G} ] V_{i\ell} g_{0i,\ell} |x][\mathcal P_{n\alpha}]_{ii} = O_p (\alpha ^{-1/2}).
									\end{equation*} 
									To obtain the last part, let $\mathtt{t}_{4,\ell}$ be the $n\times 1$ vector consisting of $\{E[g_{0i,\ell} \mathbb E[m_{1i,0}^2 |\mathfrak{G}] |x]\}_{i=1} ^n$. Then,  \begin{align*}
										&\mathbb E[\mathfrak{g}_{0\ell}' \mathfrak{m}_{1,0}\mathfrak{m}_{1,0}'(\mathcal P_{n\alpha} - \mathcal I)\mathcal T_n \pi_{\ell,0} |x]   = \sum_{i,j,k} \mathbb E[g_{0i,\ell} \mathbb E[m_{1i,0} m_{1j,0} |\mathfrak{G}] |x] [\mathcal Q_{n\alpha}]_{jk}\langle Z_{k}, \pi_{\ell,0} \rangle \nonumber\\
										&= \mathtt{t}_{4,\ell} ' \mathcal Q_{n\alpha} \mathcal T_n \pi_{\ell,0} \leq (  \mathtt{t}_{4,\ell} ' \mathtt{t}_{4,\ell}  ) ^{1/2} ( (\mathcal T_n \pi_{\ell,0})' (\mathcal P_{n\alpha}-\mathcal I)^2 \mathcal T_n\pi_{\ell,0})^{1/2}  = O_p(n\Delta_2 ^{1/2}).
									\end{align*}

									\ref{lem.mse.3}: The first part follows from the Markov's inequality and the fact that $\mathbb E[\xi ' \mathcal P_{n\alpha} {v}_{i\ell}|x] = 0$ and   \begin{equation*}
										\mathbb E[(\xi'\mathcal P_{n\alpha} Ve_\ell)(\xi'\mathcal P_{n\alpha} Ve_{\ell'})|x] = \mathbb E[  e_{\ell} 'V' \mathcal P_{n\alpha} \mathbb E[\xi \xi' |x] \mathcal P_{n\alpha} Ve_{\ell'} | x   ] \leq c_\xi\text{tr}(\mathcal P_{n\alpha}\mathbb E[Ve_\ell e_{\ell'}'V'    |x  ]  \mathcal P_{n\alpha} ) = O_p(\alpha ^{-1/2}),
									\end{equation*}
									for $\ell,\ell' = 1,\ldots, d_e$. Analogously,  the second part is obtained from  $\mathbb E[\xi' (\mathcal P_{n\alpha} - \mathcal I)\mathcal T_n \pi_{\ell,0} |x] = 0$ and \begin{equation*}
										\mathbb E[   n^{-1}(\xi' (\mathcal P_{n\alpha} - \mathcal I)\mathcal T_n \pi_{\ell,0})^2 |x   ] = n^{-1}(\mathcal T_n \pi_{\ell,0}) ' (\mathcal P_{n\alpha} - \mathcal I) \mathbb E[\xi \xi' |x] (\mathcal P_{n\alpha} - \mathcal I) \mathcal T_n \pi_{\ell,0} \leq c_\xi \Delta_2,
									\end{equation*}
									since $\mathbb E[\xi\xi'|x] =\text{diag}(\mathbb E[\xi_1 ^2 |x ],\ldots, \mathbb E[\xi_n ^2 |x ])$ which is bounded above by $c_\xi\mathcal I_n$.

									\ref{lem.mse.4}: Note that $\mathbb E[\sum_{i,j} \xi_ i v_{i\ell} [\mathcal P_{n\alpha}]_{ij} v_{j\ell ' } | x] = 0$ for all $\ell$ and $\ell'$. Let $\widetilde{L}_{-i,\ell'} = \sum_{j \neq i} [\mathcal P_{n\alpha}]_{ij} v_{j\ell ' }$ and decompose $\sum_{i,j} \xi_ i v_{i\ell} [\mathcal P_{n\alpha}]_{ij} v_{j\ell ' }$ into $\sum_{i} \xi_i v_{i\ell} v_{i\ell'} [\mathcal P_{n\alpha}]_{ii}$ and $\sum_{i} \xi_i v_{i\ell} \widetilde{L}_{-i,\ell'}$. The following holds a.s.\begin{equation*}
										\mathbb E[ (\sum_{i} \xi_i v_{i\ell} v_{i\ell'} [\mathcal P_{n\alpha}]_{ii})^2 |x  ] = \sum_{i} \mathbb E[  \xi_i ^2 v_{i\ell} ^2 v_{i\ell'} ^2|x  ] [\mathcal P_{n\alpha}]_{ii} ^2 \leq c_{\xi} \mathbb E[\Vert V_i\Vert ^4 |x] \sum_{i} [\mathcal P_{n\alpha}]_{ii} ^2  = O_p(\alpha^{-1/2}).
									\end{equation*} 
									Since $\sum_{i}\mathbb E[\xi_i ^2 v_{i\ell} ^2 \widetilde{L}_{-i,\ell'} ^2 |x ]  \leq  c_\xi \sum_{i} \mathbb E[v_{i\ell} ^2|x]\sum_{j \neq i}\mathbb E[v_{j\ell'} ^2|x][\mathcal P_{n\alpha}]_{ji} ^2= O_p(\sum_{i} [\mathcal P_{n\alpha}^2 ]_{ii})  $ and $\mathbb E[\xi_i v_{i\ell} ^2|x] = 0$,\begin{align*}
										\mathbb E[ (\sum_{i} \xi_i v_{i\ell} \widetilde{L}_{-i,\ell'})^2 |x  ] &= \mathbb E[\sum_{i}\sum_{j\neq i} \xi_i \xi_j v_{i\ell}v_{j\ell}\widetilde{L}_{-i,\ell'}\widetilde{L}_{-j,-\ell'}|x]  + O_p(\alpha ^{-1/2})\nonumber\\
										&= \sum_{i} \sum_{j\neq i} \mathbb E[\xi_i v_{i\ell} ^2 [\mathcal P_{n\alpha}]_{ii}|x] \mathbb E[ \xi_j v_{j\ell} ^2 [\mathcal P_{n\alpha}]_{jj}|x] + O_p(\alpha^{-1/2}) = O_p(\alpha^{-1/2}).
									\end{align*} 
									Hence, the desired result is given by applying the Markov's inequality.
									
									\ref{lem.mse.5} is obtained from $\mathbb E[m_{1i,0}|\mathfrak{G}] = 0$ and the law of iterated expectations.
								\end{proof}
								\textbf{Proof of Proposition~\ref{prop: mse}} Let $\overline\theta$ be the solution to the following linearization problem: \begin{equation}
									\mathcal M_n(\theta_0 , g_i) + \Gamma_{1n} (\theta_0, g_i) (\overline\theta - \theta_0) + \Gamma_{2n} (\theta_0, g_i) (\hat{\pi}_n - \pi_n) = 0. \label{thetaoverline: def}
								\end{equation}  Note that  $ \mathcal S_n ^{\prime}(\overline{\theta} - \theta_0) =-\widehat{\mathtt{H}}^{-1}\hat{\mathtt{h}} = -({\mathtt{H}} + \mathtt{T}_1^{H} + \mathtt{T}_2 ^{H} )^{-1} (\mathtt{h} + \mathtt{T} _1 ^h +\mathtt{T} _2 ^h + \mathtt{Z}_{h}),$
								where $\widetilde{\mathtt{H}} = n^{-1}\sum_{i=1} ^n\mathbb E[\dot{m}_{2i,0}   g_{0i} g_{0i}'|x]$, $ \mathtt{Z}_h =  n^{-1}  \sum_{i=1} ^n   \psi_0' (\widehat{\Pi}_{n,\alpha} - \Pi_n)Z_i (\dot{m}_{2i,0} - \check{m}_{2i,0}+ \ddot{m}_{2i,0} ) g_{0i} $  \begin{align*}
									&	  {\mathtt{H}} = n^{-1}\sum_{i=1} ^n\check{m}_{2i,0}   g_{0i} g_{0i}' ,\quad  \mathtt{T}_1^{H} =n^{-1}\sum_{i=1} ^n (\dot{m}_{2i,0} - \check{m}_{2i,0}  )g_{0i} g_{0i}',\quad
									\mathtt{T}_2^{H} =n^{-1}\sum_{i=1} ^n \ddot{m}_{2j,0} g_{0i} g_{0i}' ,\quad\\
									&	   \mathtt{h} = {n^{-1}\sum_{i=1} ^n   \psi_0 '(\widehat{\Pi}_{n,\alpha} - \Pi_n) Z_i  \check{m}_{2i,0} g_{0i}  } ,\quad \mathtt{T}_1 ^h= n^{-1}\sum_{i=1} ^n m_{1i,0} g_{0i}  , \quad\mathtt{T}_2  ^h =  {n^{-1}   \widetilde{\mathcal S}_{n} ^{-1} (\widehat{\Pi}_{n,\alpha} - \Pi_n) \sum_{i=1} ^n  {m}_{1i,0}  Z_i   }  .
								\end{align*}

								For  $\ell=1,\ldots, d_e$,  the $\ell$th row of $\mathtt{Z}_h$ is equal to  $
								n^{-1} \xi' (\mathcal P_{n\alpha} - \mathcal I)\mathcal T_n  \pi_{\psi_0}   + n^{-1}  \xi'  \mathcal P_{n\alpha} \mathfrak{v}_{\psi_0}  ,$
								with $\xi$ being the $n \times 1$ vector whose $i$th element is given by  $  (\dot{m}_{2i,0} - \check{m}_{2i,0}+ \ddot{m}_{2i,0} ) {g}_{0i,\ell}$. Under Assumption~\ref{ass9:finite}, $\xi$ satisfies the conditions in Lemma \ref{lem.mse}.\ref{lem.mse.3}. Therefore, by  Lemma \ref{lem.mse}.\ref{lem.mse.3}, the first $d_e$ rows of $\mathtt{Z}_h $ is $o_p(\varrho_{n\alpha})$, where  $\varrho_{n\alpha}= \Delta_2 + O_p(n^{-1}\alpha^{-1/2})$. For the remaining rows, the same result is obtained by applying  Lemma \ref{lem.mse}.\ref{lem.mse.4} and the second part of Lemma \ref{lem.mse}.\ref{lem.mse.3}. In addition,  if $\ell=1,\ldots,d_e$, the $\ell$th row of  $\mathtt{T}_2 ^h$ is equal to  $\mathtt{z}_{1\ell} ^h+\mathtt{t}_\ell ^h$, while it is equal to $\mathtt{z}_{2\ell} ^h + \mathtt{z}_{3\ell} ^h$ if $ \ell =d_e+1,\ldots, 2d_e$, where \begin{equation*}
									\mathtt{z}_{1\ell} ^h =  {\mathfrak{m}}_{1,0} '    (\mathcal P_{n\alpha} - \mathcal I)\mathcal T_n \pi_{\ell,0}/n ,\quad \mathtt{z}_{2\ell} ^h=     {\mathfrak{m}}_{1,0} '    (\mathcal P_{n\alpha} - \mathcal I)\mathcal T_n \pi_{\ell,n}/n, \quad \mathtt{z}_{3\ell} ^h =    \mathfrak{m}_{1,0} ' \mathcal P_{n\alpha} Ve_\ell/n ,
								\end{equation*}  
								and $\mathtt{t}_{\ell} ^h =   \mathfrak{m}_{1,0} ' \mathcal P_{n\alpha} V\widetilde{\Lambda}e_\ell /(\sqrt{n}\mu_{\ell n} )$.
								From Lemma \ref{lem.mse}.\ref{lem.mse.3},  $\mathtt{z}_{1\ell} ^h$, $\mathtt{z}_{2\ell} ^h$ and $\mathtt{z}_{3\ell} ^h$ are  $O_p( (\Delta_2/n) ^{1/2})$, $O_p( (\Delta_2/n) ^{1/2})$ and $O_p( n^{-1}\alpha^{-1/4})$ which are all $o_p(\varrho_{n\alpha})$. Hence, $\mathtt{T}_2 ^h =  \sum_{\ell=1} ^{d_e}\mathtt{t}_\ell ^h e_\ell   + o_p(\varrho_{n\alpha})$. Lastly,  note that $\mathtt{h}$ can be written by $
								n^{-1} G_0 ' \check{\mathfrak{M}}_{2,0} \mathcal T_n (\mathcal K_{n\alpha} ^{-1}\mathcal K_n - \mathcal I) \pi_{\psi_0} + n^{-1} G_0 ' \check{\mathfrak{M}}_{2,0} \mathcal T_n \mathcal K_{n\alpha} ^{-1} \mathcal T_n ^\ast \mathfrak{v}_{\psi_0},
								$ where $G_0 = \sum_{\ell} \mathfrak{g}_{0,\ell} e_{\ell}'$. Then, from Lemma~\ref{lem.mse} and \eqref{eq: mse: 0}, the following holds.\begin{enumerate}[(a)]
									\item\label{lem.res.1} $ \mathbb E[\mathtt{T}_1 ^{h\prime}   \mathtt{T}_1 ^{h}  |x] = O_p(n^{-1})$.  
									\item\label{lem.res.2} $\max_{1\leq\ell\leq d_e} \mathbb E[\mathtt{T}_1 ^{h} \mathtt{t}_\ell ^{h} |x ] = O_p(n^{-3/2}\mu_{m,n}^{-1} \alpha^{-1/2} )$. 
									\item\label{lem.res.3} $ \mathbb E[\mathtt{T}_1 ^{h} \mathtt{h}  ' |x  ]  = 0$ and $\mathbb E[\mathtt{t}_\ell ^h \mathtt{h} |x ] = 0$.  
									\item\label{lem.res.4} $ \sum_{\ell,\ell' = 1} ^{d_e}  \mathbb E[\mathtt{t}_\ell ^h \mathtt{t}_{\ell'} ^h |x] e_\ell e_{\ell ' }  = O_p( n^{-1}\mu_{m,n} ^{-2} \alpha ^{-1/2}  ) = o_p(\varrho_{n\alpha})$.  
									\item\label{lem.res.5} $\Vert \mathtt{T}_{1} ^H\Vert ^2$ and $\Vert \mathtt{T}_2 ^H\Vert^2 $ are all $o_p(\varrho_{n\alpha})$. Moreover, for   $\ell = 1,\ldots, d_e$ and $k=1,2$, $\Vert \mathtt{T}_k ^H\Vert \Vert \mathtt{t}_\ell  ^h\Vert=O_p(n^{-1}\mu_{m,n}^{-1} \alpha ^{-1/4} ) $ and $\Vert \mathtt{T}_k ^H\Vert \Vert \mathtt{T}_1 ^h\Vert =O_p(n^{-1})$ for $k=1,2$.  
									\item\label{lem.res.6} $\mathbb E[\mathtt{h} \mathtt{h} ' \mathtt{H} ^{-1} \mathtt{T}_{k} ^{H} |x  ] = \mathbb E[\mathtt{h} \mathtt{h} ' \mathtt{H} ^{-1} \mathbb E[ \mathtt{T}_{k} ^{H} |\mathfrak{G}] |x  ] = 0$ for $k=1,2$.
								\end{enumerate}
								Let $\hat{\mathtt{A}}(\alpha) = (\mathtt{h} + \mathtt{T}_1 ^h +  \sum_{\ell=1} ^{d_e } \mathtt{t}_\ell ^h e_\ell  )(\mathtt{h} + \mathtt{T}_1 ^h  +  \sum_{\ell=1} ^{d_e } \mathtt{t}_\ell ^h e_\ell  )'  - \sum_{k=1} ^2( \mathtt{h} \mathtt{h} ' \mathtt{H} ^{-1} \mathtt{T}_{k} ^{H} + \mathtt{T}_k ^h \mathtt{H} ^{-1} \mathtt{h}\mathtt{h}').$ Then, from \ref{lem.res.1} to \ref{lem.res.6}, we find that \begin{align}
									\mathbb E[\hat{\mathtt{A}}(\alpha) |x] &= \mathbb E[ \mathtt{h}\mathtt{h}' |x ] +o_p(\varrho_{n\alpha})\nonumber\\
									& =\Delta_1+n^{-2}	\sum_{\ell , \ell' = 1} ^{2d_e} \mathbb E[ \mathfrak{v}_{\psi_0} ' \mathcal P_{n\alpha} \check{\mathfrak{M}}_{2,0} \mathfrak{g}_{0\ell} \mathfrak{g}_{0\ell'} '\check{\mathfrak{M}}_{2,0}  \mathcal P_{n\alpha} \mathfrak{v}_{\psi_0} |x]   e_\ell e_{\ell'} '  +  o_p(\varrho_{n\alpha})   \label{eq: ahat}\\ 
									& =\Delta_1 + O_p(n^{-1}\alpha ^{-1/2}) + o_p(\varrho_{n\alpha}), \label{eq: ahat1}
								\end{align}
								where the second line follows from the definition of $\Delta_1$ and Lemmas \ref{lem.mse}.\ref{lem.mse.2.1} and \ref{lem.mse}.\ref{lem.mse.4}, and the last line is obtained from Lemmas~\ref{lem.mse}.\ref{lem.mse.2}. Moreover, ${\mathtt{H}} = \widetilde{\mathtt{H}} +   o_p(\varrho_{n\alpha}) $ and $\widetilde{\mathtt{H}} = O_p(1)$. Then, by using similar arguments in \cite[Lemma A.1]{donald2001choosing}, it can be shown that $\mathbb E[\Vert \mathcal S_n ' (\overline \theta - \theta_0)\Vert ^2 |x] = \text{tr}(\widetilde{\mathtt{H}}^{-1}\mathbb E[\widehat{\mathtt{A}}(\alpha)|x] \widetilde{\mathtt{H}}^{-1})  + o_p(\rho_{n\alpha})$ and the desired result is deduced from \eqref{eq: delta 1} and \eqref{eq: ahat1}.

								\textbf{\violet{Discussion on \eqref{eq: mse: tmp}}}
								
								\label{r4p6}\violet{It is \eqref{eq: ahat} that determines the convergence rate of the conditional MSE. The criterion for choosing the regularization parameter in Appendix~\ref{stepwise} is designed to utilize this fact and the conditional MSE estimators available in the literature. To see this in detail, we first note that  because of \eqref{eq: delta 1},  $\Delta_1$ in \eqref{eq: ahat}  bounded above by $\widetilde{\mathtt{c}}_g \Delta_2$, where $\widetilde{\mathtt{c}}_g = \lambda_{\max}(\mathbb E[\dot{m}_{2i,0} ^2 g_{0i} g_{0i}'])$. Moreover, the proof of Lemma~\ref{lem.mse}.\ref{lem.mse.2}, in particular \eqref{eq: mse: 1} and \eqref{eq: mse: 6}, tells us that the second term in \eqref{eq: ahat} is bounded above by $  \widetilde{\mathtt{c}}_{g} \sigma_{  \psi_0} ^2 \sum_{i} [\mathcal P_{n\alpha} ^2 ]_{ii} / n  $.  Therefore, if we ignore $o_p(\varrho_{n\alpha})$ terms, \eqref{eq: ahat} is bounded above by   $$\widetilde{\mathtt{c}}_g\left(\psi_0 ' (\mathcal T_n \pi_n) ' (\mathcal P_{n\alpha} - \mathcal I) ^2 \mathcal T_n \pi_n \psi_0 /n  +  \sigma_{  \psi_0} ^2 \text{tr} (\mathcal P_{n\alpha} ^2)/n \right).$$ The term in the parenthesis is similar to $R_\nu (\alpha)$ in \citet[p.\ 389]{Carrasco2012} if $\nu$ in their notation is set to $\psi_0$. The constant $\widetilde{\mathtt{c}}_g$ and the vector $\psi_0$ neither  affect   the convergence rate nor are available in practice. Therefore, in practical estimation, they need to be replaced by reasonable candidates. The criterion in \eqref{eq: mse: tmp} suggests using an arbitrary $h \in \mathbb R^{d_e}$ and $\mathtt{c}_g$ in place of $\psi_0$ and $\widetilde{\mathtt{c}}_g$ respectively, which are readily available in practice.     }

\bibliography{lit}   

\begin{thebibliography}{}

\bibitem[\protect\citeauthoryear{Andrews}{Andrews}{1987}]{andrews1987consistency}
Andrews, D.~W. (1987).
\newblock Consistency in nonlinear econometric models: {A} generic uniform law
  of large numbers.
\newblock {\em Econometrica: Journal of the Econometric Society\/}, 1465--1471.

\bibitem[\protect\citeauthoryear{Andrews and Cheng}{Andrews and
  Cheng}{2014}]{Andrews2013}
Andrews, D.~W. and X.~Cheng (2014).
\newblock {GMM} estimation and uniform subvector inference with possible
  identification failure.
\newblock {\em Econometric Theory\/}~{\em 30\/}(2), 287--333.

\bibitem[\protect\citeauthoryear{Angrist and Krueger}{Angrist and
  Krueger}{1991}]{Angrist1991}
Angrist, J.~D. and A.~B. Krueger (1991).
\newblock Does compulsory school attendance affect schooling and earnings?
\newblock {\em The Quarterly Journal of Economics\/}~{\em 106\/}(4), 979--1014.

\bibitem[\protect\citeauthoryear{Antoine and Lavergne}{Antoine and
  Lavergne}{2014}]{Antoine2014}
Antoine, B. and P.~Lavergne (2014).
\newblock Conditional moment models under semi-strong identification.
\newblock {\em Journal of Econometrics\/}~{\em 182\/}(1), 59--69.

\bibitem[\protect\citeauthoryear{Antoine and Renault}{Antoine and
  Renault}{2009}]{AR2009}
Antoine, B. and E.~Renault (2009).
\newblock Efficient {GMM} with nearly-weak instruments.
\newblock {\em The Econometrics Journal\/}~{\em 12\/}(s1), S135--S171.

\bibitem[\protect\citeauthoryear{Antoine and Renault}{Antoine and
  Renault}{2012}]{ANTOINE2012350}
Antoine, B. and E.~Renault (2012).
\newblock Efficient minimum distance estimation with multiple rates of
  convergence.
\newblock {\em Journal of Econometrics\/}~{\em 170\/}(2), 350--367.
\newblock Thirtieth Anniversary of Generalized Method of Moments.

\bibitem[\protect\citeauthoryear{Baker}{Baker}{1973}]{baker1973joint}
Baker, C.~R. (1973).
\newblock Joint measures and cross-covariance operators.
\newblock {\em Transactions of the American Mathematical Society\/}~{\em 186},
  273--289.

\bibitem[\protect\citeauthoryear{Bastian and Michelmore}{Bastian and
  Michelmore}{2018}]{Bastian2018}
Bastian, J. and K.~Michelmore (2018).
\newblock The long-term impact of the earned income tax credit on children's
  education and employment outcomes.
\newblock {\em Journal of Labor Economics\/}~{\em 36\/}(4), 1127--1163.

\bibitem[\protect\citeauthoryear{Bekker}{Bekker}{1994}]{Bekker1994}
Bekker, P.~A. (1994).
\newblock Alternative approximations to the distributions of instrumental
  variable estimators.
\newblock {\em Econometrica\/}~{\em 62\/}(3), 657--681.

\bibitem[\protect\citeauthoryear{Belloni, Chen, Chernozhukov, and
  Hansen}{Belloni et~al.}{2012}]{Belloni}
Belloni, A., D.~Chen, V.~Chernozhukov, and C.~Hansen (2012).
\newblock Sparse models and methods for optimal instruments with an application
  to eminent domain.
\newblock {\em Econometrica\/}~{\em 80\/}(6), 2369--2429.

\bibitem[\protect\citeauthoryear{Benatia, Carrasco, and Florens}{Benatia
  et~al.}{2017}]{Benatia2017}
Benatia, D., M.~Carrasco, and J.-P. Florens (2017).
\newblock Functional linear regression with functional response.
\newblock {\em Journal of Econometrics\/}~{\em 201\/}(2), 269--291.

\bibitem[\protect\citeauthoryear{Blundell and Powell}{Blundell and
  Powell}{2004}]{Blundell_Powell2004}
Blundell, R.~W. and J.~L. Powell (2004).
\newblock Endogeneity in semiparametric binary response models.
\newblock {\em The Review of Economic Studies\/}~{\em 71\/}(3), 655--679.

\bibitem[\protect\citeauthoryear{Bosq}{Bosq}{2000}]{Bosq2000}
Bosq, D. (2000).
\newblock {\em Linear Processes in Function Spaces}.
\newblock Springer New York.

\bibitem[\protect\citeauthoryear{Cameron and Trivedi}{Cameron and
  Trivedi}{2005}]{cameron2005microeconometrics}
Cameron, A.~C. and P.~K. Trivedi (2005).
\newblock {\em Microeconometrics: methods and applications}.
\newblock Cambridge university press.

\bibitem[\protect\citeauthoryear{Caner and Zhang}{Caner and
  Zhang}{2014}]{Caner2014}
Caner, M. and H.~H. Zhang (2014).
\newblock Adaptive elastic net for generalized methods of moments.
\newblock {\em Journal of Business \& Economic Statistics\/}~{\em 32\/}(1),
  30--47.

\bibitem[\protect\citeauthoryear{Carrasco}{Carrasco}{2012}]{Carrasco2012}
Carrasco, M. (2012).
\newblock A regularization approach to the many instruments problem.
\newblock {\em Journal of Econometrics\/}~{\em 170\/}(2), 383--398.

\bibitem[\protect\citeauthoryear{Carrasco, Florens, and Renault}{Carrasco
  et~al.}{2007}]{Carrasco2007}
Carrasco, M., J.-P. Florens, and E.~Renault (2007).
\newblock Linear inverse problems in structural econometrics estimation based
  on spectral decomposition and regularization.
\newblock In J.~Heckman and E.~Leamer (Eds.), {\em Handbook of Econometrics\/}
  (1 ed.), Volume~6B, Chapter~77. Elsevier.

\bibitem[\protect\citeauthoryear{Carrasco and Tchuente}{Carrasco and
  Tchuente}{2015}]{Carrasco2015a}
Carrasco, M. and G.~Tchuente (2015).
\newblock Regularized {LIML} for many instruments.
\newblock {\em Journal of Econometrics\/}~{\em 186\/}(2), 427--442.

\bibitem[\protect\citeauthoryear{Carrasco and Tchuente}{Carrasco and
  Tchuente}{2016}]{Carrasco2015}
Carrasco, M. and G.~Tchuente (2016).
\newblock Efficient estimation with many weak instruments using regularization
  techniques.
\newblock {\em Econometric Reviews\/}~{\em 35\/}(8-10), 1609--1637.

\bibitem[\protect\citeauthoryear{Chao and Swanson}{Chao and
  Swanson}{2005}]{Chao2005}
Chao, J.~C. and N.~R. Swanson (2005).
\newblock Consistent estimation with a large number of weak instruments.
\newblock {\em Econometrica\/}~{\em 73\/}(5), 1673--1692.

\bibitem[\protect\citeauthoryear{Chao, Swanson, Hausman, Newey, and
  Woutersen}{Chao et~al.}{2012}]{Chao_et_al}
Chao, J.~C., N.~R. Swanson, J.~A. Hausman, W.~K. Newey, and T.~Woutersen
  (2012).
\newblock Asymptotic distribution of {JIVE} in a heteroskedastic {IV}
  regression with many instruments.
\newblock {\em Econometric Theory\/}~{\em 28\/}(1), 42--86.

\bibitem[\protect\citeauthoryear{Chen, Linton, and Keilegom}{Chen
  et~al.}{2003}]{Chen2003}
Chen, X., O.~Linton, and I.~V. Keilegom (2003).
\newblock Estimation of semiparametric models when the criterion function is
  not smooth.
\newblock {\em Econometrica\/}~{\em 71\/}(5), 1591--1608.

\bibitem[\protect\citeauthoryear{Chen and White}{Chen and
  White}{1998}]{chen1998central}
Chen, X. and H.~White (1998).
\newblock Central limit and functional central limit theorems for
  hilbert-valued dependent heterogeneous arrays with applications.
\newblock {\em Econometric Theory\/}~{\em 14\/}(2), 260--284.

\bibitem[\protect\citeauthoryear{Conway}{Conway}{2007}]{Conway1994}
Conway, J.~B. (2007).
\newblock {\em A Course in Functional Analysis}.
\newblock Springer, New York.

\bibitem[\protect\citeauthoryear{Dahl and Lochner}{Dahl and
  Lochner}{2012}]{Dahl2012}
Dahl, G.~B. and L.~Lochner (2012).
\newblock The impact of family income on child achievement: Evidence from the
  earned income tax credit.
\newblock {\em American Economic Review\/}~{\em 102\/}(5), 1927--1956.

\bibitem[\protect\citeauthoryear{Donald and Newey}{Donald and
  Newey}{2001}]{donald2001choosing}
Donald, S.~G. and W.~K. Newey (2001).
\newblock Choosing the number of instruments.
\newblock {\em Econometrica\/}~{\em 69\/}(5), 1161--1191.

\bibitem[\protect\citeauthoryear{Dr{\'a}bek and Milota}{Dr{\'a}bek and
  Milota}{2007}]{drabek2007methods}
Dr{\'a}bek, P. and J.~Milota (2007).
\newblock {\em Methods of nonlinear analysis: applications to differential
  equations}.
\newblock Springer Science \& Business Media.

\bibitem[\protect\citeauthoryear{Dufour and Wilde}{Dufour and
  Wilde}{2018}]{Dufour2018}
Dufour, J.-M. and J.~Wilde (2018).
\newblock Weak identification in probit models with endogenous covariates.
\newblock {\em {AStA} Advances in Statistical Analysis\/}~{\em 102\/}(4),
  611--631.

\bibitem[\protect\citeauthoryear{Eissa and Hoynes}{Eissa and
  Hoynes}{2004}]{Eissa2004}
Eissa, N. and H.~W. Hoynes (2004).
\newblock Taxes and the labor market participation of married couples: the
  earned income tax credit.
\newblock {\em Journal of Public Economics\/}~{\em 88\/}(9-10), 1931--1958.

\bibitem[\protect\citeauthoryear{Florens and {Van Bellegem}}{Florens and {Van
  Bellegem}}{2015}]{Florence2015}
Florens, J.-P. and S.~{Van Bellegem} (2015).
\newblock Instrumental variable estimation in functional linear models.
\newblock {\em Journal of Econometrics\/}~{\em 186\/}(2), 465--476.

\bibitem[\protect\citeauthoryear{Frazier, Renault, Zhang, and Zhao}{Frazier
  et~al.}{2020}]{frazier2020weak}
Frazier, D.~T., E.~Renault, L.~Zhang, and X.~Zhao (2020).
\newblock Weak identification in discrete choice models.
\newblock {\em arXiv\/}.

\bibitem[\protect\citeauthoryear{Frijters, Johnston, Shah, and
  Shields}{Frijters et~al.}{2009}]{Frijters2009}
Frijters, P., D.~W. Johnston, M.~Shah, and M.~A. Shields (2009).
\newblock To work or not to work? {C}hild development and maternal labor
  supply.
\newblock {\em American Economic Journal: Applied Economics\/}~{\em 1\/}(3),
  97--110.

\bibitem[\protect\citeauthoryear{Han and Renault}{Han and
  Renault}{2020}]{HR2020}
Han, H. and E.~Renault (2020).
\newblock Identification strength with a large number of moments.
\newblock {\em Econometric Reviews\/}~{\em 39\/}(7), 691--714.

\bibitem[\protect\citeauthoryear{Han}{Han}{2020}]{Han2019}
Han, S. (2020).
\newblock Nonparametric estimation of triangular simultaneous equations models
  under weak identification.
\newblock {\em Quantitative Economics\/}~{\em 11\/}(1), 161--202.

\bibitem[\protect\citeauthoryear{Hansen and Kozbur}{Hansen and
  Kozbur}{2014}]{Hansen2014}
Hansen, C. and D.~Kozbur (2014).
\newblock Instrumental variables estimation with many weak instruments using
  regularized {JIVE}.
\newblock {\em Journal of Econometrics\/}~{\em 182\/}(2), 290--308.

\bibitem[\protect\citeauthoryear{Hausman, Lewis, Menzel, and Newey}{Hausman
  et~al.}{2011}]{Hausman2011}
Hausman, J., R.~Lewis, K.~Menzel, and W.~Newey (2011).
\newblock Properties of the {CUE} estimator and a modification with moments.
\newblock {\em Journal of Econometrics\/}~{\em 165\/}(1), 45--57.

\bibitem[\protect\citeauthoryear{H{\"o}rmann and Kokoszka}{H{\"o}rmann and
  Kokoszka}{2010}]{hormann2010}
H{\"o}rmann, S. and P.~Kokoszka (2010).
\newblock Weakly dependent functional data.
\newblock {\em Annals of Statistics\/}~{\em 38}, 1845--1884.

\bibitem[\protect\citeauthoryear{Horv\'{a}th and Kokoszka}{Horv\'{a}th and
  Kokoszka}{2012}]{HK2012}
Horv\'{a}th, L. and P.~Kokoszka (2012).
\newblock {\em Inference for Functional Data with Applications}.
\newblock Springer-Verlag GmbH.

\bibitem[\protect\citeauthoryear{Kress}{Kress}{1999}]{Kress1999}
Kress, R. (1999).
\newblock {\em Linear Integral Equations}.
\newblock Springer New York.

\bibitem[\protect\citeauthoryear{Li, Poskitt, Windmeijer, and Zhao}{Li
  et~al.}{2022}]{li2019binary}
Li, C., D.~S. Poskitt, F.~Windmeijer, and X.~Zhao (2022).
\newblock Binary outcomes, {OLS}, {2SLS} and {IV} probit.
\newblock {\em Econometric Reviews\/}~{\em 41\/}(8), 859--876.

\bibitem[\protect\citeauthoryear{MacKinnon and White}{MacKinnon and
  White}{1985}]{mackinnon1985}
MacKinnon, J.~G. and H.~White (1985).
\newblock Some heteroskedasticity-consistent covariance matrix estimators with
  improved finite sample properties.
\newblock {\em Journal of Econometrics\/}~{\em 29\/}(3), 305--325.

\bibitem[\protect\citeauthoryear{Magnusson}{Magnusson}{2010}]{Magnusson2010}
Magnusson, L.~M. (2010).
\newblock Inference in limited dependent variable models robust to weak
  identification.
\newblock {\em The Econometrics Journal\/}~{\em 13\/}(3), S56--S79.

\bibitem[\protect\citeauthoryear{Manski}{Manski}{1988}]{Manski1988}
Manski, C.~F. (1988).
\newblock Identification of binary response models.
\newblock {\em Journal of the American Statistical Association\/}~{\em
  83\/}(403), 729--738.

\bibitem[\protect\citeauthoryear{Meyer and Rosenbaum}{Meyer and
  Rosenbaum}{2001}]{meyer2001welfare}
Meyer, B.~D. and D.~T. Rosenbaum (2001).
\newblock Welfare, the earned income tax credit, and the labor supply of single
  mothers.
\newblock {\em The Quarterly Journal of Economics\/}~{\em 116\/}(3),
  1063--1114.

\bibitem[\protect\citeauthoryear{Miguel, Satyanath, and Sergenti}{Miguel
  et~al.}{2004}]{Miguel2004}
Miguel, E., S.~Satyanath, and E.~Sergenti (2004).
\newblock Economic shocks and civil conflict: An instrumental variables
  approach.
\newblock {\em Journal of Political Economy\/}~{\em 112\/}(4), 725--753.

\bibitem[\protect\citeauthoryear{Newey and {McFadden}}{Newey and
  {McFadden}}{1994}]{NEWEY1994}
Newey, W.~K. and D.~{McFadden} (1994).
\newblock Chapter 36 large sample estimation and hypothesis testing.
\newblock Volume~4 of {\em Handbook of Econometrics}, pp.\  2111--2245.
  Elsevier.

\bibitem[\protect\citeauthoryear{Newey and Windmeijer}{Newey and
  Windmeijer}{2009}]{Newey2009a}
Newey, W.~K. and F.~Windmeijer (2009).
\newblock Generalized method of moments with many weak moment conditions.
\newblock {\em Econometrica\/}~{\em 77\/}(3), 687--719.

\bibitem[\protect\citeauthoryear{Norton and Han}{Norton and
  Han}{2008}]{Norton2008}
Norton, E.~C. and E.~Han (2008).
\newblock Genetic information, obesity, and labor market outcomes.
\newblock {\em Health Economics\/}~{\em 17\/}(9), 1089--1104.

\bibitem[\protect\citeauthoryear{Nunn and Qian}{Nunn and Qian}{2014}]{Nunn2014}
Nunn, N. and N.~Qian (2014).
\newblock {US} food aid and civil conflict.
\newblock {\em American Economic Review\/}~{\em 104\/}(6), 1630--1666.

\bibitem[\protect\citeauthoryear{Politis and Romano}{Politis and
  Romano}{1994}]{PR1994}
Politis, D.~N. and J.~P. Romano (1994).
\newblock Limit theorems for weakly dependent {H}ilbert space valued random
  variables with application to the stationary bootstrap.
\newblock {\em Statistica Sinica\/}~{\em 4\/}(2), 461--476.

\bibitem[\protect\citeauthoryear{Rivers and Vuong}{Rivers and
  Vuong}{1988}]{rivers1988limited}
Rivers, D. and Q.~H. Vuong (1988).
\newblock Limited information estimators and exogeneity tests for simultaneous
  probit models.
\newblock {\em Journal of Econometrics\/}~{\em 39\/}(3), 347--366.

\bibitem[\protect\citeauthoryear{Rothe}{Rothe}{2009}]{Rothe2009}
Rothe, C. (2009).
\newblock Semiparametric estimation of binary response models with endogenous
  regressors.
\newblock {\em Journal of Econometrics\/}~{\em 153\/}(1), 51--64.

\bibitem[\protect\citeauthoryear{Seong and Seo}{Seong and
  Seo}{2023}]{seong2021}
Seong, D. and W.-K. Seo (2023).
\newblock Functional instrumental variable regression with an application to
  estimating the impact of immigration on native wages.
\newblock {\em arXiv\/}~(2110.12722).

\bibitem[\protect\citeauthoryear{Staiger and Stock}{Staiger and
  Stock}{1997}]{Staiger1997}
Staiger, D. and J.~H. Stock (1997).
\newblock Instrumental variables regression with weak instruments.
\newblock {\em Econometrica\/}~{\em 65\/}(3), 557--586.

\bibitem[\protect\citeauthoryear{Stock and Yogo}{Stock and Yogo}{2005}]{Stocka}
Stock, J.~H. and M.~Yogo (2005).
\newblock Asymptotic distributions of instrumental variables statistics with
  many instruments.
\newblock In D.~W.~K. Andrews and J.~H. Stock (Eds.), {\em Identification and
  Inference for Econometric Models}, pp.\  109--120. Cambridge University
  Press.

\bibitem[\protect\citeauthoryear{van~der Vaart and Wellner}{van~der Vaart and
  Wellner}{1996}]{Vaart1996}
van~der Vaart, A.~W. and J.~A. Wellner (1996).
\newblock {\em Weak Convergence and Empirical Processes}.
\newblock Springer New York.

\bibitem[\protect\citeauthoryear{Wooldridge}{Wooldridge}{2010}]{Wooldridge2010}
Wooldridge, J.~M. (2010).
\newblock {\em Econometric Analysis of Cross Section and Panel Data}.
\newblock The MIT Press.

\end{thebibliography}
\end{document}